\newtheorem{Proposition}{Proposition}
\newtheorem{Remark}{Remark}
\definecolor{lime}{HTML}{A6CE39}
\DeclareRobustCommand{\orcidicon}{%
	\begin{tikzpicture}
	\draw[lime, fill=lime] (0,0) 
	circle [radius=0.16] 
	node[white] {{\fontfamily{qag}\selectfont \tiny ID}};
	\draw[white, fill=white] (-0.0625,0.095) 
	circle [radius=0.007];
	\end{tikzpicture}
	\hspace{-2mm}
}
\xdef\csname orcid\x\endcsname{\noexpand\href{https://orcid.org/\csname orcidauthor\x\endcsname}{\noexpand\orcidicon}}
\begin{document}

\title{Fractional Time-Delayed differential equations: Applications in Cosmological Studies}
\author{Bayron Micolta-Riascos\orcidA{}}
\email{bayron.micolta@alumnos.ucn.cl}
\affiliation{Departamento de Física, Universidad Católica del Norte, Avda. Angamos 0610, Casilla 1280, Antofagasta, Chile}
\author{Byron Droguett\orcidB{}}
\email{byron.droguett@uantof.cl}
\affiliation{Department of Physics, Universidad de Antofagasta, 1240000 Antofagasta, Chile}
\author{Gisel Mattar Marriaga }
\email{gs618552@dal.ca}
\affiliation{Department of Mathematics and Statistics, Dalhousie University,
Halifax, Nova Scotia, Canada B3H 3J5}
\author{Genly Leon\orcidC{}}
\email{genly.leon@ucn.cl}
\affiliation{Departamento  de  Matem\'aticas,  Universidad  Cat\'olica  del  Norte, Avda. Angamos  0610,  Casilla  1280  Antofagasta,  Chile}
\affiliation{Department of Mathematics, Faculty of Applied Sciences, Durban University of Technology, Durban 4000, South Africa}
\author{Andronikos Paliathanasis\orcidD{}}
\email{anpaliat@phys.uoa.gr}
\affiliation{Departamento  de  Matem\'aticas,  Universidad  Cat\'olica  del  Norte, Avda. Angamos  0610,  Casilla  1280  Antofagasta,  Chile}
\affiliation{Department of Mathematics, Faculty of Applied Sciences, Durban University of Technology, Durban 4000, South Africa}
\affiliation{School for Data Science and Computational Thinking and Department of
Mathematical Sciences, Stellenbosch University, Stellenbosch, 7602, South
Africa}
\author{Luis del Campo\orcidE{}}
\email{lcampo@ucn.cl }
\affiliation{Departamento  de  Matem\'aticas,  Universidad  Cat\'olica  del  Norte, Avda. Angamos  0610,  Casilla  1280  Antofagasta,  Chile}
\author{Yoelsy Leyva\orcidF{}}
\email{yoelsy.leyva@academicos.uta.cl }
\affiliation{Departamento  de  Física, Facultad de Ciencias, Universidad de Tarapac\'a, Casilla 7-D, Arica, Chile}

\begin{abstract}
Fractional differential equations model processes with memory effects, providing a realistic perspective on complex systems. We examine time-delayed differential equations, discussing first-order and fractional Caputo time-delayed differential equations. We derive their characteristic equations and solve them using the Laplace transform. We derive a modified evolution equation for the Hubble parameter incorporating a viscosity term modeled as a function of the delayed Hubble parameter within Eckart’s theory. We extend this equation using the last-step method of fractional calculus, resulting in Caputo's time-delayed fractional differential equation. This equation accounts for the finite response times of cosmic fluids, resulting in a comprehensive model of the Universe's behavior. We then solve this equation analytically. Due to the complexity of the analytical solution, we also provide a numerical representation. Our solution reaches the de Sitter equilibrium point. Additionally, we present some generalizations.
\end{abstract}

\keywords{Fractional calculus; dynamical systems; Caputo time-delayed differential equations; modified gravity.} 

\maketitle

\section{Introduction}
\noindent

Differential equations play a crucial role in modeling cosmic evolution, particularly expansion and large-scale structure \cite{kolb1990early, ellis1999dynamics}. They are essential for understanding galaxy formation, black holes, and other astrophysical phenomena \cite{wainwrightellis1997, wald1984general, frankel2011the}. Additionally, dynamical systems facilitate the study of chaos and stability in an expanding universe, revealing how small perturbations affect large-scale dynamics \cite{wainwrightellis1997}.  

The standard cosmological paradigm assumes large-scale homogeneity and isotropy, simplifying perturbation studies within Friedmann-Lemaître-Robertson-Walker (FLRW) metrics. While useful, this assumption does not necessarily account for early-universe anisotropies or local inhomogeneities. Researchers have explored whether inflation can generate homogeneity and isotropy from more general metrics, including inhomogeneous and anisotropic ones \cite{Goldwirth:1989pr}. Due to the complexity of such approaches, studies increasingly focus on homogeneous but anisotropic cosmologies, which offer valuable analytical insights.  

Beyond the FLRW model, alternative models incorporate anisotropies and inhomogeneities. Bianchi universes permit anisotropic expansion, where different spatial directions evolve at varying rates. These models enhance our understanding of early-universe dynamics, gravitational waves, and Cosmic Microwave Background (CMB) anomalies.  

The family of spatially homogeneous Bianchi cosmologies includes key gravitational models, such as the Mixmaster Universe and isotropic FLRW spacetimes \cite{Ryan:1975jw, Misner:1967uu, Misner:1969hg, PhysRevD.55.7489}. The nine anisotropic Bianchi models, classified by three-dimensional real Lie algebra, define homogeneous hypersurfaces through isometry group actions. Their physical variables depend only on time, reducing Einstein's field equations to a system of ordinary differential equations \cite{Ellis:1968vb, Goliath:1998na}.  

FLRW spacetimes emerge as limiting cases of Bianchi models when anisotropy vanishes. The flat, open, and closed FLRW geometries correspond to Bianchi I, III, and IX spacetimes, respectively \cite{wainwrightellis1997}. While general Bianchi spacetimes feature three scale factors \cite{Ryan:1975jw}, locally rotationally symmetric (LRS) spacetimes introduce an additional isometry, reducing the number of independent scale factors to two. The LRS Bianchi IX spacetime is closely related to the Kantowski-Sachs geometry \cite{Kantowski:1966te}.  

On the other hand, Lemaître-Tolman-Bondi (LTB) universes relax the assumption of homogeneity by introducing variations in density and curvature across cosmic regions. This framework explains cosmic voids and non-uniform matter distributions, refining our approach to cosmological evolution.
LTB models \cite{Bondi:1947fta, Lemaitre:1933gd, Krasiński_1997} represent spherically symmetric dust solutions to Einstein's equations, generalizing FLRW cosmology. LTB metrics with a dust source and a comoving, geodesic four-velocity form a well-known class of exact solutions \cite{Krasiński_1997, Stephani_Kramer_MacCallum_Hoenselaers_Herlt_2003}. Structure formation in LTB models has been explored in \cite{Krasiński_1997, Krasinski:1995zz, Krasinski:2003yp}. Recent studies analyze inhomogeneous dust solutions numerically and as three-dimensional dynamical systems, using an averaged density parameter $\langle \Omega \rangle$, a shear parameter, and a density contrast function to quantify the effects of inhomogeneity \cite{Sussman:2007ku, Sussman:2008wx}. The evolution equations for these averaged variables formally mirror those of FLRW cosmology.  

While many alternative models provide insights into cosmic evolution, FLRW cosmology remains the dominant framework due to its simplicity and effectiveness in describing large-scale phenomena.  

The Standard Model of Cosmology, known as $\Lambda$CDM, consists of $\Lambda$ (Dark Energy), which drives accelerated expansion, and CDM (Cold Dark Matter), an unseen component interacting only through gravity. This model successfully explains the observed late-time acceleration of the Universe, initially indicated by Type Ia supernova (SNIa) observations~\cite{Riess:1998} and later confirmed by CMB measurements~\cite{Planck:2018}. It also accounts for the formation of large-scale cosmic structures.  

Despite its success, $\Lambda$CDM faces several theoretical challenges, such as the cosmological constant problem~\cite{Zeldovich, Weinberg}, the nature of dark matter and dark energy, the origin of accelerated expansion~\cite{Carroll:2000}, and the Hubble tension~\cite{Riess2021HubbleTension, DiValentino:2025sru}.  

While the flatness and horizon problems can be theoretically resolved through inflation, the fundamental cause of inflation remains uncertain~\cite{Guth1981Inflation, Linde1982newInflation}. Various alternative theories attempt to address these issues, including noncommutative theories, quantum cosmology, quantum-deformed phase space models, and noncommutative minisuperspace approaches~\cite{Rasouli:2013sda, Jalalzadeh:2014jea, Rasouli:2014dba, Rasouli:2016syh, Jalalzadeh:2017jdo}, as well as modified Brans-Dicke theory~\cite{Rasouli:2014dxa, Rasouli:2016xdo}.

Traditional mathematical models are often inadequate for describing power-law phenomena, which exhibit frequency-dependent, non-local, and history-dependent characteristics. Fractional calculus provides a mathematical framework to address these challenges by extending differentiation and integration to non-integer orders. Unlike classical derivatives, fractional derivatives consider the complete historical behavior of a system, making them particularly suited for applications where past states influence present dynamics.
This approach has proven effective in modeling systems with frequency-dependent properties, such as viscoelastic materials and electrical circuits. Its broad applicability spans disciplines like quantum physics, engineering, biology, and finance. Researchers have used fractional calculus to investigate complex topics such as quantum fields~\cite{Lim:2006hp, LimEab+2019+237+256}, quantum gravity~\cite{El-nabulsi:2013hsa, El-nabulsi:2013mwa}, black holes~\cite{Vacaru:2010wn, Jalalzadeh:2021gtq}, and cosmology~\cite{Moniz:2020emn, Rasouli:2021lgy, VargasMoniz:2020hve}. Using dynamical system methods combined with observational data testing provides a robust framework for analyzing the physical behavior of cosmological models. This approach has led to the development of cosmological models that exhibit late-time acceleration without requiring dark energy. Key studies include joint analysis using cosmic chronometers (CCs) and Sne Ia data to determine best-fit values for fractional order derivatives~\cite{Garcia-Aspeitia:2022uxz}, improved observational tests in subsequent studies~\cite{Gonzalez:2023who, LeonTorres:2023ehd}, and deduced equations of state for a matter component based on compatibility conditions~\cite{Micolta-Riascos:2023mqo}. 

Researchers have developed fractional versions of traditional Newtonian mechanics and Friedmann-Robertson-Walker cosmology by incorporating fractional derivatives into the equations. Examples include non-local-in-time fractional higher-order Newton's second law of motion~\cite{ELnABULSI201765} and fractional dynamics exhibiting disordered motions. Two primary methods for developing fractional derivative methods have emerged: the last-step modification method, which substitutes original cosmological field equations with fractional field equations tailored for a specific model, as seen in~\cite{Barrientos:2020kfp}, and a more fundamental approach where fractional derivative geometry is established initially, followed by the application of the Fractional Action-like Variational Approach (FALVA)~\cite{El-nabulsi-Torres-2008, Baleanu-Trujillo-2010, Odzijewicz-Malinowska-Torres-2013c}. Recently, fractional cosmology has emerged as a novel explanation for the Universe's accelerated expansion~\cite{LeonTorres:2023ehd, Micolta-Riascos:2023mqo, Rasouli2024, Barrientos2021}, utilizing both the first-step and last-step methods to achieve results consistent with cosmological observations.

On the other hand, viscous cosmology models cosmic fluids by accounting for dissipative effects, incorporating dissipation terms through Eckart's or Israel-Stewart's theories~\cite{DASILVA201937, Tamayo_2022, Cruz2020, Lepe2017}. These terms can be introduced as effective pressure in the energy-momentum tensor, modifying the Friedmann and continuity equations. Viscous cosmology has applications in the early Universe, where it can drive inflation without requiring a scalar field, as well as in late-time cosmology to model the Universe's accelerated expansion~\cite{Maartens1996Dissipation, Brevik2017}. In cosmology, time delay has also been integrated into the field equations to model the finite response time of the gravitational system to perturbations. Cosmic fluids do not adapt instantaneously; they respond to past cumulative processes, offering a more realistic depiction of these systems. These delay effects stem from non-local interactions in fundamental theories of quantum gravity, which incorporate memory effects into the universe's evolution. A delay term in the Friedmann equation has been proposed to model the inflationary epoch without using a scalar field, sidestepping the violation of the strong energy condition and providing a natural conclusion to the inflationary period \cite{Choudhury2012}. Additionally, applying that delayed Friedmann equation for late-time cosmology was examined, demonstrating that the delay is statistically consistent with the Hubble expansion rate and growth data \cite{PalpalLatoc2021}.

Numerous works on time-delayed differential equations (TDDEs) have been done in the literature. Applying summable dichotomies to functional difference equations focuses on bounded and periodic solutions, offering insights into Volterra systems relevant to biological modeling \cite{del2018bounded}. Summable dichotomies ensure that solutions to these equations remain bounded and periodic \cite{del2012bounded}. Studies on delayed difference equations focus on bounded and periodic solutions, which are significant for systems with delays in engineering and biological models. Nearly periodic solutions are crucial for understanding systems that exhibit regular but not necessarily periodic behavior \cite{del2011almost}. Research on weighted exponential trichotomy and the asymptotic behavior of nonlinear systems helps analyze the long-term behavior of solutions \cite{cuevas2010weighted}. The asymptotic expansion for difference equations with infinite delay provides a framework for approximating solutions to complex systems \cite{cuevas2009asymptotic}. The exploration of weighted exponential trichotomy continues in the comprehensive analysis of linear difference equations \cite{vidal2008weighted}.

TDDEs are crucial for modeling systems where the current rate of change depends on past states, representing processes like incubation periods in infectious diseases and population responses to environmental changes. One study highlights the role of TDDEs in modeling biological processes, including population dynamics and disease spread \cite{https://doi.org/10.1155/2018/4584389}. Another study explores chaotic behavior in diabetes mellitus through numerical modeling of the metabolic system \cite{https://doi.org/10.1155/2018/6815190}. Research on oscillation criteria for delay and advanced differential equations expands the theoretical understanding of these systems \cite{https://doi.org/10.1155/2018/8237634}. Exploring the bifurcations and dynamics of the Rb-E2F pathway, incorporating miR449, sheds light on cell cycle regulation \cite{https://doi.org/10.1155/2017/1409865}.

A method for maximum likelihood inference in univariate TDDE models with multiple delays offers a robust approach to parameter estimation \cite{https://doi.org/10.1155/2017/6148934}. Research on time delay in perceptual decision-making provides insights into brain decision-making processes \cite{https://doi.org/10.1155/2017/4391587}. Further studies on coupled p-Laplacian fractional differential equations with nonlinear boundary conditions contribute to the understanding of fractional calculus \cite{https://doi.org/10.1155/2017/8197610}. The fractional Fredholm integrodifferential equation is solved analytically using the fractional residual power series method \cite{https://doi.org/10.1155/2017/4573589}. A hybrid adaptive pinning control method for synchronizing delayed neural networks with mixed uncertain couplings enhances control strategies \cite{https://doi.org/10.1155/2017/4654020}. A numerical study on a time delay multistrain tuberculosis model of fractional order offers insights into disease dynamics and control \cite{https://doi.org/10.1155/2017/1047384}. Exploring extinction and persistence in a novel delay impulsive stochastic infected predator-prey system with jumps provides a comprehensive analysis of stochastic dynamics in ecological systems \cite{https://doi.org/10.1155/2017/1950970}.

Fractional Time-Delayed Differential Equations (FTDDEs) combine fractional calculus and time delays to model systems, accurately capturing historical effects and delayed reactions. They have diverse applications in scientific and engineering fields. For instance, FTDDEs enhance control systems by designing and analyzing controllers that take response delays into account, leading to more stable and efficient strategies. They also describe the behavior of viscoelastic materials, which exhibit viscous and elastic characteristics with memory effects. Furthermore, FTDDEs model populations that respond with delays to environmental changes, which is crucial for understanding population dynamics and predicting trends. Advanced mathematical tools solve FTDDEs. Laplace transforms convert differential equations into algebraic ones, making solutions more straightforward. Mittag-Leffler Functions generalize the exponential function, providing solutions to fractional differential equations. First-order FTDDEs involve first-order derivatives for modeling straightforward dynamics, while Fractional Caputo Derivative FTDDEs use the Caputo derivative to account for memory effects. Higher-order FTDDEs involve higher-order derivatives to model complex systems with multiple interacting components.

The existence and stability of solutions for time-delayed nonlinear fractional differential equations are essential for ensuring that a differential equation's solution behaves predictably over time, particularly in engineering and natural sciences \cite{geremew2024existence}. A class of Langevin time-delay differential equations with general fractional orders can model complex dynamical systems in engineering, where memory effects and time delays significantly influence the system's behavior \cite{huseynov2021class}. Numerical methods for solving fractional delay differential equations are crucial, especially when analytical solutions are difficult to obtain. These include a finite difference approach \cite{moghaddam2013numerical, ncaputo1}, and a computational algorithm \cite{amin2021computational}. 
Optimal control of nonlinear time-delay fractional differential equations using Dickson polynomials aims to find a control policy that optimizes a specific performance criterion, which is crucial in economics, engineering, and management \cite{chen2021optimal}. Stability and stabilization of fractional order time-delay systems ensure that the system does not exhibit unbounded behavior over time, which is crucial for the safety and reliability of engineering systems \cite{lazarevic2011stability, pakzad2013stability}.
Numerical solutions for multi-order fractional differential equations with multiple delays using spectral collocation methods are known for their high accuracy and efficiency in solving differential equations, making them suitable for complex systems in science and engineering \cite{dabiri2018numerical}.

The global Mittag-Leffler synchronization of discrete-time fractional-order neural networks with time delays ensures that different parts of the network function harmoniously, which is critical for the network's overall performance \cite{zhang2022global}. The stability of oscillators with time-delayed and fractional derivatives is crucial for understanding the behavior of oscillatory systems in electronics, mechanics, and biology \cite{liao2016stability, leung2013fractional}. Stability and control of fractional order time-delay systems are also covered extensively \cite{lazarevic2011stability,luo2011complex,pakzad2013stability} and \cite{liao2016stability,hu2016stability,butcher2017stability}.

In this work, we derive an equation based on the Friedmann and continuity equations, incorporating a viscosity term modeled as a function of the delayed Hubble parameter within Eckart's theory. We then apply the last-step method of fractional calculus to extend this equation, resulting in a fractional delayed differential equation. This framework builds upon the analysis conducted by Paliathanasis in \cite{Paliathanasis:2022wwl}. We solve this equation analytically for the Hubble parameter. Due to the complexity of the analytical solution, we also provide a numerical representation. Additionally, we present the analytical solution of the fractional delayed differential equation that includes \( m \) delayed terms, with the delays multiples of a fundamental delay \( T \). Our solution reaches the de Sitter equilibrium point, generalizing the results in the nonfractional case analyzed in \cite{Paliathanasis:2022wwl}.

The paper is organized as follows. Section \ref{sect.1} presents foundational preliminaries, including an investigation of First-Order Time-Delayed Differential equations and Fractional Caputo Derivative Time-Delayed Differential equations of orders less than one or higher. In Section \ref{sect.3}, we explore a cosmological application—time-delayed bulk viscosity—modeled as a first-order retarded differential equation. Section \ref{sect.4} extends this formulation by promoting it to a fractional version, which is solved analytically. We introduce master and fractional differential equations and outline the problem set—Section \ref{sect.2}. The resulting model represents a time-delayed bulk viscosity within the framework of Fractional Cosmology, with further generalizations incorporating multiple delay scenarios. Finally, Section \ref{sect.5} provides concluding Remarks.
Several appendices based on \cite{mathai2008special,kiryakova2010special,yang2021theory,agarwal2020special} are included to ensure the study's self-contained nature: Appendix \ref{app000} covers the Lambert ($W$) Function,   Appendix \ref{app00} addresses Mittag-Leffler functions,   Appendix \ref{app4} outlines the Laplace transform of the time-delayed function, and  Appendix \ref{app2} discusses the Laplace transform of the Caputo derivative. Numerical Considerations and Forward Difference Formulation are mentioned in Appendix
\ref{appFDF}. In Appendix \ref{appF}, we present optimized algorithms to reproduce our results.

Exploring Caputo fractional differential equations with time delay is essential for viscous cosmology. That helps us understand the Universe better by using more accurate models of cosmic evolution. These equations describe processes with memory effects, showing how complex systems behave. Combined with time delay, they account for the delayed response of cosmic fluids to changes, giving a complete model of the Universe's behavior. These tools can help us understand how viscosity affects cosmic evolution in viscous cosmology. By including memory effects and time-delayed responses, researchers can develop models that show the fundamental physics of the Universe, possibly uncovering new insights into its origins, structure, and future.

\section{Preliminaries}\label{sect.1}

\subsection{First-Order Time-Delayed Differential Equation}
Consider the first-order time-delayed homogeneous differential equation given by:
\begin{equation}
y'(t) + a y(t-T) = 0, \quad y(0)=0, 
\end{equation}
where \(y(t)\) is the dependent variable, \(T>0\) is the time delay, and \(a\) is a constant. 

The characteristic equation for this FTDDE can be written as:
\begin{equation}
s + a e^{-sT} = 0. \label{charat1}
\end{equation}
Equation \eqref{charat1} is a powerful tool for analyzing systems with delays. Stability, oscillations, and parameter-driven dynamics can be explored through the roots and their interactions with the delay term \(e^{-sT}\). The solutions of the characteristic equation are represented by 
\begin{equation}
    s={W(-a T )}/{T }, 
\end{equation}
where $W(z)$ is the Lambert function (see Appendix \ref{app000}). The Lambert $W$ function is used in various fields, such as solving transcendental equations involving exponentials and logarithms, analyzing the behavior of specific dynamical systems, calculating the number of spanning trees in a complete graph, and modeling growth processes and delay differential equations.

The characteristic equation \eqref{charat1} helps analyze the dynamics and stability of delay systems.

\begin{enumerate}
    \item \textbf{Roots of the equation}: Solving for \( s \) yields the roots. Complex roots typically signify oscillatory dynamics, while the real part of each root, $\Re(s)$, plays a critical role in stability analysis. A system is stable if $\Re(s) < 0$ and unstable if $\Re(s) > 0$.
    
    \item \textbf{Delay Effects}: The delay term \( e^{-sT} \) significantly influences root locations and can lead to changes in system stability and behavior, such as bifurcations.

    \item \textbf{Parameter Influence}: The parameter \( a \) affects the feedback within the system, influencing the roots and dynamics. Larger values of \( a \) may amplify feedback effects.
\end{enumerate}

Consider now the inhomogeneous equation 
\begin{equation}
y'(t) + a y(t-T) = b, \quad y(0)\neq 0, \quad y(t)=0, \; t<0 \label{equation(4)}
\end{equation}
where $b$ is a constant.

To solve \eqref{equation(4)} we take the Laplace transform of both sides:
\begin{equation}
s Y(s) - y(0) + a Y(s) e^{-sT} = \frac{b}{s},
\end{equation}
where \(Y(s)\) is the Laplace transform of \(y(t)\). Solving for \(Y(s)\), we get:
\begin{align}
    Y(s) & = \frac{y(0)}{s \left(1 + \frac{a  e^{-sT}}{s}\right)} + \frac{b}{s^2 \left(1 + \frac{a  e^{-sT}}{s}\right)}.\label{eq6}
    \end{align}

\begin{Remark}
\label{rem1}
    Let $c>0$ be an arbitrary constant. Then, the condition $0<\left|\frac{a  e^{-sT}}{c}\right|<1$ is satisfied, if any $c\geq|a|>0$. Indeed, in this case we have $|a/c|\leq 1$ and  
    \[\left|\frac{a  e^{-sT}}{c}\right|=\left|\frac{a}{c}\right|\left| e^{-sT}\right|\leq \left| e^{-sT}\right|<1.\]
\end{Remark}

Using Remark \ref{rem1}, from equation \eqref{eq6} we have 
\begin{equation}
\begin{split}
    Y(s) &  =y(0) \sum_{j=0}^{\infty} (-1)^j \frac{a^j  e^{-s jT}}{s^{j+1}} + b \sum_{j=0}^{\infty} (-1)^j \frac{a^j  e^{-s jT}}{s^{j+2}}, \quad  
   0< \left|\frac{a  e^{-sT}}{s}\right|<1 
\end{split}
\end{equation}
where $s\in (|a|,\infty)$.
The solution $y(t)$ is recovered by applying the inverse Laplace transform
\begin{equation}
\begin{split}
       y(t) & =y(0) \sum_{j=0}^{\infty} (-1)^j  a^j\mathcal{L}^{-1}\left[\frac{e^{-s j T}}{s^{j+1}}\right] + b \sum_{j=0}^{\infty} (-1)^j  a^j\mathcal{L}^{-1}\left[\frac{e^{-s j T}}{s^{j+2}}\right]\\
      & =y(0) \sum_{j=0}^{\infty} (-1)^j  a^j \frac{(t-j T )^j \theta (t-j T )}{\Gamma (j+1)} + b \sum_{j=0}^{\infty} (-1)^j  a^j\frac{(t-j T )^{j+1} \theta (t-j T )}{\Gamma (j+2)}, \label{equation(8)}
\end{split}
\end{equation}
where $\theta$ is the Heaviside Theta 
\[\theta(t)= \begin{cases}
    1, \quad t>0\\
    0, \quad t<0
\end{cases}\]
and 
\begin{equation}
    \Gamma (z)=\int_{0}^{\infty}t^{z-1}e^{-t}\,dt, \quad \Re(z)>0    \label{1.1.1.2}
\end{equation}
is the Gamma function. 

\begin{Remark}\label{Rem_2}
    For each $t>0$, the series in the expression \eqref{equation(8)} is a finite sum. To see this, note that $\theta(t-kT)=0$ for all $k>t/T$. Then, 
    \[
    \begin{split}
    y(t) & =\sum_{j=0}^{\lfloor t/T\rfloor} \left(y(0) 
+ \frac{b (t-j T )}{j+1}\right)  (-1)^j  a^j \frac{(t-j T )^j}{\Gamma (j+1)} \theta (t-j T).
\end{split}\]
    Furthermore, if we divide the time domain into intervals of length $T$, for each $t$, there exists an $n\in\mathbb{N}_0$, such that $t\in [nT,(n+1)T)$ and  $\lfloor t/T\rfloor=n$. 
\end{Remark}
\begin{Proposition}\label{Prop_1}
    For each $t>0$, the solution of \eqref{equation(4)} is 
    \[y(t)=\sum_{j=0}^{\lfloor t/T\rfloor} \left(y(0) 
+ \frac{b (t-j T )}{j+1}\right)  (-1)^j  a^j \frac{(t-j T )^j}{\Gamma (j+1)} \theta (t-j T),\]
with   \[y(t)=y(0) 
+  b t, \quad \text{for} \; t\in [0, T).\] 
\end{Proposition}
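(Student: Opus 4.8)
The plan is to start from the closed form \eqref{equation(8)} already obtained by Laplace inversion and to reduce it to the stated single finite sum. The only two operations needed are (i) truncating the two infinite series to a finite range using the support of the Heaviside factor, and (ii) merging the two series into one. Both are essentially algebraic once \eqref{equation(8)} is in hand, so the Proposition is best viewed as a clean repackaging of Remark \ref{Rem_2}.

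First I would record the truncation, exactly as in Remark \ref{Rem_2}: for fixed $t>0$ one has $\theta(t-jT)=0$ whenever $jT>t$, i.e. whenever $j>t/T$, so every term with $j>\lfloor t/T\rfloor$ vanishes and both sums in \eqref{equation(8)} collapse to $\sum_{j=0}^{\lfloor t/T\rfloor}$. The boundary case $t/T\in\mathbb{N}$ needs a one-line check: there the index $j=t/T$ carries a factor $(t-jT)^{j}=0$ for $j\ge 1$, so it contributes nothing and the cutoff at $\lfloor t/T\rfloor$ is correct regardless of the convention for $\theta(0)$.

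Next I would merge the two sums. Using $\Gamma(j+2)=(j+1)\Gamma(j+1)$ I rewrite the second summand as
\[
b(-1)^j a^j\frac{(t-jT)^{j+1}}{\Gamma(j+2)}\theta(t-jT)
=\frac{b(t-jT)}{j+1}\,(-1)^j a^j\frac{(t-jT)^{j}}{\Gamma(j+1)}\theta(t-jT),
\]
which shares the common factor $(-1)^j a^j\frac{(t-jT)^{j}}{\Gamma(j+1)}\theta(t-jT)$ with the first summand. Factoring this out produces the bracketed coefficient $\left(y(0)+\frac{b(t-jT)}{j+1}\right)$ and hence the claimed formula. The boundary interval $t\in[0,T)$ is then immediate: here $\lfloor t/T\rfloor=0$, only $j=0$ survives with $\theta(t)=1$, $t^{0}/\Gamma(1)=1$ and $\frac{b\,t}{1}=bt$, giving $y(t)=y(0)+bt$.

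I expect the main obstacle not to lie in this final assembly but in the step already performed before the statement, namely the legitimacy of inverting the geometric expansion term by term. That expansion requires $0<\left|a\,e^{-sT}/s\right|<1$, valid on $s\in(|a|,\infty)$ by Remark \ref{rem1}, and the term-by-term inverse Laplace transform must be justified there. A convenient way to close this gap rigorously, and to cross-check the result independently of the Laplace method, is the classical method of steps: since $y(t)=0$ for $t<0$, on $[0,T)$ the equation reduces to $y'(t)=b$ with $y(0)$ prescribed, giving $y(t)=y(0)+bt$; then, assuming the formula on $[nT,(n+1)T)$ and integrating $y'(t)=b-a\,y(t-T)$ over the next interval reproduces the $j=n+1$ term, completing an induction that confirms \eqref{equation(8)} and the Proposition simultaneously.
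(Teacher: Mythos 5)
Your proof is correct and takes essentially the same route as the paper, whose entire proof consists of invoking Remark \ref{Rem_2}: truncate the two series in \eqref{equation(8)} using the support of the Heaviside factor, then merge the summands via $\Gamma(j+2)=(j+1)\Gamma(j+1)$ to obtain the single bracketed sum, with the $[0,T)$ case following from the $j=0$ term. Your closing cross-check by the method of steps is a sound supplementary verification not present in the paper, but it does not alter the core argument.
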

\begin{proof} Proposition \ref{Prop_1} is proven directly by applying Remark \ref{Rem_2}. \end{proof}

\subsection{Fractional Caputo Time-Delayed Differential equation}
Next, consider the fractional Caputo's time-delayed homogeneous differential equation of order \(\alpha,\) \(0 < \alpha \leq 1\) given by:
\begin{equation}
{}^C D_t^{\alpha} y(t) + a y(t-T) = 0, \quad y(0)=0, 
\end{equation}
where \({}^C D_t^{\alpha}\) denotes the Caputo fractional derivative and $a$ is a constant. The characteristic equation is:
\begin{equation}
s^{\alpha} + a e^{-sT} = 0,
\end{equation}
with solutions given by the $W$ function
\[s= {\alpha  W\left(\frac{T  (-a)^{\frac{1}{\alpha }}}{\alpha }\right)}/{T}.\]

Now, for the inhomogeneous equation 
\begin{equation}
D_t^{\alpha} y(t) + a y(t-T) = b, \quad y(0)\neq 0, \quad a, b\; \text{constants}, \quad y(t)=0, \; t<0. \label{equation(11)}
\end{equation}
Using the Laplace transform, we obtain:
\begin{equation}
\begin{split}
& s^{\alpha} Y(s) - s^{\alpha-1} y(0) + a Y(s) e^{-sT} = \frac{b}{s} \implies
Y(s) = \frac{y(0)}{s \left(1 + \frac{a  e^{-sT}}{s^{\alpha}}\right)} + \frac{b}{s^{\alpha+1} \left(1 + \frac{a  e^{-sT}}{s^\alpha}\right)} \label{eq12}.
\end{split}
\end{equation}

Using Remark \ref{rem1}, from equation \eqref{eq12} we have 
\begin{equation}
\begin{split}
    Y(s) =y(0) \sum_{j=0}^{\infty} (-1)^j \frac{a^j  e^{-s jT}}{s^{\alpha j+1}} + b \sum_{j=0}^{\infty} (-1)^j \frac{a^j  e^{-s jT}}{s^{\alpha(j+1) +1}},
\quad 0<\left|\frac{a  e^{-sT}}{s^{\alpha}}\right|<1.
\end{split}
\end{equation}
To find the inverse Laplace transform, yielding
\begin{equation}
\begin{split}
       y(t) & =y(0) \sum_{j=0}^{\infty} (-1)^j  a^j\mathcal{L}^{-1}\left[\frac{e^{-s j T}}{s^{\alpha j+1}}\right] + b \sum_{j=0}^{\infty} (-1)^j  a^j\mathcal{L}^{-1}\left[\frac{e^{-s j T}}{s^{\alpha(j+1) +1}}\right]\\
      & = y(0) \sum_{j=0}^{\infty} (-1)^j  a^j \frac{\theta (t-j T ) (t-j T )^{\alpha  j}}{\Gamma (j \alpha +1)}
     +   b \sum_{j=0}^{\infty} (-1)^j  a^j \frac{\theta (t-j T ) (t-j T )^{\alpha(1+j)}}{\Gamma ( (j+1)\alpha +1)}. \label{equation(14)}
\end{split}
\end{equation}
\begin{Remark}\label{Rem_3}
    For each $t>0$, the series in the expression \eqref{equation(14)} is a finite sum. To see this, note that $\theta(t-kT)=0$ for all $k>t/T$. Then, 
    \[
\begin{split}
y(t) &  = \sum_{j=0}^{\lfloor t/T\rfloor} \left(\frac{y(0)}{\Gamma (j \alpha +1)}  + \frac{b (t-j T )^{\alpha}}{\Gamma ( (j+1)\alpha +1)}\right) (-1)^j  a^j  {\theta (t-j T ) (t-j T )^{\alpha  j}}.
\end{split}\]
    Furthermore, if we divide the time domain into intervals of length $T$, for each $t$, there exists an $n\in\mathbb{N}_0$, such that $t\in [nT,(n+1)T)$ and  $\lfloor t/T\rfloor=n$. 
\end{Remark}
\begin{Proposition}\label{Prop_2}
    For each $t>0$, the solution of \eqref{equation(11)} is 
    \[y(t)=\sum_{j=0}^{\lfloor t/T\rfloor} \left(\frac{y(0)}{\Gamma (j \alpha +1)}  + \frac{b (t-j T )^{\alpha}}{\Gamma ( (j+1)\alpha +1)}\right) (-1)^j  a^j  {\theta (t-j T ) (t-j T )^{\alpha  j}},\]
with   \[y(t)=y(0) 
+   \frac{b{t}^{\alpha}}{\Gamma(\alpha +1)},\quad \text{for} \; t\in [0, T).\] 

For $\alpha=1$, we recover the case of derivative or order $1$.
\end{Proposition}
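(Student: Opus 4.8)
The plan is to obtain the closed form as a direct consequence of Remark \ref{Rem_3}, and then to read off the two stated corollaries (the explicit expression on $[0,T)$ and the $\alpha=1$ limit) by specialization. The substantive work has already been carried out in passing from \eqref{equation(11)} to \eqref{equation(14)}: take the Laplace transform of \eqref{equation(11)}, use the transform of the Caputo derivative (Appendix \ref{app2}) and of the delayed term (Appendix \ref{app4}), and solve for $Y(s)$ to reach \eqref{eq12}. Invoking Remark \ref{rem1} (now with $s^{\alpha}$ in the role of the denominator, on a real ray where $s^{\alpha}>|a|$) justifies expanding the factor $\bigl(1+a e^{-sT}/s^{\alpha}\bigr)^{-1}$ as a geometric series; inverting term by term with the standard rule $\mathcal{L}^{-1}\!\left[e^{-sjT}/s^{\beta}\right]=\theta(t-jT)(t-jT)^{\beta-1}/\Gamma(\beta)$ (with $\beta=\alpha j+1$ and $\beta=\alpha(j+1)+1$) then produces \eqref{equation(14)}.

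The first step is to apply Remark \ref{Rem_3}: for each fixed $t>0$ the Heaviside factor $\theta(t-jT)$ vanishes whenever $j>t/T$, so the infinite series in \eqref{equation(14)} collapses to the finite sum running from $j=0$ to $j=\lfloor t/T\rfloor$, which is exactly the claimed formula. The second step treats the interval $[0,T)$, where $\lfloor t/T\rfloor=0$ and only the $j=0$ term survives. Since $\Gamma(1)=1$, $a^{0}=1$, $(-1)^{0}=1$, $(t-0)^{\alpha\cdot 0}=t^{0}=1$, and $\theta(t)=1$ for $t>0$, the surviving term reduces to $y(0)+b\,t^{\alpha}/\Gamma(\alpha+1)$, as stated.

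The final step is the consistency check at $\alpha=1$. Substituting $\alpha=1$ into the general formula and using the functional equation $\Gamma(j+2)=(j+1)\,\Gamma(j+1)$, one rewrites $b(t-jT)/\Gamma(j+2)=b(t-jT)/\bigl((j+1)\Gamma(j+1)\bigr)$, so that the bracketed coefficient becomes $\bigl(y(0)+b(t-jT)/(j+1)\bigr)/\Gamma(j+1)$, matching term by term the expression in Proposition \ref{Prop_1}. The only point that might appear to require care is the legitimacy of inverting the infinite series term by term, but this is harmless once one observes---exactly as in Remark \ref{Rem_3}---that for each fixed $t$ only finitely many summands have $\theta(t-jT)\neq 0$. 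Hence no genuine interchange of an infinite sum with the inverse transform is ever needed, and the only real obstacle dissolves into the compact-support bookkeeping already recorded in Remark \ref{Rem_3}.
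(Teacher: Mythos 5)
Your proposal is correct and follows essentially the same route as the paper: the paper derives \eqref{equation(14)} via the Laplace transform, geometric expansion (Remark \ref{rem1}), and term-by-term inversion, and then proves Proposition \ref{Prop_2} precisely by invoking Remark \ref{Rem_3} to truncate the series, exactly as you do. Your explicit verifications of the $[0,T)$ case and the $\alpha=1$ reduction to Proposition \ref{Prop_1} are welcome additions but do not change the argument.
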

\begin{proof} Proposition \ref{Prop_2} is proven directly by applying Remark \ref{Rem_3}. \end{proof}

\subsection{Higher-Order Fractional Differential Equation with Time Delays}
Finally, let us consider a higher-order fractional differential equation with time delays of order \(\beta,\) \(1 < \beta \leq 2\) given by:
\begin{equation}
{}^C D_t^{\beta} y(t) + a y(t-T) = b, \quad a, b \; \text{constants}, \quad y(0), \; y'(0)\; \text{given}, \quad  y(t)=0, \; t<0, \label{equation(15)}
\end{equation}
where \(D_t^{\beta}\) denotes the Caputo fractional derivative of order \(\beta\). 
Using the Laplace transform:
\begin{equation}
\begin{split}
& \left(s^{\beta} + a e^{-sT}\right) Y(s) - s^{\beta-1} y(0) - s^{\beta-2} y'(0)  = \frac{b}{s}
\implies  Y(s)  = \frac{s^{\beta-1} y(0) + s^{\beta-2} y'(0) + \frac{b}{s}}{s^{\beta} + a e^{-sT}}.
\end{split}
\end{equation}
Using similar arguments as before (Remark \ref{rem1}), we have
 
\begin{equation}
\begin{split}
Y(s)  =y(0) \sum_{j=0}^{\infty} (-1)^j \frac{a^j  e^{-s jT}}{s^{\beta j+1}}+ y'(0) \sum_{j=0}^{\infty} (-1)^j \frac{a^j  e^{-s jT}}{s^{\beta j+2}}  + b \sum_{j=0}^{\infty} (-1)^j \frac{a^j  e^{-s jT}}{s^{\beta(j+1) +1}},
\quad 0 < \left|\frac{a  e^{-sT}}{s^{\beta}}\right|<1.
\end{split}
\end{equation}

To find the inverse Laplace transform, yielding
 
\begin{equation}
\begin{split}
       y(t) & =y(0) \sum_{j=0}^{\infty} (-1)^j  a^j\mathcal{L}^{-1}\left[\frac{e^{-s j T}}{s^{\beta j+1}}\right] + y'(0) \sum_{j=0}^{\infty} (-1)^j  a^j\mathcal{L}^{-1}\left[\frac{e^{-s j T}}{s^{\beta j+2}}\right] + b \sum_{j=0}^{\infty} (-1)^j  a^j\mathcal{L}^{-1}\left[\frac{e^{-s j T}}{s^{\beta(j+1) +1}}\right]\\
     & =  \sum_{j=0}^{\infty} \left[\frac{y(0)}{\Gamma (j \beta +1)}
    + \frac{  y'(0)  (t-j T )}{\Gamma (j \beta +2)} + \frac{b  (t-j T )^{\beta}}{\Gamma ( (j+1)\beta +1)}\right] (-1)^j  a^j 
\theta (t-j T ) (t-j T )^{\beta  j}. \label{equation(18)}
\end{split}
\end{equation}

\begin{Remark}\label{Rem_4}
    For each $t>0$, the series in the expression \eqref{equation(18)} is a finite sum. To see this, note that $\theta(t-kT)=0$ for all $k>t/T$. Then, 
    \[
\begin{split}
y(t) &  = \sum_{j=0}^{\lfloor t/T\rfloor} \left[\frac{y(0)}{\Gamma (j \beta +1)}
    + \frac{  y'(0)  (t-j T )}{\Gamma (j \beta +2)} + \frac{b  (t-j T )^{\beta}}{\Gamma ( (j+1)\beta +1)}\right] (-1)^j  a^j 
\theta (t-j T ) (t-j T )^{\beta  j}.
\end{split}\]
    Furthermore, if we divide the time domain into intervals of length $T$, for each $t$, there exists an $n\in\mathbb{N}_0$, such that $t\in [nT,(n+1)T)$ and  $\lfloor t/T\rfloor=n$. 
\end{Remark}
\begin{Proposition}\label{Prop_3}
    For each $t>0$, the solution of \eqref{equation(15)} is 
    \[y(t)= \sum_{j=0}^{\lfloor t/T\rfloor} \left[\frac{y(0)}{\Gamma (j \beta +1)}
    + \frac{  y'(0)  (t-j T )}{\Gamma (j \beta +2)} + \frac{b  (t-j T )^{\beta}}{\Gamma ( (j+1)\beta +1)}\right] (-1)^j  a^j 
\theta (t-j T ) (t-j T )^{\beta  j},\]
with   \[y(t)= y(0) 
    +   y'(0) t  + \frac{b  t^{\beta}}{\Gamma (\beta +1)},\quad \text{for} \; t\in [0, T).\] 
\end{Proposition}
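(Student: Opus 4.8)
The plan is to mirror the derivation already carried out for Propositions \ref{Prop_1} and \ref{Prop_2}, since equation \eqref{equation(15)} differs from the earlier cases only in that the Caputo order lies in $(1,2]$ and a second initial datum $y'(0)$ is prescribed. First I would apply the Laplace transform to \eqref{equation(15)}, using the transform of the Caputo derivative of order $\beta\in(1,2]$ from Appendix \ref{app2}, namely $\mathcal{L}[{}^C D_t^{\beta} y](s)=s^{\beta}Y(s)-s^{\beta-1}y(0)-s^{\beta-2}y'(0)$, together with $\mathcal{L}[y(t-T)](s)=e^{-sT}Y(s)$ from Appendix \ref{app4} (valid because $y(t)=0$ for $t<0$). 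Solving the resulting algebraic relation yields the closed form $Y(s)=\bigl(s^{\beta-1}y(0)+s^{\beta-2}y'(0)+b/s\bigr)/\bigl(s^{\beta}+a e^{-sT}\bigr)$ displayed before \eqref{equation(18)}.

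Next I would expand the denominator as a geometric series in $a e^{-sT}/s^{\beta}$. The justification is exactly Remark \ref{rem1} with $c=s^{\beta}$: for $\Re(s)$ large enough that $0<|a e^{-sT}/s^{\beta}|<1$, the single fraction splits into the three infinite sums appearing above \eqref{equation(18)}. Inverting term by term then requires only the elementary pair $\mathcal{L}^{-1}[s^{-\gamma}](t)=t^{\gamma-1}/\Gamma(\gamma)$ combined with the shift rule $\mathcal{L}^{-1}[e^{-sjT}F(s)](t)=f(t-jT)\,\theta(t-jT)$, applied with $\gamma=\beta j+1$, $\beta j+2$, and $\beta(j+1)+1$ for the three contributions multiplied by $y(0)$, $y'(0)$, and $b$ respectively. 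Collecting these reproduces \eqref{equation(18)} exactly.

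The decisive final step is to pass from the infinite series \eqref{equation(18)} to the finite sum asserted in the Proposition, and this is supplied verbatim by Remark \ref{Rem_4}: since $\theta(t-kT)=0$ once $k>t/T$, every term with $j>\lfloor t/T\rfloor$ vanishes and the upper limit collapses to $\lfloor t/T\rfloor$. Specializing to $t\in[0,T)$, where $\lfloor t/T\rfloor=0$, leaves only the $j=0$ term, in which $\theta(t)=1$ and the Gamma factors reduce to $\Gamma(1)=1$, $\Gamma(2)=1$, and $\Gamma(\beta+1)$, giving $y(t)=y(0)+y'(0)t+b\,t^{\beta}/\Gamma(\beta+1)$. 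The one point that genuinely deserves care, rather than being routine, is the legitimacy of the term-by-term inversion: the interchange of $\mathcal{L}^{-1}$ with the infinite sum holds only on the half-plane where Remark \ref{rem1} applies, so one must invoke uniqueness of the inverse Laplace transform to conclude that the identity, initially established for large $\Re(s)$, determines $y(t)$ for all $t>0$. Because Remark \ref{Rem_4} exhibits the limiting object as a locally finite sum of functions that are entire in $t$, this concern is dispatched, and the Proposition follows as an immediate corollary of Remark \ref{Rem_4}.
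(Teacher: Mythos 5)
Your proof is correct and follows essentially the same route as the paper: the paper derives \eqref{equation(18)} in the body text via the Laplace transform of the Caputo derivative, the geometric expansion justified by Remark \ref{rem1}, and term-by-term inversion, and then its proof of Proposition \ref{Prop_3} consists precisely of invoking Remark \ref{Rem_4} to truncate the series and specializing to $t\in[0,T)$, exactly as you do. Your closing observation about justifying the interchange of $\mathcal{L}^{-1}$ with the infinite sum via uniqueness of the inverse transform is a point of rigor the paper leaves implicit, but it does not change the argument's structure.
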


\begin{proof} Proposition \ref{Prop_3} is proven directly by applying Remark \ref{Rem_4}. \end{proof}

In these examples, we have explored some fractional time-delayed differential equations. We have also discussed first-order and fractional Caputo derivatives, FTDDEs, and higher-order fractional differential equations with time delays. We derived their characteristic equations and solved them using the Laplace transform. These techniques are valuable tools for analyzing and solving complex differential equations with time delays, enhancing our understanding of real-world phenomena.

\section{Time-delayed bulk viscosity}\label{sect.3}

The FLRW cosmological model is widely used because it assumes the universe is homogeneous and isotropic on large scales, making perturbation studies more manageable. Its simplicity and effectiveness in describing large-scale cosmic evolution make it the standard framework.  

Recent extensions, such as the Dipolar Cosmological Principle, incorporate cosmic flows and axial anisotropies \cite{Krishnan:2022qbv, Krishnan:2022uar}, broadening FLRW’s applicability. This approach enables modeling complex dynamics using more generalized differential equations, including anisotropic shear and expansion. Further research has explored coupled scalar fields and oscillatory behaviors in specific metrics, offering insights into dark energy \cite{Orjuela-Quintana:2021zoe, Allahyari:2023kfm}.  

As a first approach, we assume that the flat FLRW metric will serve as the foundation for this study:  
\begin{equation}  
    ds^2=-dt^2+a^2(t)\left(dx^2+dy^2+dz^2\right),  
\end{equation}  
along with the energy-momentum tensor incorporating a bulk viscosity term:  
\begin{equation}  
    T_{\mu\nu}=\rho u_\mu u_\nu+(p+\eta)h_{\mu\nu},  
\end{equation}  
where \( u^\mu=\delta_0^\mu \) represents the four-velocity of the comoving observer, while \( h_{\mu\nu}=g_{\mu\nu}+u_\mu u_\nu \) is the projective tensor. Here, \( \rho \) and \( p \) denote the energy density and pressure of the perfect fluid.  

The function \( \eta = \eta(\rho) \) represents the bulk viscosity term, as discussed in \cite{Murphy:1973zz, Barrow:1986yf, padmanabhan1987viscous, Gron:1990ew, Zimdahl:1996ka, Szydlowski:2006ma}. This term appears in the spatial component of space-time:  
\begin{equation}  
    T_{\mu\nu}=\left(\rho+p\right)u_\mu u_\nu +pg_{\mu\nu}+\eta h_{\mu\nu}.  
\end{equation}  

The introduction of the bulk viscosity term modifies the Friedmann equations:  
\begin{equation} \label{Feqs}  
    3H^2=\rho,  
\end{equation}  
\begin{equation} \label{Feqs_2}  
    2\dot{H}+3H^2+p-\eta=0.  
\end{equation}  
While the continuity equation for the perfect fluid reads:  
\begin{equation}  
    \dot{\rho}+3H\left(\rho+p\right)=3H\eta.  
\end{equation}  
When the perfect fluid behaves as an ideal gas, i.e., \( p = (\gamma -1) \rho \), and substituting into \eqref{Feqs}, equation \eqref{Feqs_2} simplifies to:  
\begin{equation}  
    2\dot{H}+3\gamma H^2-\eta=0. \label{eq:bulk_viscosity}  
\end{equation}  

The function \( \eta = \eta(\rho) \) can take various forms, enabling descriptions of alternative cosmological models, such as the Chaplygin gas and its modifications \cite{Kamenshchik:2001cp, Bento:2002ps, Zhu:2004aq, Xu:2012qx}. The Chaplygin gas model is recovered when the perfect fluid behaves as an ideal gas and \( \eta = \eta_0 \rho^{-\lambda} \).  
Originally proposed as a unified dark matter model, the Chaplygin gas is also relevant for early-universe scenarios, particularly inflation \cite{Amendola:2003bz, Barrow:1988yc, Barrow:1990vx, Villanueva:2015ypa}. Several Chaplygin gas-like cosmologies can be incorporated within this framework \cite{Barrow:1990vx, Villanueva:2015ypa, Barrow:2016qkh, Israel:1979wp}.  

Additionally, the right-hand side of the conservation law introduces particle creation and destruction, which play a crucial role in different phases of cosmic evolution. Particle production processes have essential applications in both the early and late universe \cite{Prigogine:1989zz, Abramo:1996ip, steigman2009accelerating, lima2010cosmol}, as well as in extended frameworks \cite{Basilakos:2010yp, Jesus:2011ek, Mimoso:2013zhp, Lima:2014qpa}.  

Eckart’s theory serves as a first approximation of bulk viscosity models and was later refined by the Israel-Stewart formalism \cite{Cruz:2016rqi, Disconzi:2014oda, Leyva:2016gzm} and \cite{Acquaviva:2018rqi, Aguilar-Perez:2022bzb}. The Israel-Stewart model introduces additional degrees of freedom in the field equations, offering a more comprehensive description of physical variables and addressing limitations inherent in Eckart’s formulation like non-causality.

According to \cite{Paliathanasis:2022wwl}, the simplest bulk viscosity scenario in \eqref{eq:bulk_viscosity}, following Eckart’s formulation, arises when the viscosity term depends on \( H \), say \( \eta(H) \).  
Provided that \( \eta(H) - 3\gamma H^{2} \neq 0 \), equation \eqref{eq:bulk_viscosity} can be solved explicitly via quadratures:
\begin{equation}
\int \frac{2 dH}{\eta(H) - 3\gamma H^{2}} = t - t_{0}. \label{eq:quadrature_solution}
\end{equation}  
For specific forms of bulk viscosity, equation \eqref{eq:bulk_viscosity} simplifies to well-known differential equations with closed-form solutions. When \( \eta(H) \) is linear, it takes the form of a Riccati first-order ODE, while a third-order polynomial \( \eta(H) \) results in an Abel equation \cite{Paliathanasis:2022wwl}.

Let \( H_0 \) be a zero of the function \( f(H) \), defined as:
\begin{equation}
f(H) = \frac{\eta(H) - 3\gamma H^{2}}{2}. \label{eq:critical_point}
\end{equation}  
From a physical perspective, when \( H_0 \neq 0 \), the critical point describes a de Sitter universe, whereas for \( H_0 = 0 \), the resulting spacetime is an empty Minkowski space. Since \( \eta(H) \) is a real function, periodic behavior near the critical point is not expected \cite{Paliathanasis:2022wwl}.

In \cite{Paliathanasis:2022wwl}, the most straightforward extension of the bulk viscosity scenario in \eqref{eq:bulk_viscosity} is the introduction of a time delay in the \( H \)-function within the field equations.

Scientists argue that vacuum energy density is unlikely to remain static in an expanding universe, prompting the exploration of a smooth time-dependent vacuum energy \cite{Basilakos:2018xjp, Cruz:2023dzn, Fritzsch:2016ewd, Sola:2015rra}. One approach models the cosmological constant as a decreasing function, addressing the Hubble constant tension \cite{amendola2015dark, DiValentino:2017gzb} and aligning theoretical predictions with observed values based on Quantum Field Theory in curved classical spacetime \cite{parker2009quantum}.

In \cite{El-nabulsi:2016dsj}, a generalized fractional scale factor and a time-dependent Hubble parameter are introduced, governed by an Ornstein-Uhlenbeck-like fractional differential equation. This model describes the accelerated expansion of a non-singular universe, both with and without scalar fields, revealing previously hidden cosmological features \cite{El-nabulsi:2016dsj}, inspired by the stochastic formulation proposed in \cite{10.1143/PTPS.139.470}.

An alternative approach to modeling bulk viscosity is provided by the Israel-Stewart formalism, with its simplest case governed by \cite{Maartens:1996vi, Murphy:1973zz, Barrow:1986yf, Szydlowski:2006ma}:
\begin{equation}
\tau \dot{\eta} + \eta = 3\xi H. \label{eq:bulk_viscosity_simple}
\end{equation}  
where \( \tau \) is the relaxation time, given by \( \tau = \xi \rho^{-1} \), with \( \xi \) as the bulk viscosity coefficient. When \( \xi = 3\gamma \xi_0 \rho^{\kappa} \), the system admits a real critical point corresponding to the de Sitter solution, \( H_P = \xi_0^{\frac{1}{1-\kappa}} \). The linearized system near the critical point exhibits imaginary eigenvalues when \( \kappa < -1 - \frac{1}{3\gamma^{2}} - 3\gamma^{2} \), implying spiral behavior. However, for \( \kappa > 0 \) \cite{Murphy:1973zz, Barrow:1986yf, Szydlowski:2006ma, Gron:1990ew}, oscillatory behavior is absent, unlike the time-delay model, which supports oscillations for positive \( \kappa \) \cite{Paliathanasis:2022wwl}.

In the full Israel-Stewart framework, the bulk viscous pressure \( \eta \) obeys a causal evolution equation with second-order corrections in deviations from equilibrium. Specifically, the equation governing \( \eta \) takes the form:
\begin{equation}
\tau \dot{\eta} + \eta + \frac{1}{2} \left( 3H + \frac{\dot{\tau}}{\tau} - \frac{\dot{\xi}}{\xi} - \frac{\dot{\mathcal{T}}}{\mathcal{T}} \right) \tau \eta = \xi H, \label{eq:bulk_viscosity_full}
\end{equation}  
where \( \tau \) is the relaxation time and \( \mathcal{T} \) is the barotropic temperature of the viscous fluid. 
See Ref. \cite{Maartens:1996vi} for a detailed derivation of the transport equations.
For barotropic fluids with a constant barotropic index, \( \gamma_v=1+w \), the relaxation time can be reduced to:
\begin{equation}
\tau = \frac{\xi}{(2-\gamma_v)\gamma_v\rho_v}.
\end{equation}  
At the same time, the Gibbs integrability condition allows us to calculate the temperature \( \mathcal{T} \) as:
\begin{equation}
\mathcal{T}\propto \rho_v^{\frac{\gamma_v-1}{\gamma_v}}.
\end{equation}  
By adjusting the functions \( \tau \) and \( \xi \), various cosmological scenarios can be recovered \cite{Murphy:1973zz, Barrow:1986yf, Maartens:1996vi} and \cite{Szydlowski:2006ma, Leyva:2017arj}.

\subsection{Time-delayed bulk viscosity}
Despite the strengths of FLRW cosmology, refinements are needed to address deviations from perfect fluid behavior. Viscosity and retardation effects regulate shear dynamics, impose causality constraints, and influence anisotropic expansion, gravitational wave propagation, and energy dissipation.

Time delays in cosmic evolution introduce nonlocality, meaning present states depend on past interactions rather than evolving instantaneously. Fractional calculus provides a framework for modeling these effects, particularly anomalous transport phenomena in astrophysical systems. By incorporating nonlocal derivatives, fractional models capture cumulative influences from previous events, offering insights into irregular matter distributions and cosmic energy exchanges \cite{uchaikin2017fractional, herrmann2014fractional}.  

The framework in \cite{herrmann2014fractional} unifies damping and shear effects within fractional dynamics, aiding the modeling of dissipative processes in cosmology. Meanwhile, \cite{uchaikin2017fractional} describes stochastic jumps and nonlocal transitions from a statistical mechanics perspective. Integrating these approaches enhances models of transport mechanisms and astrophysical irregularities, improving the understanding of entropy generation and turbulence.  

Time delays in cosmological models account for non-instantaneous interactions between cosmic components, such as the lag in energy exchange between matter and radiation. This is particularly relevant in the early universe, where viscosity may arise from quantum or thermal processes.  
In this context, the viscosity function is given by  
\begin{equation}\label{eta}
    \eta(t) = 2\eta_0 H(t-T).
\end{equation}  
Following \cite{Paliathanasis:2022wwl}, this expression arises from a modified cosmological framework. Our approach extends \cite{Paliathanasis:2022wwl} by incorporating fractional derivatives and time-delay corrections to evolution equations. A rigorous derivation from Einstein’s field equations would clarify the role of viscosity and retardation in cosmic dynamics. Applying Laplace transformations requires justifying assumptions and examining their implications for nonlocal transport in cosmology. This phenomenological approach seeks to determine whether the modified framework recovers standard viscous or non-viscous cosmology in specific limits.

The corresponding evolution equation is  
\begin{equation}\label{eqlinearviscosity}
    \dot{H}(t) + \frac{3\gamma}{2} H^2(t) - \eta_0 H(t-T) = 0.
\end{equation}  
Taking the limit \(T \to 0\), this equation simplifies to the Riccati equation  
\begin{equation}\label{Riccati}
    \dot{H}=f(H) := - H\left(\frac{3\gamma}{2}H - \eta_0\right).
\end{equation}  
The solution is  
\begin{equation}
    H(t) = \frac{2 \eta_0 H_0 e^{\eta_0 t}}{2 \eta_0+3 \gamma H_0 \left(e^{\eta_0 t}-1\right)}, \quad \frac{\dot{a}}{a}=H \implies a(t) = a_0 \left(\frac{3 \gamma H_0\left(e^{\eta_0 t}-1\right)}{2 \eta_0}+1\right)^{\frac{2}{3\gamma}}.
\end{equation}  
The critical points of equation \eqref{Riccati}, where \( f(H) = 0 \), are \( H_A = 0 \) and \( H_B = (2\eta_0)/(3\gamma) \).  

For \( H_A \), referring to equation \eqref{Feqs}, this corresponds to a universe with zero energy density (\(\rho = 0\)). This condition describes an empty universe or a universe where \(a \to \infty\), with vanishing matter and energy density.  

For \( H_B \), we obtain  
\begin{equation}
H_B = \frac{\dot{a}}{a} = H_B \implies a(t) = a_0 e^{H_Bt}.
\end{equation}  
This exponential expansion characterizes the de Sitter phase.  

The stability of the critical point is determined by the sign of \( \frac{df}{dH} |_{H=H_{0}} \). If \( \frac{df}{dH} |_{H=H_{0}} < 0 \), the critical point is an attractor, while if \( \frac{df}{dH} |_{H=H_{0}} > 0 \), the critical point is unstable.  

Applying the variable transformation \( y(t) = H(t) - H_B \), equation \eqref{eqlinearviscosity} transforms into  
\begin{equation}\label{nonfractandnonlinear}
    \dot{y}(t) + \frac{3\gamma}{2} y^2(t) + 2\eta_0 y(t) - \eta_0 y(t-T) = 0.
\end{equation}  
\subsection{Linearization}
\label{SECT:3.1}
We linearized the last equation around $y(t)=0$, to obtain:
\begin{equation}\label{nonfractionalandretarded}
    \dot{y}(t)+2\eta_0y(t)-\eta_0y(t-T)=0.
\end{equation}
Now applying the Laplace transform to equation \eqref{nonfractionalandretarded}, we obtain
\begin{equation}
    s\mathcal{L}\left\{y(t)\right\}-y(0)+2\eta_0 \mathcal{L}\left\{y(t)\right\}-\eta_0e^{-sT}\mathcal{L}\left\{y(t)\right\}=0.
\end{equation}
Combining the steps, we get:
\begin{align}
\mathcal{L}\{y(t)\} &=y(0)\sum_{j=0}^\infty \frac{\left(\eta_0e^{-sT}-2\eta_0\right)^j}{s^{j+1}}\; \text{for} \; 0<\left|\eta_0 s^{-1}\left(e^{-sT}-2\right)\right|<1.
\end{align}
Using the Newton binomial
we have
\begin{equation}
\begin{split}
    \left(\eta_0e^{-sT}-2\eta_0\right)^j&=\sum_{k=0}^j\frac{j!}{k!(j-k)!}(-2)^{j-k}\eta_0^je^{-skT}.
\end{split}
\end{equation}
Thus,
\begin{equation}
\begin{split}
    \mathcal{L}\left\{y(t)\right\}&=y(0)\sum_{j=0}^\infty \sum_{k=0}^j\frac{j!}{k!(j-k)!}(-2)^{j-k}\eta_0^j\frac{e^{-skT}}{s^{j+1}}.
\end{split}
\end{equation}
Applying the inverse Laplace transform
\begin{equation}
    y(t)=y(0)\sum_{j=0}^\infty \sum_{k=0}^j\frac{j!}{k!(j-k)!}(-2)^{j-k}\eta_0^j\mathcal{L}^{-1}\left[\frac{e^{-skT}}{s^{j+1}}\right],
\end{equation}
we obtain
\begin{equation}
\begin{split}
    y(t)&=y(0)\sum_{j=0}^\infty \sum_{k=0}^j\frac{j!}{k!(j-k)!}(-2)^{j-k}\eta_0^j\frac{(t-kT)^j\theta(t-kT)}{\Gamma(j+1)}\\
    &=y(0)\sum_{k=0}^\infty \sum_{j=k}^\infty\frac{j!}{k!(j-k)!}(-2)^{j-k}\eta_0^j\frac{(t-kT)^j\theta(t-kT)}{\Gamma(j+1)}\\
    &=y(0)\sum_{k=0}^\infty \sum_{j=k}^\infty\frac{\left[-2\eta_0(t-kT)\right]^j}{\Gamma(j+1)}(-2)^{-k}\frac{j!}{k!(j-k)!}\theta(t-kT)\\
    &=y(0)\Bigg\{\sum_{j=0}^\infty\frac{(-2\eta_0t)^j}{\Gamma(j+1)}\theta(t)+\sum_{j=1}^\infty \frac{\left[-2\eta_0(t-T)\right]^j}{\Gamma(j+1)}(-2)^{-1}\frac{j!}{1!(j-1)!}\theta(t-T)\\ & \quad \quad \quad \quad \quad +\sum_{j=2}^\infty \frac{\left[-2\eta_0(t-2T)\right]^j}{\Gamma(j+1)}(-2)^{-2}\frac{j!}{2!(j-2)!}\theta(t-2T)+\cdots\Bigg\}\\
    &=y(0)\Bigg\{e^{-2\eta_0t}\theta(t)+(-2)^{-1}\frac{\theta(t-T)}{1!}\left[-2\eta_0(t-T)\right]e^{-2\eta_0(t-T)}\\ & \quad \quad \quad \quad +(-2)^{-2}\frac{\theta(t-2T)}{2!}\left[-2\eta_0(t-2T)\right]^2 e^{-2\eta_0(t-2T)}+\cdots\Bigg\} \\
    &=y(0)\sum_{k=0}^\infty\frac{e^{-2\eta_0(t-kT)}\left[\eta_0(t-kT)\right]^k\theta(t-kT)}{\Gamma(k+1)}.
    \end{split}
\end{equation}
Therefore, the analytical solution of equation \eqref{nonfractionalandretarded} is given by
\begin{equation}\label{Analyticaly1}
    y(t)=y(0)\sum_{k=0}^\infty\frac{e^{-2\eta_0(t-kT)}\left[\eta_0(t-kT)\right]^k\theta(t-kT)}{\Gamma(k+1)}.
\end{equation}
Hence, given $y(0)=H_0-H_B$, $H(t)$ is
\begin{equation}\label{AnalyticalH1}
    H(t)=H_B+\left(H_0-H_B\right)\sum_{k=0}^\infty\frac{e^{-2\eta_0(t-kT)}\left[\eta_0(t-kT)\right]^k\theta(t-kT)}{\Gamma(k+1)},
\end{equation}
where $H_0=H(t=0)$, and from 
\begin{align}
    a(t) &= \exp\left[\int H(t) dt\right]. \label{Togeta(t)}
    \end{align}
    Hence, omitting a multiplicative factor that we set to $1$, 
    \begin{align}
     a(t) &=\exp\int\left[ H_B+\left(H_0-H_B\right)\sum_{k=0}^\infty\frac{e^{-2\eta_0(t-kT)}\left[\eta_0(t-kT)\right]^k\theta(t-kT)}{\Gamma(k+1)}\right] dt,
     \\
     &= e^{ H_B t} \prod_{k=0}^\infty \frac{\exp\left(\frac{2^{-(k+1)} \left(H_0-H_B\right) \theta (t-k T)}{\eta_{0}}\right)}{\exp \left(\frac{2^{-(k+1)} \left(H_0-H_B\right) \theta (t-k T) \Gamma
   (k+1,2 \eta_0 (t-k T))}{\eta_0 \Gamma (k+1)}\right)},
   \label{Scale-factor-a}
\end{align}
where
\(\Gamma (a,z)\) is the incomplete Gamma function 
\begin{equation}
    \Gamma (a,z)=\int_{z}^{\infty}t^{a-1}e^{-t}dt, \; \text{where}\; 
    \Gamma (a,0)=\Gamma (a). \label{1.1.1.8}
\end{equation}
\begin{Remark}\label{Rem_5}
    For each $t>0$, the series in \eqref{Analyticaly1} is a finite sum. To see this, note that $\theta(t-kT)=0$ for all $k>t/T$. Then, 
    \[y(t)=\left(H_0-H_B\right)\sum_{k=0}^{\lfloor t/T\rfloor}\frac{e^{-2\eta_0(t-kT)}\left[\eta_0(t-kT)\right]^k \theta(t-kT)}{\Gamma(k+1)}.\]
    Furthermore, if we divide the time domain into intervals of length $T$, for each $t$, there exists an $n\in\mathbb{N}_0$, such that $t\in [nT,(n+1)T)$ and  $\lfloor t/T\rfloor=n$. 
\end{Remark}
\begin{Proposition}\label{Prop_4}
 For each $t>0$, the solution of \eqref{nonfractionalandretarded} is 
  \[y(t)=y(0)\sum_{k=0}^{\lfloor t/T\rfloor}\frac{e^{-2\eta_0(t-kT)}\left[\eta_0(t-kT)\right]^k \theta(t-kT)}{\Gamma(k+1)},\]
and
\begin{equation}
y(t)=  y(0)e^{-2\eta_0 t}, \quad \text{for} \; t\in [0, T). 
\end{equation}
\end{Proposition}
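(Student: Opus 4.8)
The plan is to obtain the finite-sum representation directly from the closed-form series \eqref{Analyticaly1}, which has already been established through the Laplace-transform computation preceding the statement. That derivation converts \eqref{nonfractionalandretarded} into an algebraic equation for $\mathcal{L}\{y(t)\}$, expands the resulting rational-exponential expression as a geometric series (valid by Remark \ref{rem1}), applies the Newton binomial to each power $(\eta_0 e^{-sT}-2\eta_0)^j$, inverts term by term using $\mathcal{L}^{-1}[e^{-skT}/s^{j+1}]=(t-kT)^j\theta(t-kT)/\Gamma(j+1)$, and reindexes the double sum so that the inner series collapses to an exponential. I would take \eqref{Analyticaly1} as the given starting point, since $y(0)=H_0-H_B$ identifies its prefactor with the one in the statement.

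First I would invoke Remark \ref{Rem_5}: since $\theta(t-kT)=0$ whenever $kT>t$, i.e.\ whenever $k>t/T$, every term of \eqref{Analyticaly1} with index $k>\lfloor t/T\rfloor$ vanishes identically. Hence the infinite sum truncates to
\[
y(t)=y(0)\sum_{k=0}^{\lfloor t/T\rfloor}\frac{e^{-2\eta_0(t-kT)}\left[\eta_0(t-kT)\right]^k\theta(t-kT)}{\Gamma(k+1)},
\]
which is the first assertion. For the interval $t\in[0,T)$ I would observe that $\lfloor t/T\rfloor=0$, so only the $k=0$ term survives; evaluating it with $(\eta_0 t)^0=1$, $\Gamma(1)=1$, and $\theta(t)=1$ for $t>0$ gives $y(t)=y(0)e^{-2\eta_0 t}$, completing the proof. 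This mirrors exactly the pattern of Propositions \ref{Prop_1}--\ref{Prop_3}, each of which follows from its associated Remark.

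Since the substantive analytic work is already contained in the derivation of \eqref{Analyticaly1}, the remaining argument is essentially bookkeeping. The only point requiring genuine care — and the step I would regard as the real obstacle if proving from scratch — is the interchange converting $\sum_{j}\sum_{k\le j}$ into $\sum_{k}\sum_{j\ge k}$, together with the recognition that, after the substitution $m=j-k$, the inner sum over $j$ reassembles the Maclaurin series of $e^{-2\eta_0(t-kT)}$. Concretely, for fixed $k$ the inner sum equals $(-2)^{-k}\frac{u^k}{k!}e^{u}$ with $u=-2\eta_0(t-kT)$, and since $(-2)^{-k}u^k=[\eta_0(t-kT)]^k$ this produces the clean product $[\eta_0(t-kT)]^k e^{-2\eta_0(t-kT)}/\Gamma(k+1)$. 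Justifying this rearrangement requires the absolute convergence guaranteed on the region $0<|\eta_0 s^{-1}(e^{-sT}-2)|<1$ underlying the geometric expansion, after which the truncation and the $t\in[0,T)$ specialization are immediate.
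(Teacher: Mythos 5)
Your proposal is correct and follows the paper's own route exactly: the paper proves Proposition \ref{Prop_4} by applying Remark \ref{Rem_5} to the series \eqref{Analyticaly1}, truncating at $k=\lfloor t/T\rfloor$ via the vanishing of $\theta(t-kT)$, with the $t\in[0,T)$ case reducing to the single $k=0$ term. Your additional remarks on the double-sum interchange and the reassembly of the exponential merely recapitulate the derivation of \eqref{Analyticaly1} already carried out in Section \ref{SECT:3.1}, so nothing is missing.
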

\begin{proof} Proposition \ref{Prop_4} is proven directly by applying Remark \ref{Rem_5}. \end{proof}
\begin{Remark}\label{Rem_6}
    For each $t>0$, the product in \eqref{Scale-factor-a} is finite. To see this, note that $\theta(t-kT)=0$ for all $k>t/T$. Then, 
    
    \[\prod_{k=0}^\infty \frac{\exp\left(\frac{2^{-(k+1)} \left(H_0-H_B\right) \theta (t-k T)}{\eta_{0}}\right)}{\exp \left(\frac{2^{-(k+1)} \left(H_0-H_B\right) \theta (t-k T) \Gamma
   (k+1,2 \eta_0 (t-k T))}{\eta_0 \Gamma (k+1)}\right)}=\prod_{k=0}^{\lfloor t/T\rfloor} \frac{\exp\left(\frac{2^{-(k+1)} \left(H_0-H_B\right) \theta (t-k T)}{\eta_{0}}\right)}{\exp \left(\frac{2^{-(k+1)} \left(H_0-H_B\right) \theta (t-k T) \Gamma
   (k+1,2 \eta_0 (t-k T))}{\eta_0 \Gamma (k+1)}\right)}.\]
   
    Furthermore, if we divide the time domain into intervals of length $T$, for each $t$, there exists an $n\in\mathbb{N}_0$, such that $t\in [nT,(n+1)T)$, and  $\lfloor t/T\rfloor=n$. 
\end{Remark}
\begin{Proposition}\label{Prop_5}
  For each $t>0$, 
\begin{align}
    H(t)& =H_B+\left(H_0-H_B\right)\sum_{k=0}^{\lfloor t/T\rfloor}\frac{e^{-2\eta_0(t-kT)}\left[\eta_0(t-kT)\right]^k\theta(t-kT)}{\Gamma(k+1)}, \label{AnalyticalHP1}
\\
  a(t)  & = e^{ H_B t} \prod_{k=0}^{\lfloor t/T\rfloor} \frac{\exp\left(\frac{2^{-(k+1)} \left(H_0-H_B\right) \theta (t-k T)}{\eta_{0}}\right)}{\exp \left(\frac{2^{-(k+1)} \left(H_0-H_B\right) \theta (t-k T) \Gamma
   (k+1,2 \eta_0 (t-k T))}{\eta_0 \Gamma (k+1)}\right)}.
\end{align}
\end{Proposition}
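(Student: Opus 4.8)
The plan is to obtain both identities by truncating the infinite representations already established in \eqref{AnalyticalH1} and \eqref{Scale-factor-a}, exploiting the support property of the Heaviside function. The single elementary fact driving everything is that $\theta(t-kT)=0$ whenever $k>t/T$, so for a fixed $t>0$ every summand or factor indexed by $k>\lfloor t/T\rfloor$ either vanishes or reduces to the identity.

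First I would treat the Hubble parameter. Writing $H(t)=H_B+y(t)$ with $y(0)=H_0-H_B$, I invoke Proposition \ref{Prop_4}, which already records the finite-sum form of $y(t)$; substituting it into $H(t)=H_B+y(t)$ yields the claimed expression \eqref{AnalyticalHP1} at once. Equivalently, one may begin from the infinite series \eqref{AnalyticalH1} and apply Remark \ref{Rem_5}: since each summand carries a factor $\theta(t-kT)$, every term with $k>t/T$ is identically zero, so the sum over $k\in\{0,1,\dots\}$ collapses to the sum over $k\in\{0,\dots,\lfloor t/T\rfloor\}$.

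Next I would address the scale factor. Starting from the infinite-product expression \eqref{Scale-factor-a}, I apply Remark \ref{Rem_6}. For each index $k>t/T$ the corresponding factor has $\theta(t-kT)=0$ appearing in both its numerator and denominator exponents, so that factor equals $\exp(0)/\exp(0)=1$ and contributes trivially. Consequently the infinite product reduces to the finite product $\prod_{k=0}^{\lfloor t/T\rfloor}$, which is precisely the stated closed form for $a(t)$.

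The only genuinely delicate bookkeeping lies upstream, in the passage from $a(t)=\exp\!\left[\int H(t)\,dt\right]$ in \eqref{Togeta(t)} to the product form \eqref{Scale-factor-a}: one must integrate each term $e^{-2\eta_0(t-kT)}\,[\eta_0(t-kT)]^k$, recognize the resulting primitive through the upper incomplete Gamma function $\Gamma(k+1,2\eta_0(t-kT))$, and then convert the exponential of the term-by-term integral into a product. Since that computation already underpins \eqref{Scale-factor-a} and may be assumed here, the proof of the Proposition itself reduces to the truncation argument above, which is immediate once the vanishing of $\theta(t-kT)$ for $k>t/T$ is noted.
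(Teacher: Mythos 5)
Your proposal is correct and follows essentially the same route as the paper, which proves Proposition \ref{Prop_5} exactly by invoking \eqref{AnalyticalH1} and \eqref{Scale-factor-a} and then applying the truncation Remarks \ref{Rem_5} and \ref{Rem_6}. Your observation that each factor with $k>\lfloor t/T\rfloor$ in the product equals $\exp(0)/\exp(0)=1$, and your correct placement of the incomplete-Gamma integration as already established upstream in \eqref{Scale-factor-a}, match the paper's argument precisely.
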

\begin{proof} Proposition \ref{Prop_5} is proved by using \eqref{AnalyticalH1}, \eqref{Scale-factor-a} and applying Remarks \ref{Rem_5} and \ref{Rem_6}. \end{proof}
\begin{Proposition}\label{Prop_6}
    As $t\rightarrow +\infty$ the exponential term $e^{-2 \eta_0 t}$ dominates. Therefore, 
    \[\lim_{t\rightarrow +\infty} H(t)  = H_B.\]
\end{Proposition}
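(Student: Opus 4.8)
The plan is to work directly from the closed form for $H(t)$ established in Proposition \ref{Prop_5}, namely
\begin{equation}
H(t) = H_B + (H_0 - H_B)\, S(t), \qquad S(t) := \sum_{k=0}^{\lfloor t/T\rfloor} \frac{e^{-2\eta_0(t-kT)}\,[\eta_0(t-kT)]^k}{k!},
\end{equation}
where I have used $\Gamma(k+1)=k!$ and $\theta(t-kT)=1$ for $k\le\lfloor t/T\rfloor$. It suffices to show that the weight $S(t)\to 0$ as $t\to+\infty$; the claimed limit $\lim_{t\to\infty}H(t)=H_B$ then follows immediately. Throughout I assume the physically relevant regime $\eta_0>0$, $T>0$. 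I stress at the outset that bounding each term by its supremum and summing is doomed, since the number of surviving terms grows like $\lfloor t/T\rfloor$; the whole sum must be controlled at once.

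First I would record the elementary inequality
\[
\frac{x^k}{k!} \le \beta^{-k} e^{\beta x}, \qquad x\ge 0,\ \beta>0,\ k\in\mathbb{N}_0,
\]
which merely says that $\beta^k x^k/k!$ is a single term of the series for $e^{\beta x}$. Writing $u_k:=t-kT\ge 0$ for $0\le k\le\lfloor t/T\rfloor$ and applying this with $x=\eta_0 u_k$ gives the termwise estimate
\[
\frac{e^{-2\eta_0 u_k}(\eta_0 u_k)^k}{k!} \le \beta^{-k} e^{-(2-\beta)\eta_0 u_k} = e^{-(2-\beta)\eta_0 t}\left(\frac{e^{(2-\beta)\eta_0 T}}{\beta}\right)^k.
\]

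The decisive step is the choice of the free parameter $\beta$. Summing the geometric majorant over all $k\ge 0$ yields
\[
S(t) \le \frac{e^{-(2-\beta)\eta_0 t}}{1-r}, \qquad r:=\frac{e^{(2-\beta)\eta_0 T}}{\beta},
\]
valid whenever $r<1$. I would then observe that as $\beta\uparrow 2$ one has $e^{(2-\beta)\eta_0 T}\to 1$ while $\beta\to 2$, so $r\to \tfrac12<1$; hence, by continuity, there exists $\beta\in(0,2)$ with $r<1$. For any such $\beta$ the prefactor $e^{-(2-\beta)\eta_0 t}$ decays exponentially because $(2-\beta)\eta_0>0$, so $S(t)\to 0$ and therefore $H(t)\to H_B$. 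As a consistency check, the optimal attainable rate $\delta\eta_0$ (with $\delta=2-\beta$) is governed by the threshold $e^{\delta\eta_0 T}=2-\delta$, which is exactly the equation satisfied by the real characteristic root of $s+2\eta_0-\eta_0 e^{-sT}=0$; this identifies the decay with the spectral abscissa of the delay equation \eqref{nonfractionalandretarded}.

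The main obstacle is precisely this parameter optimisation. The crude choice $\beta=1$, i.e.\ $x^k/k!\le e^x$, produces geometric ratio $r=e^{\eta_0 T}>1$ and yields no decay whatsoever; the underlying difficulty is that the terms with $k$ close to $\lfloor t/T\rfloor$ are not individually small, so the Poisson-type bound is too weak there. Tilting the exponent by letting $\beta\to 2$ simultaneously flattens the base $e^{(2-\beta)\eta_0 T}$ toward $1$ and divides by $\beta\to 2$, which is what forces $r<1$. Recognising that this Chernoff-type trade-off is available, and that it reproduces the characteristic root, is the crux of the argument; the remaining steps (the geometric summation and passage to the limit) are routine.
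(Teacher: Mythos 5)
Your proof is correct, but it takes a genuinely different route from the paper's. The paper works with the individual terms $S_n(t)$ of the series \eqref{AnalyticalH1}, derives the consecutive-term ratio estimate $S_{n+1}(t)\leq \eta_0 e^{2\eta_0 T}\,\frac{t-nT}{n+1}\,S_n(t)\leq \eta_0 t e^{2\eta_0 T} S_n(t)$, and then asserts the termwise bound $S_n(t)\leq \eta_0 e^{2\eta_0 T}e^{-2\eta_0 t}$ for $n\geq 1$, from which the limit is read off. You instead tilt the exponential: the Chernoff-type inequality $x^k/k!\leq \beta^{-k}e^{\beta x}$ converts the entire sum into a geometric majorant with ratio $r=e^{(2-\beta)\eta_0 T}/\beta$, and letting $\beta\uparrow 2$ forces $r<1$, giving $S(t)\leq e^{-(2-\beta)\eta_0 t}/(1-r)$ uniformly in the number of surviving terms. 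Your approach buys two things. First, it genuinely handles the growing number of terms $\lfloor t/T\rfloor+1$, which you correctly identify as the crux; by contrast, the paper's asserted intermediate bound is not literally true as stated (already for $n=1$ one has $S_1(t)=\eta_0(t-T)e^{2\eta_0 T}e^{-2\eta_0 t}$, which exceeds $\eta_0 e^{2\eta_0 T}e^{-2\eta_0 t}$ once $t>T+1$, and iterating the stated recursion only yields $S_n(t)\leq \bigl(\eta_0 t e^{2\eta_0 T}\bigr)^n e^{-2\eta_0 t}$, which carries growing polynomial prefactors), so the paper's sketch needs precisely the kind of uniform majorant you supply to be made rigorous. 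Second, your bound comes with an explicit exponential decay rate $(2-\beta)\eta_0$, whose supremum over admissible $\beta$ is pinned by $e^{\delta\eta_0 T}=2-\delta$ with $\delta=2-\beta$; your identification of this threshold with the real characteristic root of \eqref{nonfractionalandretarded} is accurate and connects the analytic estimate to the spectral abscissa of the delay equation, information the paper's argument does not provide.
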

\begin{proof}
    Denoting by $S_n(t)$ the $n$-term of the sum in \eqref{AnalyticalH1}, we can see that \[S_{n+1}(t)\leq \eta_0 e^{2\eta_0 T}S_n(t)\frac{t-n T}{n +1} \leq \eta_0 t e^{2\eta_0 T} S_n(t).\]
    Applying this recursively, we get $S_n(t)\leq \eta_0 e^{2\eta_0 T}e^{-2\eta_0 t}$, for $n\geq 1$. The result follows by replacing \eqref{AnalyticalH1} and passing the limit when $t\to +\infty$.
\end{proof}
Moreover, in standard cosmology, we have the deceleration parameter $q$, which tells us whether the Universe's expansion is accelerated or decelerated. It is defined as
\begin{align}
    q(t)= -1-\frac{\dot{H}(t)}{H(t)^2}, \label{weffa}
\end{align}
and the function $w_{\text{eff}}$ represents the behaviour of the fluid, given by
\begin{align}
    w_{\text{eff}}(t)= -1- \frac{2 \dot{H}(t)}{3 H(t)^2}  = \frac{(2 q(t) -1)}{3}\label{weff}.
\end{align}
\begin{Proposition}\label{Prop_7b}
In the initial interval $[0,T)$
we have 
\end{Proposition}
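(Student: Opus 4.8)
The plan is to specialize the closed-form solution to the first delay window $[0,T)$, where the infinite series collapses to a single term, and then to substitute directly into the definitions of $q$ and $w_{\text{eff}}$ recorded in \eqref{weffa} and \eqref{weff}.

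First I would invoke Proposition \ref{Prop_5}. For $t \in [0,T)$ one has $\lfloor t/T \rfloor = 0$, so the sum in \eqref{AnalyticalHP1} retains only its $k=0$ term, yielding
\[
H(t) = H_B + (H_0 - H_B)\, e^{-2\eta_0 t},
\]
in agreement with the interval formula already recorded in Proposition \ref{Prop_4} for $y(t)=H(t)-H_B$. Differentiating, since $H_B$ is constant and all $t$-dependence sits in the exponential,
\[
\dot{H}(t) = -2\eta_0 (H_0 - H_B)\, e^{-2\eta_0 t}.
\]

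Next I would insert these two expressions into \eqref{weffa} and \eqref{weff}. Setting $u(t) := (H_0 - H_B)\, e^{-2\eta_0 t}$, so that $H(t) = H_B + u(t)$ and $\dot{H}(t) = -2\eta_0 u(t)$, the definitions give directly
\[
q(t) = -1 + \frac{2\eta_0\, u(t)}{\bigl(H_B+u(t)\bigr)^2}, \qquad
w_{\text{eff}}(t) = -1 + \frac{4\eta_0\, u(t)}{3\bigl(H_B+u(t)\bigr)^2},
\]
after which the relation $w_{\text{eff}} = (2q-1)/3$ is immediate from these two closed forms.

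The computation is entirely algebraic, so there is no genuine analytic obstacle; the essential content is simply the reduction of the series in \eqref{AnalyticalHP1} to its leading term on $[0,T)$, which is exactly what Proposition \ref{Prop_5} and Remark \ref{Rem_5} license. The only point worth flagging is that each formula requires the denominator $H(t)^2$ to be nonzero on the interval, i.e. $H_B + (H_0-H_B)\,e^{-2\eta_0 t} \neq 0$ throughout $[0,T)$; this mild non-degeneracy condition on the initial datum $H_0$ relative to the de Sitter value $H_B$ is what I would state explicitly to complete the argument.
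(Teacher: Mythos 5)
Your proposal is correct and takes essentially the same route as the paper: it restricts \eqref{AnalyticalHP1} to its $k=0$ term on $[0,T)$ (exactly what Remark \ref{Rem_5} and Proposition \ref{Prop_5} license) and substitutes into the definitions \eqref{weffa} and \eqref{weff}, and your closed forms for $q$ and $w_{\text{eff}}$ reduce to \eqref{q-ini} and \eqref{w-ini} once $H_B = 2\eta_0/(3\gamma)$ is inserted. Your explicit non-vanishing caveat $H_B+(H_0-H_B)e^{-2\eta_0 t}\neq 0$ on $[0,T)$ is a sensible addition that the paper leaves implicit.
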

\begin{equation}
      H(t)=H_B+\left(H_0-H_B\right) e^{-2\eta_0 t}, \label{H-ini}
\end{equation}
\begin{equation}
   q(t) = -1-\frac{6 \gamma  \eta_0 e^{2 \eta_0 t} (2 \eta_0-3 \gamma  H_0)}{\left(3 \gamma  H_0+2 \eta_0 \left(e^{2 \eta_0
   t}-1\right)\right)^2},  \label{q-ini}
\end{equation}
\begin{equation}
    w_{\text{eff}}(t) = -1-\frac{4 \gamma  \eta_0 e^{2 \eta_0 t} (2 \eta_0-3 \gamma  H_0)}{\left(3 \gamma  H_0+2 \eta_0 \left(e^{2 \eta_0
   t}-1\right)\right)^2}. \label{w-ini}
\end{equation}
\begin{proof} Proposition \ref{Prop_7b} is proven using \eqref{AnalyticalHP1}, \eqref{weffa} and \eqref{weff}. Using continuity, for $t=0$, 
$H(0) = H_0, \quad q(0) = -1 - \frac{4\eta_0^2}{3\gamma H_0^2} + \frac{2\eta_0}{H_0}
, \quad w_{\text{eff}}(0) = -1 - \frac{8\eta_0^2}{9\gamma H_0^2} + \frac{4\eta_0}{3H_0}
$. \end{proof}
However, using equations \eqref{weffa}, \eqref{weff} which are from the standard model of cosmology, and using \eqref{eqlinearviscosity} to replace $\dot{H}(t)$ we explicitly have the deceleration parameter $q(t)$ and $\omega_{\text{eff}}(t)$, which depends of retarded time $T$ we have for $t>T$:
\begin{align}
  q(t)  & = -1+\frac{3\gamma}{2}- \eta_0 \frac{H(t-T)}{H(t)^2}, \label{q_2} \\
   w_{\text{eff}}(t) & = -1+\gamma- \frac{2}{3}\eta_0 \frac{H(t-T)}{H(t)^2}. \label{weff_2} 
\end{align}
\begin{Proposition}\label{Prop_7}
  As $t\rightarrow +\infty$ the exponential term $e^{-2 \eta_0 t}$ dominates. Therefore
\begin{equation}\lim_{t\rightarrow +\infty} q(t)  = -1, \quad \lim_{t\rightarrow +\infty} \omega_{\text{eff}}(t)  = -1.
\end{equation}
\end{Proposition}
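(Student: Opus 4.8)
The plan is to read off both limits directly from the retarded-time expressions \eqref{q_2} and \eqref{weff_2}, which are valid for $t>T$ and encode all of the late-time behaviour through the single ratio $H(t-T)/H(t)^2$. The only analytic input needed is the limit of the Hubble function itself, so I would organise the argument around reducing everything to Proposition \ref{Prop_6}.

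First I would invoke Proposition \ref{Prop_6} to record that $\lim_{t\to+\infty}H(t)=H_B$, with $H_B=2\eta_0/(3\gamma)$. Because shifting the argument by the fixed delay $T$ cannot alter a limit at infinity, the same exponential-decay estimate used in the proof of Proposition \ref{Prop_6} (the recursive bound $S_{n+1}(t)\le \eta_0 t\,e^{2\eta_0 T}S_n(t)$, applied to the series \eqref{AnalyticalHP1} evaluated at $t-T$) gives $\lim_{t\to+\infty}H(t-T)=H_B$ as well.

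Next, since $\eta_0\neq 0$ and $\gamma\neq 0$ we have $H_B\neq 0$, so $H(t)^2\to H_B^2\neq 0$ and the quotient rule for limits applies:
\[\lim_{t\to+\infty}\frac{H(t-T)}{H(t)^2}=\frac{H_B}{H_B^2}=\frac{1}{H_B}=\frac{3\gamma}{2\eta_0}.\]
Substituting this into \eqref{q_2} and \eqref{weff_2} then yields
\[\lim_{t\to+\infty}q(t)=-1+\frac{3\gamma}{2}-\eta_0\cdot\frac{3\gamma}{2\eta_0}=-1,\qquad \lim_{t\to+\infty}w_{\text{eff}}(t)=-1+\gamma-\frac{2}{3}\eta_0\cdot\frac{3\gamma}{2\eta_0}=-1,\]
which is the assertion.

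The computation is entirely routine; the one point that deserves care — and which I regard as the only genuine obstacle — is the non-vanishing of the denominator. One must verify $H_B\neq 0$ (equivalently, that the late-time attractor is the de Sitter point rather than the empty Minkowski point $H_A=0$) before applying the quotient rule, since otherwise $H(t-T)/H(t)^2$ is an indeterminate form and the exact cancellation of the $\tfrac{3\gamma}{2}$ and $\gamma$ terms would fail. Provided the physically relevant regime $\eta_0,\gamma>0$ is assumed, this is immediate, and the limits follow without further work.
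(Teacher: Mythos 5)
Your proof is correct, and it reaches the conclusion by the same underlying reduction as the paper — everything comes down to Proposition \ref{Prop_6} — but the route through the ratio differs in a way worth noting. The paper's proof is a single line: it cites the raw definitions \eqref{weffa}, \eqref{weff} together with Proposition \ref{Prop_6}. Read literally, that argument needs $\dot{H}(t)\to 0$, which does not follow from $H(t)\to H_B$ alone without extra regularity (one would have to extract it from the explicit series \eqref{AnalyticalHP1} or from the delayed evolution equation). You instead work from the substituted expressions \eqref{q_2} and \eqref{weff_2}, which the paper displays immediately before the proposition and which express $q$ and $w_{\text{eff}}$ purely in terms of values of $H$; this reduces both limits to $\lim_{t\to\infty}H(t-T)/H(t)^2 = 1/H_B$, to which Proposition \ref{Prop_6} applies directly (the shift by the fixed delay $T$ is harmless, as you say). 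Your explicit check that $H_B = 2\eta_0/(3\gamma)\neq 0$ — i.e., that the attractor is the de Sitter point rather than $H_A=0$ — is exactly the hypothesis that makes the quotient rule and the cancellations $\tfrac{3\gamma}{2}-\eta_0/H_B=0$ and $\gamma-\tfrac{2}{3}\eta_0/H_B=0$ legitimate, and it is left implicit in the paper. One small caveat inherited from the paper rather than introduced by you: \eqref{q_2}--\eqref{weff_2} are derived by substituting the nonlinear equation \eqref{eqlinearviscosity} for $\dot H$, while the analytic $H(t)$ of \eqref{AnalyticalHP1} solves the linearized equation \eqref{nonfractionalandretarded}; your argument is valid for $q$, $w_{\text{eff}}$ as the paper defines them for $t>T$, which is the intended reading. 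Net effect: your version is slightly tighter and self-contained, at the cost of being restricted to $t>T$ (irrelevant for a limit at infinity); as a shortcut, once $q\to-1$ is established, $w_{\text{eff}}\to -1$ also follows immediately from the identity $w_{\text{eff}}=(2q-1)/3$ in \eqref{weff}.
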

\begin{proof}
    Consequence of definitions \eqref{weffa}, \eqref{weff}  and Proposition \ref{Prop_6}.
\end{proof}
The exponential growth of the scale factor aligns with inflationary cosmology but may also arise naturally from time-delay effects within our framework. While inflation is widely accepted, alternative theories, such as cyclic cosmologies and emergent spacetime models, challenge conventional views \cite{deCesare:2016rsf}. Cyclic cosmological solutions have been extensively studied, though previous research has primarily focused on cyclic universes centered around a static Einstein universe \cite{Ijjas:2019pyf, Clifton:2007tn, Marosek:2015uza, Steinhardt:2001st} and \cite{Barrow:1995cfa, Saridakis:2018fth}.  
The averaging approach has also been applied to determine periodic behaviors in cosmology \cite{Leon:2021lct, Leon:2021rcx, Leon:2021hxc}. In \cite{Ijjas:2019pyf}, a cyclic cosmological model was proposed in which the scale factor undergoes exponential growth in each cycle. This framework addressed various early-universe problems, including the horizon, isotropy, and flatness issues.

\subsection{Error Estimation}
For a given $t > 0$, and based on Remark \ref{Rem_5}, the sum in \eqref{AnalyticalHP1} is effectively finite. Consequently, the solution can be considered exact, ensuring that the analytical representation accurately characterizes the specified range.
In Section \ref{appF1}, we present an optimized algorithm to implement the exact solutions \eqref{AnalyticalHP1}, \eqref{q-ini} together with \eqref{q_2}, and \eqref{w-ini} together with \eqref{weff_2}, as provided in Section \ref{SECT:3.1}.
Figure \ref{Fig00} shows the analytical solution for $H(t)$ from \eqref{AnalyticalHP1}, where $\gamma = 4/3$ corresponds to radiation and $\gamma = 1$ to matter. In both cases, viscosity, defined by the function $\eta$ as a function of the retarded time $T$ in equation \eqref{eta}, drives the Universe's expansion. Recently, the expansion has approached a de Sitter space-time. The model agrees with current observations and supports Proposition \ref{Prop_6}, showing that a scalar field is not the only way to accelerate the Universe's expansion.
\begin{figure}[h]
\centering
\includegraphics[width=0.5\textwidth]{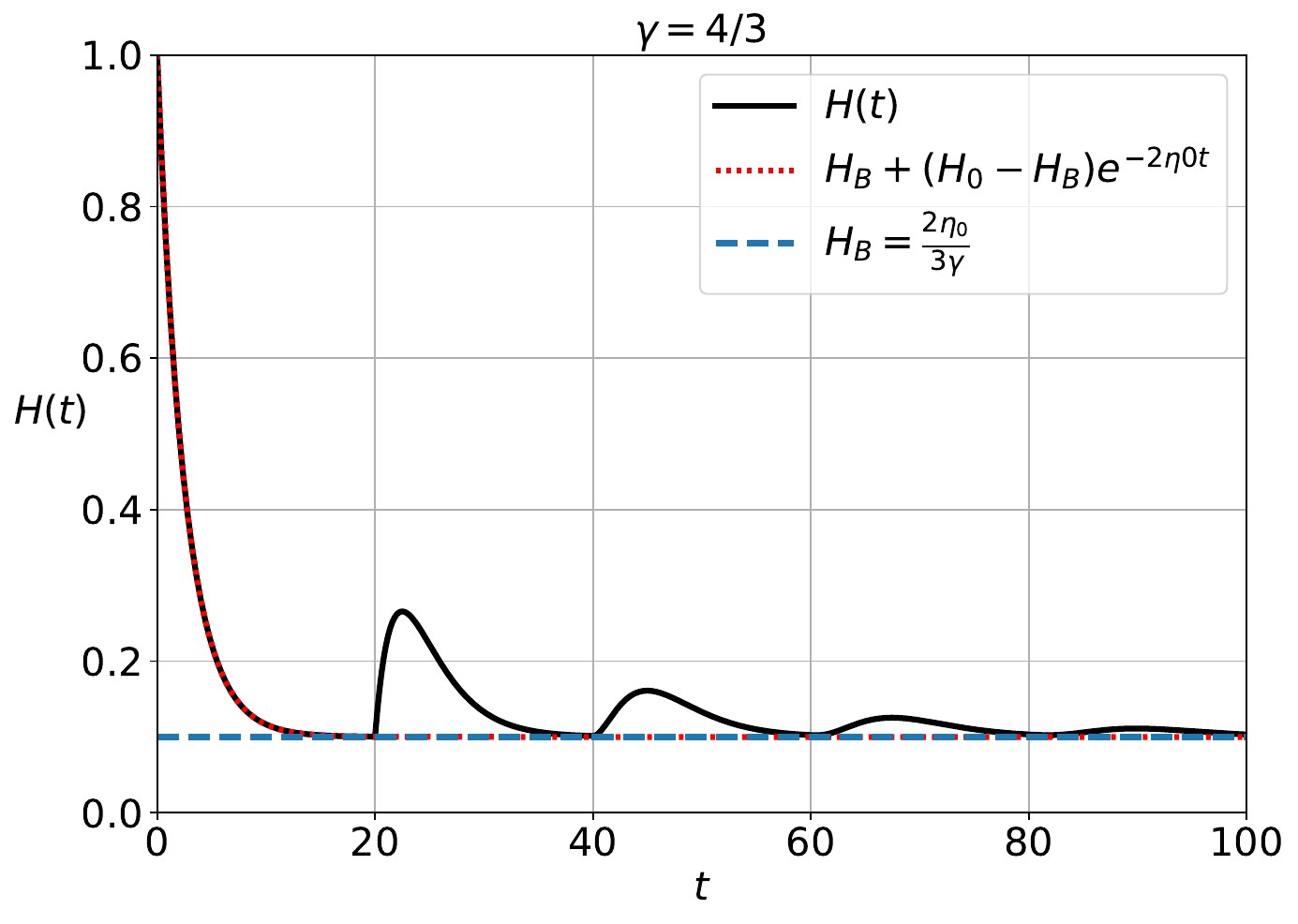}\includegraphics[width=0.5\textwidth]{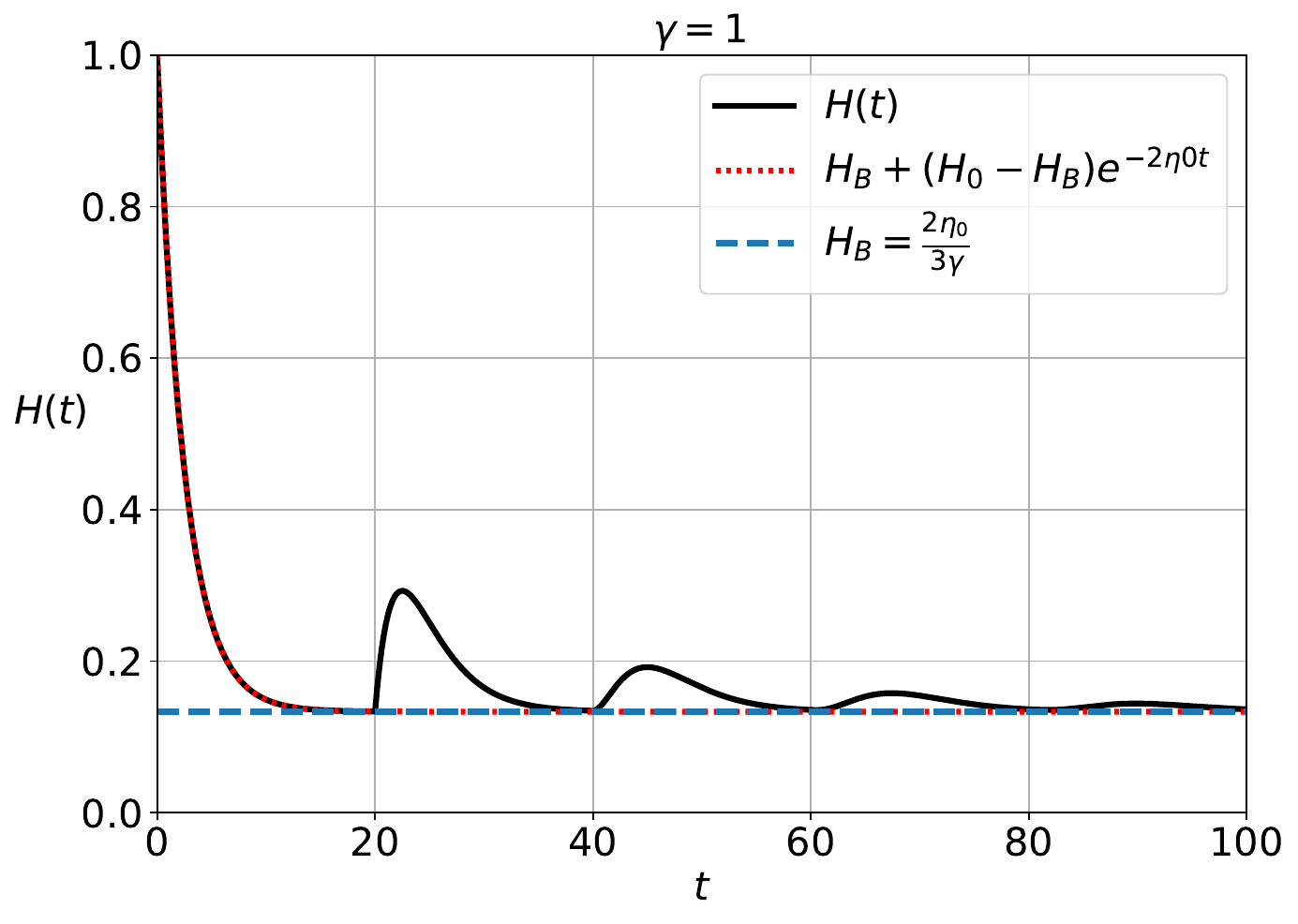}
\caption{\label{Fig00} Analytical solution $H(t)$, for the cases $\gamma=4/3,1$. The other parameters are $\eta_0=0.2$, $T=20$, $H_0=1$, $y_0= H_0-H_B$. The dashed line represents the de Sitter solution.}
\end{figure}
Figure \ref{Fig0ObsB} presents the analytical solutions for $q(t)$ and $\omega_{\text{eff}}(t)$ for $\gamma = 4/3$ and $\gamma = 1$. 
\begin{figure}[h]
\centering
\includegraphics[width=0.5\textwidth]{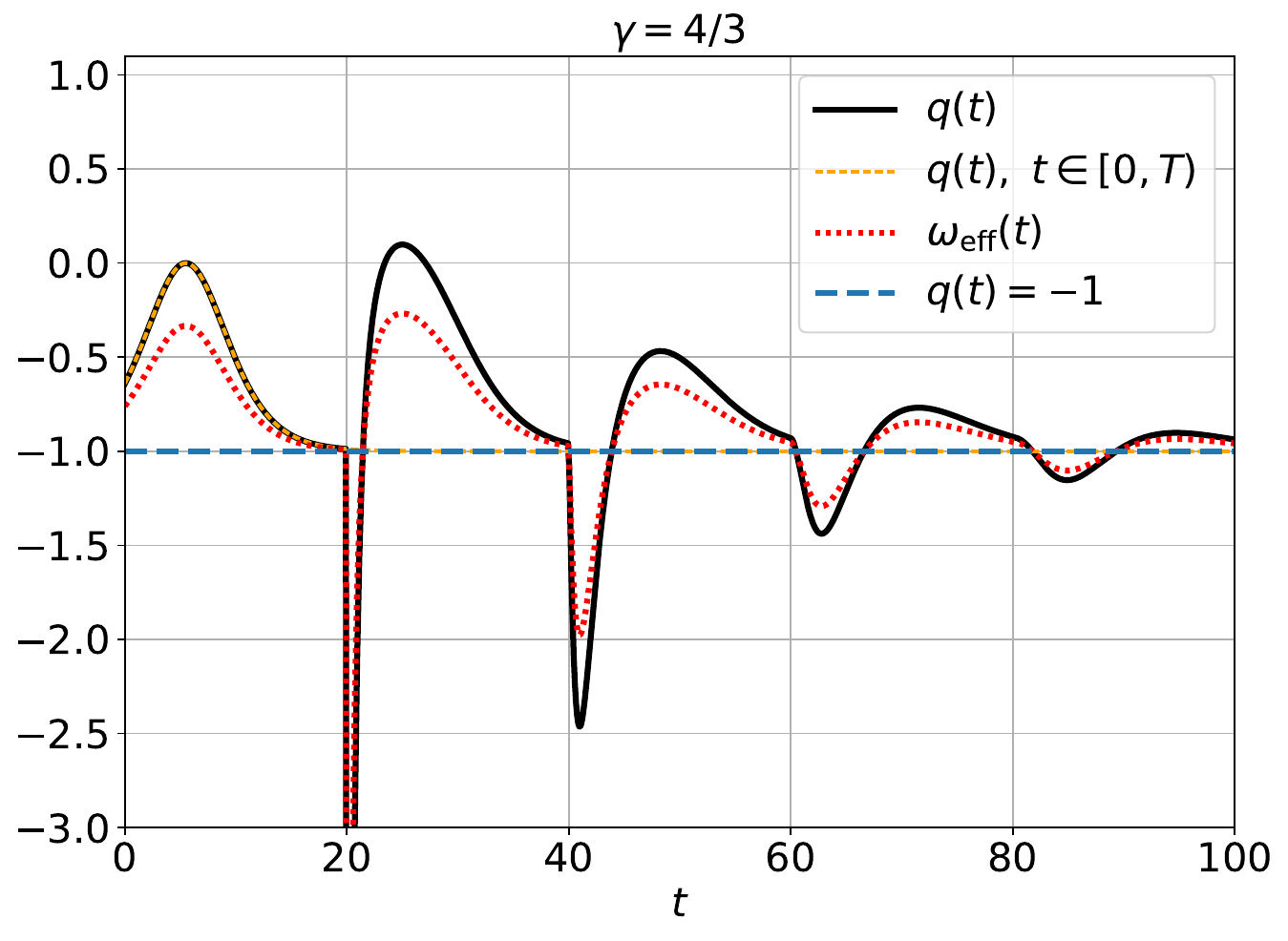}\includegraphics[width=0.5\textwidth]{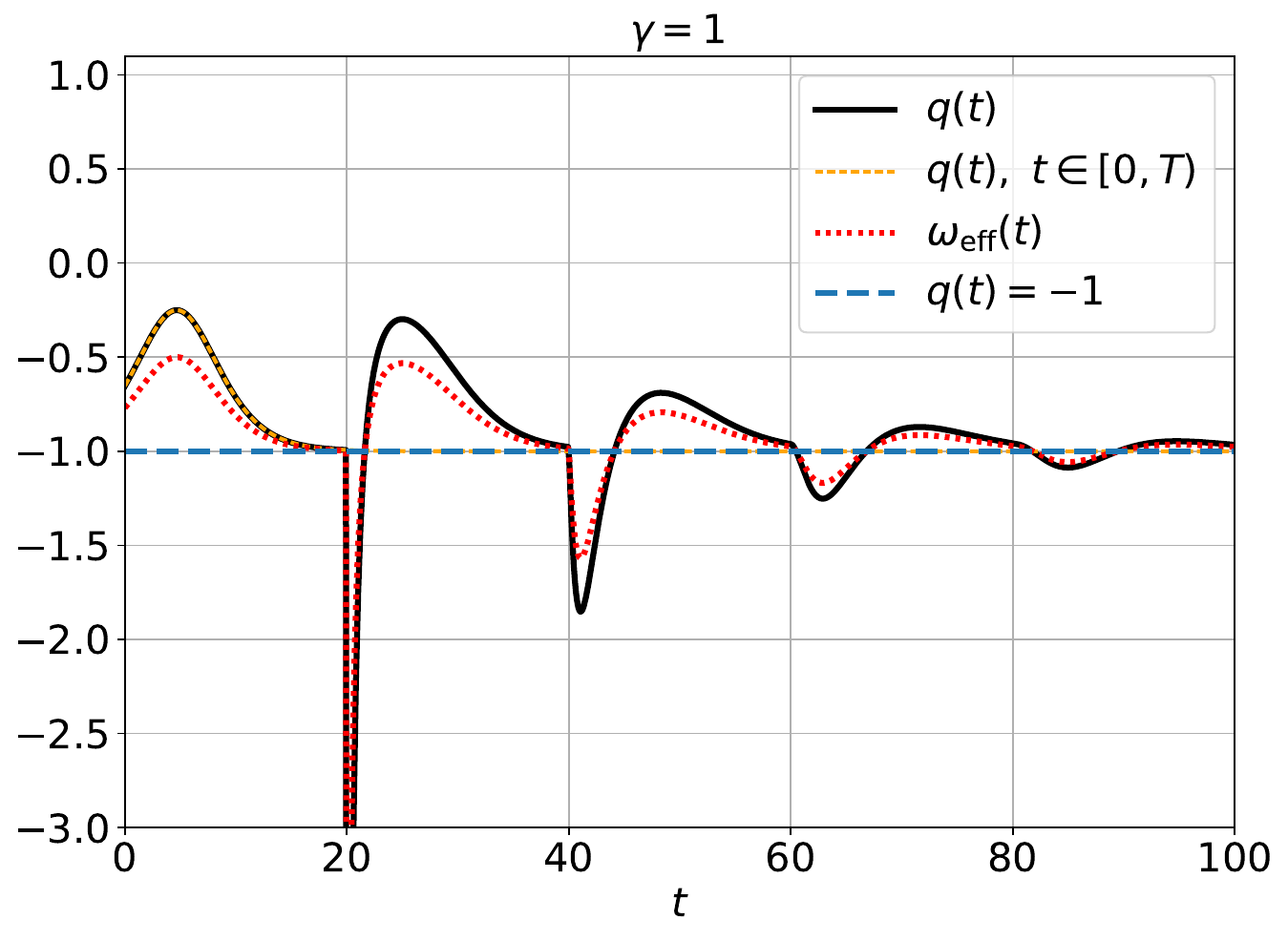}
\caption{\label{Fig0ObsB} Analytical $q(t)$ and $\omega_\text{eff}(t)$ given by \eqref{q_2} and \eqref{weff_2} respectively, for the cases $\gamma=4/3,1$. The other parameters are $\eta_0=0.2$, $T=20$ and $H_0=1$. The minimun values of $q(t)$ and $\omega_\text{eff}(t)$ are $(q=-18.9,\omega_\text{eff}=-12.9)$ for the case $\gamma=4/3$ (radiation), and $(q=-10.7,\omega_\text{eff}=-7.47)$ for the case $\gamma=1$ (matter).}
\end{figure}

\subsection{Discussion}
For the interval $[0,T)$, the definitions \eqref{H-ini}, \eqref{q-ini}, and \eqref{w-ini} describe $H(t)$, $q(t)$, and $w_{\text{eff}}(t)$ as outlined in Proposition \ref{Prop_7b}. 

Based on Propositions \ref{Prop_6} and \ref{Prop_7}, the asymptotic behavior is:
\[
\lim_{t \to +\infty} H(t) = \frac{2 \eta_0}{3 \gamma}, \quad \lim_{t \to +\infty} q(t) = -1, \quad \lim_{t \to +\infty} w_{\text{eff}}(t) = -1,
\]
corresponding to the de Sitter solution.  
This solution emerges after a finite number of phantom epochs, where the effective equation of state satisfies $w_{\text{eff}}(t) < -1$, as illustrated by the numerical results in Figure \ref{Fig0Obs}.
For $\gamma = 4/3$, the deceleration parameter takes on both positive and negative values, while for $\gamma = 1$, it remains consistently negative. The viscosity, modeled by $H$ and evaluated at the retarded time, generates the negative pressure necessary to accelerate the expansion of the universe. Consequently, at late times, both the deceleration parameter and the effective equation of state parameter converge to the expected values characteristic of a de Sitter space-time. These numerical results align with and support Proposition \ref{Prop_7}.
\section{Time-delayed bulk viscosity in Fractional Cosmology}\label{sect.4}
In this section, we promote equation \eqref{nonfractionalandretarded} to the fractional version, which we will solve analytically:
\begin{equation}\label{eqtosolve}
    \leftindex_{}^{\text{C}}D_t^\alpha y(t)=c_1y(t)+c_2y(t-T), \quad y(t)=0 \quad \forall t<0,
\end{equation}
where $\displaystyle \leftindex_{}^{\text{C}}D_t^\alpha$ is the Caputo derivative of order $\alpha$, and in our case, $c_1=-2\eta_0$ and $c_2=\eta_0$. We need to find the solution to this time-delayed fractional differential equation.
\subsection{Problem setting}\label{sect.2}
Our master equation \eqref{eqtosolve} belongs to the class of fractional differential equations:
\begin{equation}
\begin{split}
& {}^C D_t^\alpha y(t) + a y(t - T) + b y(t) = 0, \quad y(t)=0, \; t<0, \\
& y(0) = y_0, \quad y'(0) = y_1, \quad \ldots, \quad y^{(n-1)}(0) = y_{n-1}, \quad n-1<\alpha < n,
\end{split}
\end{equation}
with parameters:
\begin{align*}
\alpha &: \text{order of the fractional derivative}, \\
a &: \text{constant coefficient of the delayed term}, \\
b &: \text{constant coefficient of the linear term}, \\
T &: \text{time delay}.
\end{align*}
\textbf{Derivation Steps}
\begin{enumerate}
\item Start with the differential equation: \({}^C D_t^\alpha y(t) + a y(t - T) + b y(t) = 0.\)
\item Apply the Laplace transform: \(
\mathcal{L}\{{}^C D_t^\alpha y(t)\} + \mathcal{L}\{a y(t - T)\} + \mathcal{L}\{b y(t)\} = 0.
\)
\item Laplace transform of the Caputo fractional derivative: 
\newline\(
\mathcal{L}\{{}^C D_t^\alpha y(t)\} = s^\alpha Y(s) - \sum_{k=0}^{n-1} s^{\alpha-k-1} y^{(k)}(0),\)
where \(Y(s)\) is the Laplace transform of \(y(t)\) and \(y^{(k)}(0)\) are the initial conditions.
\item Laplace transform of the delayed term: \(\mathcal{L}\{y(t - T)\} = e^{-sT} Y(s).\)
\item Laplace transform of the linear response term: \(\mathcal{L}\{y(t)\} = Y(s).\)
\item Substitute into the original equation:
\[
s^\alpha Y(s) - \sum_{k=0}^{n-1} s^{\alpha-k-1} y^{(k)}(0) + a e^{-sT} Y(s) + b Y(s) = 0.
\]
\item Combine terms: \(\left( s^\alpha + a e^{-sT} + b \right) Y(s) - \sum_{k=0}^{n-1} s^{\alpha-k-1} y^{(k)}(0) = 0.\)
\item Characteristic equation: \(\left( s^\alpha + a e^{-sT} + b \right) Y(s) = \sum_{k=0}^{n-1} s^{\alpha-k-1} y^{(k)}(0).\)
\end{enumerate}
The objective is to explore a model using inverse transforms, focusing on its mathematical and physical properties. The investigation involves the following steps:
\begin{enumerate}
\item\textbf{Inverse Transforms}: Inverse transforms, such as the inverse Laplace transform, are powerful tools for solving differential equations. They convert complex differential equations into simpler algebraic forms, making them easier to analyze. Once solutions are obtained in the transformed domain, inverse transforms convert them back to the original domain. This approach helps in understanding the system's behavior.
\item\textbf{Characteristic equation}: The characteristic equation is derived from the differential equation governing the system. It encapsulates the system's key properties and helps determine its stability, oscillatory behavior, and response to external stimuli. By examining the roots of the characteristic equation, we gain insights into the system's dynamics and can predict its long-term behavior.
\item\textbf{Physical Application}: Time-Delayed Bulk Viscosity Cosmology. Consider applying the model to time-delayed bulk viscosity cosmology as a practical example. Bulk viscosity refers to the resistance of cosmic fluids to compression, affecting the Universe's expansion rate. Time delays account for the finite response time of these fluids to changes in pressure and density.
\end{enumerate}
We can develop a more accurate representation of the Universe's evolution by incorporating time delays and bulk viscosity into cosmological models. This approach allows us to:
\begin{enumerate}
\item \textbf{Capture Delayed Reactions}: Time delays introduce memory effects, meaning the system's current state depends on its past states. That is crucial for modeling realistic physical systems where changes do not happen instantaneously.
\item \textbf{Analyze stability}: The characteristic equation provides information about the stability of the cosmological model. By examining the roots, we can determine whether the Universe's expansion will be stable, oscillatory, or exhibit other behaviors.
\item \textbf{Predict Cosmic Evolution}: We can predict how the Universe's expansion rate evolves by solving the time-delayed differential equations. That can help address unresolved issues in cosmology, such as the nature of dark energy and the mechanisms driving accelerated expansion.
\end{enumerate}
Investigating this model using inverse transforms, examining the characteristic equation, and exploring a physical application, such as time-delayed bulk viscosity cosmology, offers a comprehensive approach to understanding complex systems. This method provides valuable insights into the model's mathematical structure and physical behavior, contributing to our knowledge of cosmological dynamics. 

\subsection{Solution}
The Caputo derivative is defined as~\eqref{CaputoD}  where $\alpha\in\mathbb{R}$ and $n\in\mathbb{Z}$. 
Calculating the Laplace transform of the Caputo derivative:  
\begin{equation}
\begin{split}
    \mathcal{L}\left\{\leftindex_{}^{\text{C}}D_t^\alpha y(t)\right\}&=\frac{1}{\Gamma(n-\alpha)}\int_0^\infty \left[\int_0^t\frac{d^n y(\tau)}{d\tau^n}\cdot(t-\tau)^{n-1-\alpha}\,d\tau \right]e^{-st}\,dt.
\end{split}
\end{equation}  
We see that $0\leq t<\infty$ and $0\leq \tau \leq t$, and following the steps of Appendix \ref{app2} we apply the Laplace transform to the equation \eqref{eqtosolve}, and using the Laplace transform of a delayed function \eqref{LaplaceTdelayedf}, we get
\begin{equation}
 s^\alpha \mathcal{L}\left\{y(t)\right\}-\sum _{k=0}^{n-1}s^{\alpha-k-1}y^{(k)}(0)=c_1\mathcal{L}\left\{y(t)\right\}+c_2e^{-sT}\mathcal{L}\left\{y(t)\right\}.
\end{equation}
We can solve for $\mathcal{L}\left\{y(t)\right\}$:
\begin{equation}
\begin{split}
 \mathcal{L}\left\{y(t)\right\}\left(s^\alpha-c_1-c_2e^{-sT}\right)&=\sum_{k=0} ^{n-1}s^{\alpha-k-1}y^{(k)}(0) \implies 
 \mathcal{L}\left\{y(t)\right\} =\frac{\sum _{k=0}^{n-1}s^{\alpha-k-1}y^{(k)} (0)}{s^\alpha-c_1-c_2e^{-sT}}.
\end{split}
\end{equation}
Considering $0<\alpha<1$,  and combining the steps, we get:
\begin{align*}
\mathcal{L}\{y(t)\} &= \frac{s^{\alpha-1} y(0)}{s^\alpha - c_1 - c_2 e^{-sT}} = \frac{y(0)}{s} \left( 1 - \frac{c_1 + c_2 e^{-sT}}{s^\alpha} \right)^{-1} \\
&= \frac{y(0)}{s} \sum_{j=0}^{\infty} \frac{\left(c_1 + c_2 e^{-sT}\right)^j}{s^{\alpha j}}\quad \text{for} \; 0 < \left|\frac{c_1 + c_2 e^{-sT}}{s^\alpha}\right| < 1.
\end{align*}
Using the Newton binomial, we have
\begin{equation}
    \left(c_1+c_2e^{-sT}\right)^j=\sum_{k=0}^j\frac{j!}{k!(j-k)!}c_1^{j-k}c_2^ke^{-skT}.
\end{equation}
The Laplace transform becomes,
\begin{equation}
\begin{split}
    \mathcal{L}\left\{y(t)\right\}&=y(0)\sum_{j=0}^\infty\sum_{k=0}^j\frac{j!}{k!(j-k)!}c_1^{j-k}c_2^k\frac{e^{-skT}}{s^{\alpha j+1}}.
\end{split}
\end{equation}
The inverse Laplace transform gives
\begin{equation}
\begin{split}
    y(t)&=y(0)\sum_{j=0}^\infty\sum_{k=0}^j\frac{j!}{k!(j-k)!}c_1^{j-k}c_2^k\mathcal{L}^{-1}\left[\frac{e^{-skT}}{s^{\alpha j+1}}\right]\\
&=y(0)\sum_{j=0}^\infty\sum_{k=0}^j\frac{j!}{k!(j-k)!}c_1^{j-k}c_2^k\frac{(t-kT)^{\alpha j}\theta(t-kT)}{\Gamma(\alpha j+1)}
\\ 
    &=y(0)\sum_{k=0}^\infty\sum_{j=k}^\infty\frac{j!}{k!(j-k)!}c_1^{j-k}c_2^k\frac{(t-kT)^{\alpha j}\theta(t-kT)}{\Gamma(\alpha j+1)}.
\end{split}
\end{equation}
Remembering that $c_1=-2\eta_0$ and $c_2=\eta_0$, the solution is
\begin{equation}\label{Analyticaly2}
    y(t)=y(0)\sum_{k=0}^\infty\sum_{j=k}^\infty\frac{j!}{k!(j-k)!}(-2)^{j-k}\eta_0^j\frac{(t-kT)^{\alpha j}\theta(t-kT)}{\Gamma(\alpha j+1)},
\end{equation}
which assuming $y(0)=H_0-H_B$ leads to 
\begin{equation}\label{sol}
    H(t)=H_B+\left(H_0-H_B\right)\sum_{k=0}^\infty\sum_{j=k}^\infty\frac{j!}{k!(j-k)!}(-2)^{j-k}\eta_0^j\frac{(t-kT)^{\alpha j}\theta(t-kT)}{\Gamma(\alpha j+1)}.
\end{equation}
We can notice that in the limit  $\alpha\to 1$, we recover the solution without a fractional derivative given by \eqref{AnalyticalH1}.
\begin{Remark}\label{Rem_7}
    For each $t>0$, the external series in \eqref{Analyticaly2} is a finite sum. To see this, note that $\theta(t-kT)=0$ for all $k>t/T$. Then, 
    \[y(t)=\left(H_0-H_B\right)\sum_{k=0}^{\lfloor t/T\rfloor}\sum_{j=k}^\infty\frac{j!}{k!(j-k)!}(-2)^{j-k}\eta_0^j\frac{(t-kT)^{\alpha j}\theta(t-kT)}{\Gamma(\alpha j+1)}.\]
    Furthermore, if we divide the time domain into intervals of length $T$, for each $t$, there exists an $n\in\mathbb{N}_0$, such that $t\in [nT,(n+1)T)$, and  $\lfloor t/T\rfloor=n$.
\end{Remark}
\begin{Proposition}\label{Prop_8}
For each $t>0$,  the solution for 
    \begin{equation}\label{equation(54)}
    \leftindex_{}^{\text{C}}D_t^\alpha y(t)=-2\eta_0 y(t)+\eta_0 y(t-T), \quad y(t)=0 \quad \forall t<0 
\end{equation} 
is  \[y(t)=\left(H_0-H_B\right)\sum_{k=0}^{\lfloor t/T\rfloor}\sum_{j=k}^\infty\frac{j!}{k!(j-k)!}(-2)^{j-k}\eta_0^j\frac{(t-kT)^{\alpha j}\theta(t-kT)}{\Gamma(\alpha j+1)}.\]
For the initial interval $t\in [0, T)$, recalling the definition of the Mittag-Leffler function \eqref{1.1.2.3},
\begin{equation}
y(t)=\left(H_0-H_B\right)\sum_{j=0}^\infty \left(-2 \eta_0 t^{\alpha} \right)^{j} \frac{1}{\Gamma(\alpha j+1)}= \left(H_0-H_B\right) E(\alpha,-2\eta_0 t^{\alpha}).\label{equation(56)}
\end{equation}
\end{Proposition}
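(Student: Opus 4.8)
The plan is to read the general formula off the Laplace-transform computation already completed in this subsection and then specialize it to the first interval in two consistent ways. First I would take the closed form \eqref{Analyticaly2}, obtained by inverting the geometric-series expansion of $\mathcal{L}\{y(t)\}$ term by term with $c_1=-2\eta_0$ and $c_2=\eta_0$, and invoke Remark \ref{Rem_7}: since $\theta(t-kT)=0$ whenever $k>t/T$, for each fixed $t>0$ only the terms with $k\le\lfloor t/T\rfloor$ survive, so the outer sum $\sum_{k=0}^\infty$ may be replaced by $\sum_{k=0}^{\lfloor t/T\rfloor}$. Substituting $y(0)=H_0-H_B$ then gives precisely the asserted finite double sum, establishing the first displayed equation of the statement.

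For the initial interval $t\in[0,T)$ I would use that $\lfloor t/T\rfloor=0$, so the outer sum collapses to its $k=0$ term alone. Setting $k=0$ makes the binomial coefficient equal to $1$, turns $(-2)^{j-k}$ into $(-2)^j$, and gives $\theta(t-0\cdot T)=\theta(t)=1$ for $t>0$; the double sum therefore reduces to $\sum_{j=0}^\infty(-2)^j\eta_0^j\,t^{\alpha j}/\Gamma(\alpha j+1)=\sum_{j=0}^\infty(-2\eta_0 t^\alpha)^j/\Gamma(\alpha j+1)$. Comparing with the definition \eqref{1.1.2.3} of the Mittag-Leffler function identifies this series with $E(\alpha,-2\eta_0 t^\alpha)$, which yields \eqref{equation(56)}. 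As an independent check I would observe that the prescribed history $y(t)=0$ for $t<0$ forces $y(t-T)=0$ on $[0,T)$, so \eqref{equation(54)} reduces there to the fractional relaxation equation $\leftindex_{}^{\text{C}}D_t^\alpha y(t)=-2\eta_0\,y(t)$ with $y(0)=H_0-H_B$, whose classical solution is $y(0)\,E(\alpha,-2\eta_0 t^\alpha)$, in agreement with the algebraic collapse.

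The routine simplifications are not the difficulty; the genuine subtlety lies in justifying the term-by-term inversion of the Laplace transform behind \eqref{Analyticaly2}. The expansion $\bigl(1-(c_1+c_2e^{-sT})/s^\alpha\bigr)^{-1}=\sum_{j=0}^\infty(c_1+c_2e^{-sT})^j/s^{\alpha j}$ is valid only where $0<\bigl|(c_1+c_2e^{-sT})/s^\alpha\bigr|<1$, that is, for $s$ of sufficiently large real part (cf. Remark \ref{rem1}), so the main obstacle is to argue that interchanging the inverse transform with the double summation is permissible and that the resulting locally finite series---finite for each $t$ by Remark \ref{Rem_7}---actually solves \eqref{equation(54)}. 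I expect this to be handled by the method-of-steps structure already encoded in the floor-function cutoff: one verifies interval by interval that the truncated sum satisfies the Caputo equation, using on $[nT,(n+1)T)$ the solution already determined on $[0,nT)$ as the forcing history, which also furnishes a direct, transform-free proof should the inversion step require further justification.
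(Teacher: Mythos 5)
Your argument is correct and takes essentially the same route as the paper: the paper's proof is precisely ``apply Remark \ref{Rem_7}'' to the Laplace-derived formula \eqref{Analyticaly2} with $y(0)=H_0-H_B$, and the $k=0$ collapse on $[0,T)$ yielding $E(\alpha,-2\eta_0 t^\alpha)$ matches the paper's treatment of the initial interval. Your supplementary method-of-steps check (the vanishing history forces $y(t-T)=0$ on $[0,T)$, reducing \eqref{equation(54)} to the fractional relaxation equation) is a sound verification the paper does not spell out; just note that only the outer sum over $k$ is finite by Remark \ref{Rem_7} --- the inner $j$-series remains an infinite, convergent Mittag-Leffler-type series, so calling the result a ``finite double sum'' is slightly loose.
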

\begin{proof} Proposition \ref{Prop_8} is proven directly by applying Remark \ref{Rem_7}. \end{proof}

We analyze the convergence of the partial sums:
\begin{equation}
S_n(t) = \sum_{k=0}^{\lfloor t/T\rfloor}\sum_{j=k}^n \frac{j!}{k!(j-k)!} (-2)^{j-k} \eta_0^j \frac{(t-kT)^{\alpha j} \theta(t-kT)}{\Gamma(\alpha j+1)} \quad \text{as} \quad n \to \infty.
\end{equation}
\begin{Proposition}\label{Prop_9}
  For each $t>0$, 
\begin{align}\label{solC}
    H(t)& =H_B+\left(H_0-H_B\right) \lim_{n \rightarrow \infty}\sum_{k=0}^{\lfloor t/T\rfloor}\sum_{j=k}^n\frac{j!}{k!(j-k)!}(-2)^{j-k}\eta_0^j\frac{(t-kT)^{\alpha j}\theta(t-kT)}{\Gamma(\alpha j+1)},
\\
   a(t) & =  e^{ H_B t} \lim_{n \rightarrow \infty} \prod_{k=0}^{\lfloor t/T\rfloor} \prod_{j=k}^n  \exp \left(\frac{ \eta_{0}^j (-2)^{j-k} \Gamma (j+1) \left(H_0-H_B\right) \theta (t-k T) (t-k T)^{\alpha  j+1}}{\Gamma (k+1) \Gamma (j \alpha +2)
   \Gamma (j-k+1)}\right).
\end{align}
\end{Proposition}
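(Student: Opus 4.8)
The plan is to establish the two formulas in Proposition \ref{Prop_9} by reducing them to results already proven in the excerpt, namely the analytical solution \eqref{sol} and the scale-factor integration \eqref{Togeta(t)}. For the Hubble parameter, I would observe that Proposition \ref{Prop_8} and Remark \ref{Rem_7} already give the solution $y(t)$ of \eqref{equation(54)} as a double sum in which the outer index $k$ is truncated at $\lfloor t/T\rfloor$, while the inner sum over $j$ runs to infinity. The content of \eqref{solC} is simply to write that inner infinite sum as the limit of its partial sums $S_n(t)$, and then substitute $H(t)=H_B+(H_0-H_B)y(t)/y(0)$ using the identification $y(0)=H_0-H_B$. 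Thus the first identity follows immediately once the inner series is recognized as convergent, so that replacing $\sum_{j=k}^{\infty}$ by $\lim_{n\to\infty}\sum_{j=k}^{n}$ is legitimate; convergence is exactly the point addressed by the discussion of $S_n(t)$ preceding the statement.

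For the scale factor I would start from the relation $a(t)=\exp\left[\int H(t)\,dt\right]$ in \eqref{Togeta(t)}. Writing $H(t)=H_B+(H_0-H_B)\,y(t)/y(0)$ and using linearity of the integral, the $H_B$ term integrates to $H_B t$, producing the prefactor $e^{H_B t}$. The remaining task is to integrate the double series term by term. Since for each fixed $t$ the outer sum is finite (by Remark \ref{Rem_7}) and the inner sum converges, I would justify interchanging integration with the (finite) outer sum and the convergent inner sum, then integrate each monomial $(t-kT)^{\alpha j}\theta(t-kT)$ in $t$. The antiderivative of $(t-kT)^{\alpha j}$ is $(t-kT)^{\alpha j+1}/(\alpha j+1)$, and exponentiating the sum of integrated terms turns the sum inside the exponential into the product $\prod_{k}\prod_{j}\exp(\cdots)$ displayed in the statement; the Gamma-function bookkeeping $\Gamma(\alpha j+1)(\alpha j+1)=\Gamma(\alpha j+2)$ together with $j!=\Gamma(j+1)$ and $k!=\Gamma(k+1)$ reproduces the stated integrand.

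The main obstacle I anticipate is the rigorous justification of the term-by-term integration and of replacing the infinite inner series by a limit of partial sums inside the exponential. Concretely, one must verify that the inner series $\sum_{j=k}^{\infty} \frac{j!}{k!(j-k)!}(-2)^{j-k}\eta_0^j \frac{(t-kT)^{\alpha j+1}}{\Gamma(\alpha j+2)}$ converges uniformly on compact $t$-intervals, so that integration and summation commute and the exponential of a convergent series equals the convergent product of exponentials. I would handle this by a ratio-type estimate on the general term, analogous to the recursive bound $S_{n+1}(t)\le \eta_0 t\,e^{2\eta_0 T}S_n(t)$ used in the proof of Proposition \ref{Prop_6}: the factorial ratio $\binom{j}{k}$ grows only polynomially in $j$ for fixed $k$, while the Mittag-Leffler-type denominator $\Gamma(\alpha j+2)$ dominates, guaranteeing absolute and locally uniform convergence. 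Once this estimate is in place, the remaining steps are the routine monomial integration and exponentiation described above, so the proof reduces to invoking \eqref{sol}, \eqref{Togeta(t)}, Proposition \ref{Prop_8}, and Remark \ref{Rem_7}.
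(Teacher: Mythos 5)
Your proposal is correct and takes essentially the same route as the paper, which offers no separate proof of Proposition \ref{Prop_9} but presents it as an immediate consequence of the series solution \eqref{sol}, Remark \ref{Rem_7}, and termwise integration of $H(t)$ via \eqref{Togeta(t)}, in exact parallel to the proof of Proposition \ref{Prop_5} in the non-fractional case (there the per-term exponential factor produces incomplete Gamma functions, while here the pure power law $(t-kT)^{\alpha j}/\Gamma(\alpha j+1)$ integrates to $(t-kT)^{\alpha j+1}/\Gamma(\alpha j+2)$, exactly as you compute). Your added justification of absolute, locally uniform convergence of the inner Mittag-Leffler-type series — the binomial factor $\frac{j!}{k!(j-k)!}$ grows only polynomially in $j$ for fixed $k$ while $\Gamma(\alpha j+1)$ dominates every geometric factor — supplies a detail the paper leaves implicit, and is sound.
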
 
Moreover, from \eqref{weffa} and \eqref{weff}, we have 
\begin{Proposition}
For each $t>0$, 
 
    \begin{align}
    & q =  -1-\frac{\left(H_0-H_B\right) \lim_{n \rightarrow \infty}\sum_{k=0}^{\lfloor t/T\rfloor}\sum_{j=k}^n\frac{j!}{k!(j-k)!}(-2)^{j-k}\eta_0^j \alpha j \frac{(t-kT)^{\alpha j-1}\theta(t-kT)}{\Gamma(\alpha j+1)}}{{\left(H_B+\left(H_0-H_B\right) \lim_{n \rightarrow \infty}\sum_{k=0}^{\lfloor t/T\rfloor}\sum_{j=k}^n\frac{j!}{k!(j-k)!}(-2)^{j-k}\eta_0^j\frac{(t-kT)^{\alpha j}\theta(t-kT)}{\Gamma(\alpha j+1)}\right)}^2}, \label{q-model}
    \\
    & w_{\text{eff}}  =(2 q -1)/3. \label{w-model}
\end{align}

For the initial interval $t\in (0, T)$, 
 
  \begin{align}
    q =-1-\frac{\left(H_0-H_B\right) \frac{d}{d t}E(\alpha, -2\eta_0 t^{\alpha} )}{{\left(H_B+\left(H_0-H_B\right) E(\alpha, -2\eta_0 t^{\alpha} )\right)}^2}  & = -1+\frac{6 \gamma  \eta_0 t^{\alpha-1} (3 \gamma  H_0-2 \eta_0) E\left(\alpha ,\alpha, -2 t^{\alpha } \eta_0\right)}{\left(2 \eta_0+(3\gamma  H_0-2 \eta_0)E\left(\alpha, -2 t^{\alpha }
   \eta_0\right)\right){}^2},\label{ini_q-model}\\ 
    w_{\text{eff}} & = -1+\frac{4 \gamma  \eta_0 t^{\alpha-1} (3 \gamma  H_0-2 \eta_0) E\left(\alpha ,\alpha, -2 t^{\alpha } \eta_0\right)}{\left(2 \eta_0+(3\gamma  H_0-2 \eta_0)E\left(\alpha, -2 t^{\alpha }
   \eta_0\right)\right){}^2}. \label{ini_w-model}
\end{align}

\end{Proposition}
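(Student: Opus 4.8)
The plan is to obtain both pairs of formulas by direct substitution of the explicit solution for $H(t)$ into the kinematic definitions \eqref{weffa} and \eqref{weff}, differentiating the series representation term by term. The only quantity not yet available is $\dot H(t)$, so the whole argument reduces to computing this derivative in two regimes and then assembling $q$ and $w_{\text{eff}}$.

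For the general case $t>0$, I would start from the representation \eqref{solC} of Proposition \ref{Prop_9}. By Remark \ref{Rem_7} the outer sum over $k$ is finite for each fixed $t$, and the inner sum in $j$ is, up to the factor $(-2)^{-k}\eta_0^k/k!$, a convergent power series in $(t-kT)^{\alpha}$ with the growth of a Mittag-Leffler function; hence on any compact subinterval not containing the switching points $kT$ one may differentiate term by term. Differentiating each summand using $\tfrac{d}{dt}(t-kT)^{\alpha j}=\alpha j\,(t-kT)^{\alpha j-1}$ produces exactly the numerator appearing in \eqref{q-model}, while $H(t)^2$ supplies the denominator; substituting into $q=-1-\dot H/H^2$ yields \eqref{q-model}. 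The relation \eqref{w-model}, $w_{\text{eff}}=(2q-1)/3$, is then immediate from \eqref{weff} after writing $-2\dot H/(3H^2)=\tfrac{2}{3}(q+1)$.

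For the initial interval $t\in(0,T)$, all terms with $k\geq 1$ vanish because $\theta(t-kT)=0$, so Proposition \ref{Prop_8} (equation \eqref{equation(56)}) gives the closed form $H(t)=H_B+(H_0-H_B)\,E(\alpha,-2\eta_0 t^{\alpha})$. I would then use the differentiation identity for the Mittag-Leffler function, $\tfrac{d}{dt}E(\alpha,\lambda t^{\alpha})=\lambda t^{\alpha-1}E(\alpha,\alpha,\lambda t^{\alpha})$, which follows from the series definition \eqref{1.1.2.3} together with $\Gamma(\alpha j+1)=\alpha j\,\Gamma(\alpha j)$ and a shift $j\mapsto j+1$ of the summation index. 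With $\lambda=-2\eta_0$ this gives $\dot H=-2\eta_0(H_0-H_B)t^{\alpha-1}E(\alpha,\alpha,-2\eta_0 t^{\alpha})$. Inserting $H_0-H_B=(3\gamma H_0-2\eta_0)/(3\gamma)$ and $H=\big(2\eta_0+(3\gamma H_0-2\eta_0)E(\alpha,-2\eta_0 t^{\alpha})\big)/(3\gamma)$, the factors of $3\gamma$ combine so that $-\dot H/H^2$ reproduces the second expression in \eqref{ini_q-model}; multiplying the fractional part by $2/3$ yields \eqref{ini_w-model}.

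The main obstacle is the interchange of differentiation with the infinite inner summation (and with the limit $n\to\infty$ in \eqref{solC}): one must confirm that the differentiated series converges uniformly on compact $t$-intervals away from the delay points $kT$, which is what legitimizes the term-by-term calculation. This is secured by the Mittag-Leffler-type growth of the inner sum, and the remaining work—deriving the Mittag-Leffler derivative identity and carrying out the $3\gamma$ bookkeeping to reach the stated closed forms—is routine algebra.
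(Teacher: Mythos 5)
Your proposal is correct and follows essentially the same route as the paper, which obtains the proposition simply by substituting the series solution \eqref{solC} (and, on $(0,T)$, the Mittag-Leffler closed form \eqref{equation(56)} of Proposition \ref{Prop_8}) into the kinematic definitions \eqref{weffa} and \eqref{weff}, using the identity $\frac{d}{dt}E(\alpha,\lambda t^{\alpha})=\lambda t^{\alpha-1}E(\alpha,\alpha,\lambda t^{\alpha})$ and the substitutions $H_B=2\eta_0/(3\gamma)$, $H_0-H_B=(3\gamma H_0-2\eta_0)/(3\gamma)$. Your explicit justification of the term-by-term differentiation (uniform convergence on compact intervals away from the switching points $kT$) is a detail the paper leaves implicit, and your algebra reproduces \eqref{q-model}--\eqref{ini_w-model} exactly.
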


\subsection{Error and Smooth Transition Correction}
Now, we specify the number of terms used in the series expansion and discuss the criteria for convergence. Additionally, we provide details on computational runtime, round-off errors, and measures taken to ensure numerical stability.

We define the function:
\begin{equation}\label{solB}
H_n(t) = H_B + \left(H_0 - H_B\right) \sum_{k=0}^{\lfloor t/T\rfloor}\sum_{j=k}^n \frac{j!}{k!(j-k)!} (-2)^{j-k} \eta_0^j \frac{(t-kT)^{\alpha j} \theta(t-kT)}{\Gamma(\alpha j+1)}.
\end{equation}
The round-off error in approximating \( H(t) \) by truncating the series at \( n \) is given by:
\begin{align}
H(t)-H_n(t) &= \left(H_0 - H_B\right) \sum_{k=0}^{\lfloor t/T\rfloor} \frac{\theta(t-kT)}{(-2)^k k!} \sum_{j=n+1}^{\infty} \frac{j!}{(j-k)!} (-2\eta_0)^{j} \frac{(t-kT)^{\alpha j}}{\Gamma(\alpha j+1)}.
\end{align}
Expanding the falling factorial,
\begin{equation}
\frac{j!}{(j-k)!} = j \cdot (j-1) \cdots (j-k+1),
\end{equation}
which asymptotically simplifies to:
\begin{equation}
\frac{j!}{(j-k)!} \sim j^k, \quad \text{for large } j.
\end{equation}
Approximating the gamma function using Stirling's formula, we apply upper and lower bounds given in \cite{Karatsuba2001, Mortici2011a, Mortici2011b, Mortici2011c}:
\begin{equation}
\sqrt{\pi} \left(\frac{x}{e}\right)^x \left( 8x^3 + 4x^2 + x + \frac{1}{100} \right)^{1/6} < \Gamma(1+x) <  \sqrt{\pi} \left(\frac{x}{e}\right)^x \left( 8x^3 + 4x^2 + x + \frac{1}{30} \right)^{1/6}.
\end{equation}
Taking the reciprocal:
\begin{equation}
\frac{1}{\Gamma(1+x)} \sim\left(\frac{e}{x}\right)^x \left(\frac{\sqrt{\frac{1}{x}}}{\sqrt{2 \pi}} + \mathcal{O}\left(\left(\frac{1}{x}\right)^{3/2}\right)\right).
\end{equation}
By substituting \( x = \alpha j \), we approximate:
\begin{equation}
\frac{j!}{(j-k)! \cdot \Gamma(\alpha j + 1)} \sim j^k \cdot \left(\frac{e}{{\alpha j}}\right)^{\alpha j} \frac{1}{\sqrt{2 \pi \alpha j}}.
\end{equation}
Expanding the denominator,
\begin{equation}
\left(\frac{e}{\alpha j}\right)^{\alpha j} = (\alpha j)^{-\alpha j} \cdot e^{\alpha j},
\end{equation}
which simplifies further:
\begin{equation}
\frac{j!}{(j-k)! \cdot \Gamma(\alpha j + 1)} \sim \frac{j^k \cdot e^{\alpha j}}{\sqrt{2\pi \alpha j} \cdot (\alpha j)^{\alpha j}}.
\end{equation}
Thus, the final asymptotic approximation:
\begin{equation}
\frac{j!}{(j-k)! \cdot \Gamma(\alpha j + 1)} \sim \frac{e^{\alpha j} \cdot j^{k - \alpha j}}{\sqrt{2\pi \alpha j} \cdot \alpha^{\alpha j}}.
\end{equation}
Hence,
\begin{equation}
\frac{H(t)-H_n(t)}{H_0 - H_B} \sim  \sum_{k=0}^{\lfloor t/T\rfloor} \frac{\theta(t-kT)}{(-2)^k k!} \sum_{j=n+1}^{\infty} \frac{e^{\alpha j} j^{k - \alpha j}}{\sqrt{2\pi \alpha j} \alpha^{\alpha j}} (-2\eta_0)^{j} (t-kT)^{\alpha j}, \quad \text{for large } n.
\end{equation}
Breaking down the terms:
\begin{itemize}
    \item \( \lfloor t/T \rfloor \): Integer division of \( t \) by \( T \), which determines the upper bound of summation.
    \item \( \theta(t - kT) \): Heaviside step function, restricting contributions to cases where \( t \geq kT \).
    \item \( \frac{1}{k! (2)^k} \): A rapidly decaying term as \( k \) increases, primarily driven by factorial growth in the denominator.
    \item \( e^{\alpha j} \): An exponentially growing term, significant for large \( j \).
    \item \( j^{k-\alpha j} \): A power-law term dependent on \( k - \alpha j \), which decays when \( \alpha j > k \).
    \item \( (-2\eta_0)^j \): An exponential factor determined by \( \eta_0 \), growing or decaying based on \(|2\eta_0|\).
    \item \( (t - kT)^{\alpha j} \): A power-law term influenced by \( (t - kT) \):
    \begin{itemize}
        \item If \( t - kT > 1 \), this term increases with \( \alpha j \).
        \item If \( 0 < t - kT < 1 \), it decreases with \( \alpha j \).
        \item If \( t - kT = 1 \), it remains constant.
    \end{itemize}
    \item \( \sqrt{2\pi \alpha j} \): A polynomial decay factor in the denominator, which is subleading.
    \item \( \alpha^{\alpha j} \): A dominant exponential decay term in the denominator for large \( j \).
\end{itemize}
By combining all terms, the dominant contribution for large \( n \) comes from \( k = 0 \):
\begin{equation}
H(t) - H_n(t) \sim \left(H_0 - H_B\right) (-1)^{n+1} \frac{1}{\sqrt{2\pi \alpha (n+1)}} \cdot \left[\frac{e (2\eta_0)^{\frac{1}{\alpha}} \cdot t }{\alpha(n+1)} \right]^{\alpha (n+1)}. \label{error}
\end{equation}
The error approaches zero as \( n \) increases, provided that:
\begin{equation}
0 < t < t(n) = \frac{\alpha(n+1)}{e (2\eta_0)^{\frac{1}{\alpha}} }.
\end{equation}
Under these conditions, we obtain:
\begin{equation}
H(t) = \lim_{n\rightarrow \infty} H_n(t), \quad 0 < t < \infty.
\end{equation}
For a given $n$, the time to achieve the tolerance is 
\[t_{\text{Tol}}(n)=  t(n) \left(\frac{\sqrt{2 \pi \alpha (n+1)}\text{Tolerance}}{ |H_0 - H_B|}\right)^{\frac{1}{\alpha (n+1)}}\]
Defining $t_{\text{Final}} = \max\{t(n), t_{\text{Tol}}(n)\}$, a reasonable stopping condition is 
\[t_{\text{Final}}< t_{\text{Limit}}\quad \text{and} \quad |H_0 - H_B| \cdot \frac{1}{\sqrt{2 \pi \alpha (n+1)}} \cdot \left(\frac{t}{t(n)}\right)^{\alpha (n+1)} > \text{Tolerance}.\]
Therefore, $|H(t) - H_n(t)|< \text{Tolerance}$ for $t>t_{\text{Final}}$ and $n>n_{\text{max}}$. 

To ensure a smooth solution, we introduce the mollifier:
\begin{equation}
S(t, t_{\text{Final}}) = 1 - \exp\left(-\left(\frac{t - t_{\text{Final}}}{T}\right)^2\right).\label{mollifier}
\end{equation}
Mollifiers are functions used in numerical analysis and differential equations to ensure smooth transitions between values, preventing discontinuities \cite{lions1796quelques,brezis2011functional}. They allow:   
\begin{itemize}
    \item \textbf{Smooth Transition:} The exponential decay gradually introduces the correction as \( t \) exceeds \( t_{\text{Final}} \).
    \item \textbf{Bounded Between 0 and 1:} The function transitions smoothly from 0 at \( t = t_{\text{Final}} \) and asymptotically approaches 1 as \( t \) increases.
    \item \textbf{Gaussian-like Decay:} The squared term in the exponent resembles a Gaussian mollifier, controlling the transition rate.
    \item \textbf{Discontinuity Prevention:} This function ensures corrections are applied smoothly rather than abruptly, preserving continuity.
\end{itemize}
Thus, the final computational solution is:
\begin{equation}
H_n^{\text{corrected}}(t)= H_n(t) + S(t, t_{\text{Final}}) \left(H_0 - H_B\right) (-1)^{n+1} \frac{1}{\sqrt{2\pi \alpha (n+1)}} \cdot \left[\frac{t }{t(n)} \right]^{\alpha (n+1)}. \label{H_summation}
\end{equation}
Appendix \ref{sect_mollifier} presents an algorithm for approximating \( H(t) \) using equation \eqref{H_summation} with the mollifier \eqref{mollifier}. Figure \ref{Fig0} illustrates analytical \( H_m(t) \) for \( m=500 \) from \eqref{solB}, with parameters \( \alpha=0.9 \), \( \gamma=4/3,1 \), \( \eta_0=0.2 \), \( T=20 \), and \( H_0=1 \). 
Using \( H_n^{\text{corrected}}(t) \) from \eqref{H_summation} for \( n=114 \) ensures best accuracy, satisfying 
\(
|H(t)-H_n^{\text{corrected}}(t)| <
\begin{cases} 
0.004, & \text{if } \gamma = \frac{4}{3} \\
0.005, & \text{if } \gamma = 1
\end{cases}
\) with fewer terms
for \( t>t_{\text{max}} \approx 105.39 \), while maintaining \( |H_n^{\text{corrected}}(t)-H_{m}(t)|<0.025 \), significantly reducing computational cost. Figure \ref{Fig00} compares \( H_n^{\text{corrected}}(t) \), defined by \eqref{H_summation} for \( n=114 \), with Mittag-Leffler \( H_B + ( H_0 - H_B ) E (\alpha , - 2 \eta_0 t \alpha ) \), using the same parameters.
\begin{figure}[h]
    \centering
    \includegraphics[width=0.49\linewidth]{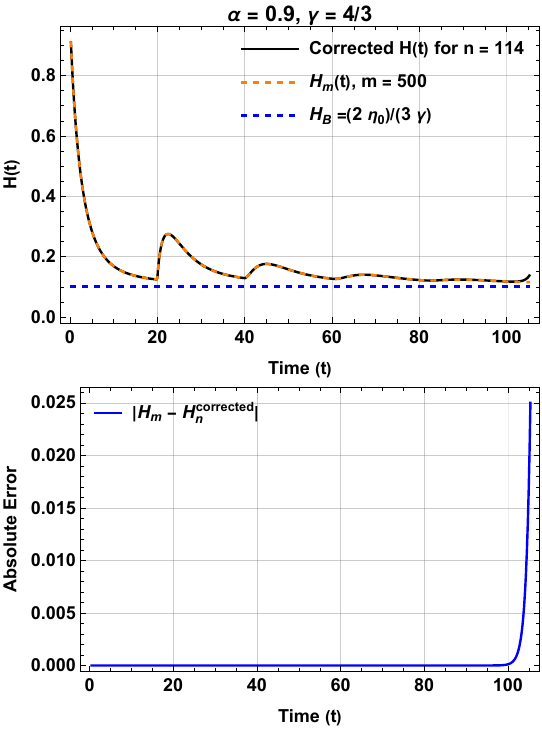}
      \includegraphics[width=0.49\linewidth]{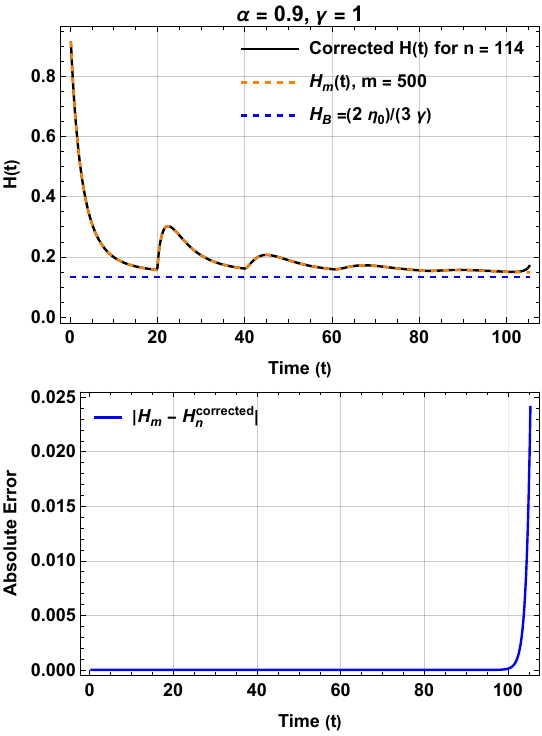}
    \caption{\label{Fig0} Analytical $H_m(t)$ for $m=500$ from \eqref{solB}, with $\alpha=0.9$, $\gamma=4/3,1$ and $\eta_0=0.2$, $T=20$, and $H_0=1$.  Using $H_n^{\text{corrected}}(t)$ defined by \eqref{H_summation} for $n=114$ achieves high accuracy with fewer terms.}
\end{figure}
\begin{figure}[h]
    \centering
    \includegraphics[width=0.49\linewidth]{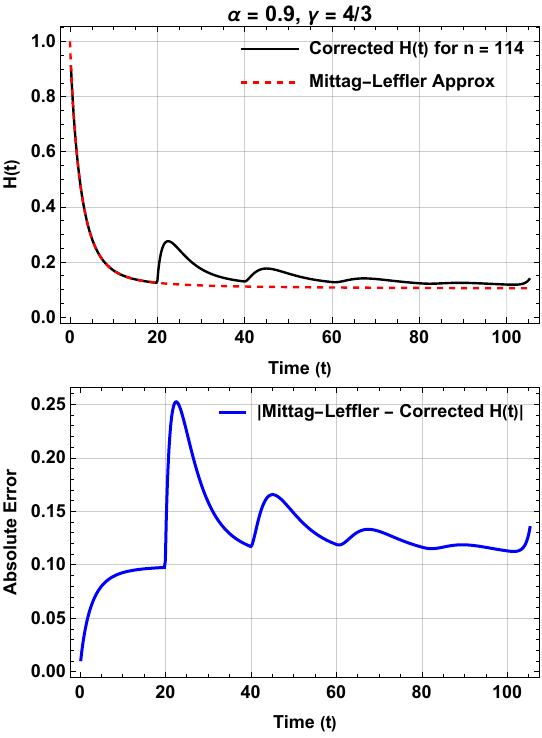}
      \includegraphics[width=0.49\linewidth]{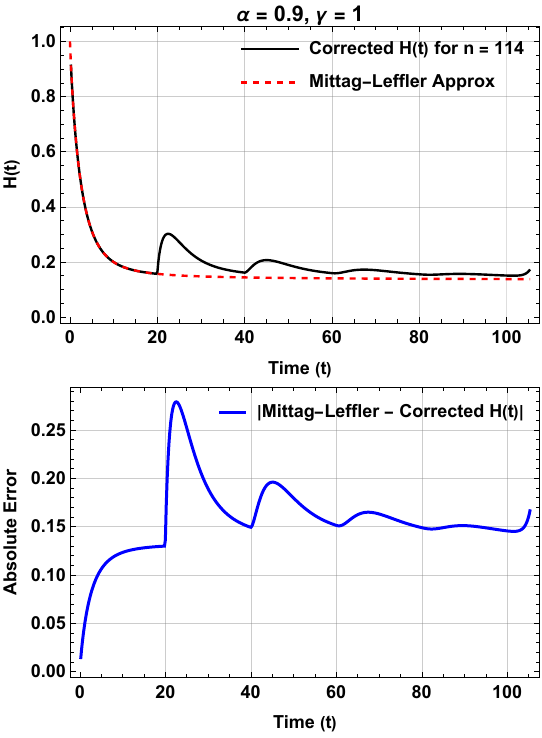}
    \caption{\label{Fig00} $H_n^{\text{corrected}}(t)$ defined by \eqref{H_summation} for $n=114$, with $\alpha=0.9$, $\gamma=4/3,1$ and $\eta_0=0.2$, $T=20$, and $H_0=1$ compared with Mittag-Leffler $H_B +( H_0 - H_B ) E (\alpha , - 2 \eta_0 t \alpha )$.}
\end{figure}
\subsection{Numerical Solution}
\label{sec:numerics}
We want to solve the equation \eqref{eqtosolve} numerically. We use the fractional Euler method for fractional differential equations. The discrete fractional Caputo derivative is given by \cite{ncaputo1}
\begin{equation}\label{ncaputo0}
    ^{\text{C}}D_t^\alpha=\delta^\alpha y_n+R_n,\quad y_n:=y(t_n),\quad t_n=nh, \quad h= T/m, \quad h>0,
\end{equation}
where $m$ is the number of sub-intervals on which the intervals $[k T, (k+1)T), \quad k \in \{0, 1, 2, \ldots\}$ are divided, 
such that integer multiples of $T$, $t_{k\cdot m}= k T$ are on the mesh.

Define
\begin{equation}\label{ncaputo1}
    \delta^\alpha y_n=\frac{h^{-\alpha}}{\Gamma(2-\alpha)}\sum_{i=0}^{n-1}\left[(n-i)^{1-\alpha}-(n-i-1)^{1-\alpha}\right]\left(y_{i+1}-y_i\right),
\end{equation}
and
\begin{equation}
    R_n\sim -\frac{h^{2-\alpha}}{\Gamma(2-\alpha)}\zeta(\alpha-1)y^{\prime\prime}(\tau), \quad \tau\in (0,t),
\end{equation}
where $\zeta$ is the Riemann-Zeta function. 

On the other hand, from \cite{ahmed2018fractional}, we have the formula for the fractional Euler method:
\begin{equation}\label{fract-Euler-Method}
\begin{split}
    y(t_{i+1})& =y(t_i)+\frac{h^\alpha}{\Gamma(\alpha+1)}\left[c_1y(t_i)+c_2y(t_i-T)\right],\\
       y(t_{i+1})& =y(t_i)+\frac{h^\alpha}{\Gamma(\alpha+1)}\left[c_1y(t_i)+c_2y \left(\left(i-m\right)h\right)\right], \\
       y(t_{i+1})& =y(t_i)+\frac{h^\alpha}{\Gamma(\alpha+1)}\left[c_1y(t_i)+c_2 y \left(t_{i-m}\right)\right]. 
    \end{split}
\end{equation}
The calculation of $q$ and  $w_{\text{eff}}$ using the series \eqref{q-model} and \eqref{w-model} is affected by error propagation. Hence, we use a discretized derivative to approximate \( \dot{H}(t) \) by using the forward difference formula (see Appendix \ref{appFDF}):
\begin{equation}
\dot{H}(t_n) \approx \frac{H(t_{n+1}) - H(t_{n})}{h}, \label{Forward}
\end{equation}
where \( t_n \) is the current time step.
But 
\begin{equation}
     H(t_n)=H_B+y(t_n) \implies 
\dot{H}(t_n) \approx \frac{y(t_{n+1}) - y(t_{n})}{h}. \label{euler_Hn}
\end{equation}
where $H_B=H_B$. 
Hence, we calculate \( q \) at \( t_n \), \(q_n \) defined as:
        \begin{equation}
 q_{n} = -1 - \frac{\left(y_{n+1} - y_{n}\right)}{h\left(H_B+y_{n}\right)^2}. \label{euler_qn}
        \end{equation}
 Calculate \( w_{\text{eff}} \) at \( t_n \), \({w_{\text{eff}}}_n \) defined as:
  \begin{equation}
       {w_{\text{eff}}}_n =(2 q_n -1)/3. \label{euler_wn}
        \end{equation}
Investigating the causes of anomalous initial behavior is essential, as it may stem from numerical artifacts or initial conditions influenced by viscosity and time delays. Sensitivity analysis assesses robustness under small perturbations. We applied the forward difference formula \eqref{Forward} for time derivative evaluation, with additional reliable schemes listed in Appendix \ref{appFDF}. 

For implementing this numerical procedure, we are required the initial terms $y_0=y(t_0), y_1=y(t_1), \ldots y_m=y(t_m)$, with $t_0=0, t_1= h, \ldots t_k = k h, \ldots t_m=T$. 
For $t\in [0, T)$, using \eqref{equation(56)}, we have 
\begin{equation}
y(t)= \left(H_0-H_B\right) E(\alpha,-2\eta_0 t^{\alpha}), \quad t\in [0, T). \label{eq81}
\end{equation}
By continuity, $y_0=y(0)=1$ and $y_m=y(T) =  \left(H_0-H_B\right) E(\alpha,-2\eta_0 T^{\alpha})$.

For calculating $q_1=q(t_1), \ldots q_m=q(t_m)$, with $t_1= h, \ldots t_k = k h, \ldots t_m=T$ which belongs to  $t\in (0, T)$, we use \eqref{ini_q-model}, to have 
  \begin{equation}
  q(t_k)=      -1+\frac{6 \gamma  \eta_0 t_k^{\alpha-1} (3 \gamma  H_0-2 \eta_0) E\left(\alpha ,\alpha, -2 t_k^{\alpha } \eta_0\right)}{\left(2 \eta_0+(3\gamma  H_0-2 \eta_0)E\left(\alpha, -2 t_k^{\alpha } \label{eq82}
   \eta_0\right)\right){}^2}.
         \end{equation}
Hence, using \eqref{fract-Euler-Method}, \eqref{euler_Hn}, \eqref{euler_qn}, \eqref{euler_wn}, \eqref{eq81} and \eqref{eq82} we have
\begin{subequations}\label{algorithm-1}
\begin{align}
  & y_0=H_0-H_B, \ldots, y_k=  y_0 E(\alpha,-2\eta_0 {\left(k h\right)}^{\alpha}), \ldots, y_m=  y_0 E(\alpha,-2\eta_0 {T}^{\alpha}),\\
   & y_{n+1} =y_n+\frac{h^\alpha}{\Gamma(\alpha+1)}\left[c_1y_n+c_2 y _{n-m}\right]\\
     & H_{n}=H_B+y_{n},\label{H_n}\\
     &  q_{0} = -1 - \frac{y_0\left(E(\alpha,-2\eta_0  h^{\alpha}) - 1\right)}{h H_0^2},\\
     & q_k=  -1+\frac{6 \gamma  \eta_0 {\left(k h\right)}^{\alpha-1} (3 \gamma  H_0-2 \eta_0) E\left(\alpha ,\alpha, -2 {\left(k h\right)}^{\alpha } \eta_0\right)}{\left[2 \eta_0+(3\gamma  H_0-2 \eta_0)E\left(\alpha, -2 {\left(k h\right)}^{\alpha }
   \eta_0\right)\right]{}^2}, k= 1 \ldots m,\\ 
           & q_n = -1- \frac{h^{\alpha-1}\left(c_1y_n+c_2 y _{n-m}\right)}{\Gamma(\alpha+1)\left[H_B+y_n\right]^2}, \quad n\geq m+1, \label{q_n}\\
     &  {w_{\text{eff}}}_n =(2 q_n -1)/3. \label{w_n}
\end{align}
\end{subequations}
The fundamental algorithm for executing the numerical procedure \eqref{algorithm-1} is detailed in Appendix \ref{AlgorithmB}. 
By solving equation \eqref{eqtosolve} for $y(t)$, with $c_1=-2\eta_0$ and $c_2=\eta_0$; and implementing the numerical procedure we obtain figure \ref{Fig1}, which shows $H(t)$  for the numerical solution using the general formula for the fractional Euler method \eqref{fract-Euler-Method}, which is the linearized version. This figure shows that in a universe dominated by radiation ($\gamma=4/3$) or dust ($\gamma=1$), the system reaches the de Sitter phase after some perturbations due to the memory effects introduced by the retarded time. Figure \ref{Fig0Obs} presents the numerical solutions for \( q(t) \) and \( \omega_{\text{eff}}(t) \) for \( \gamma = \frac{4}{3} \) and \( \gamma = 1 \). 
\begin{figure}[h]
\centering
\includegraphics[width=0.5\textwidth]{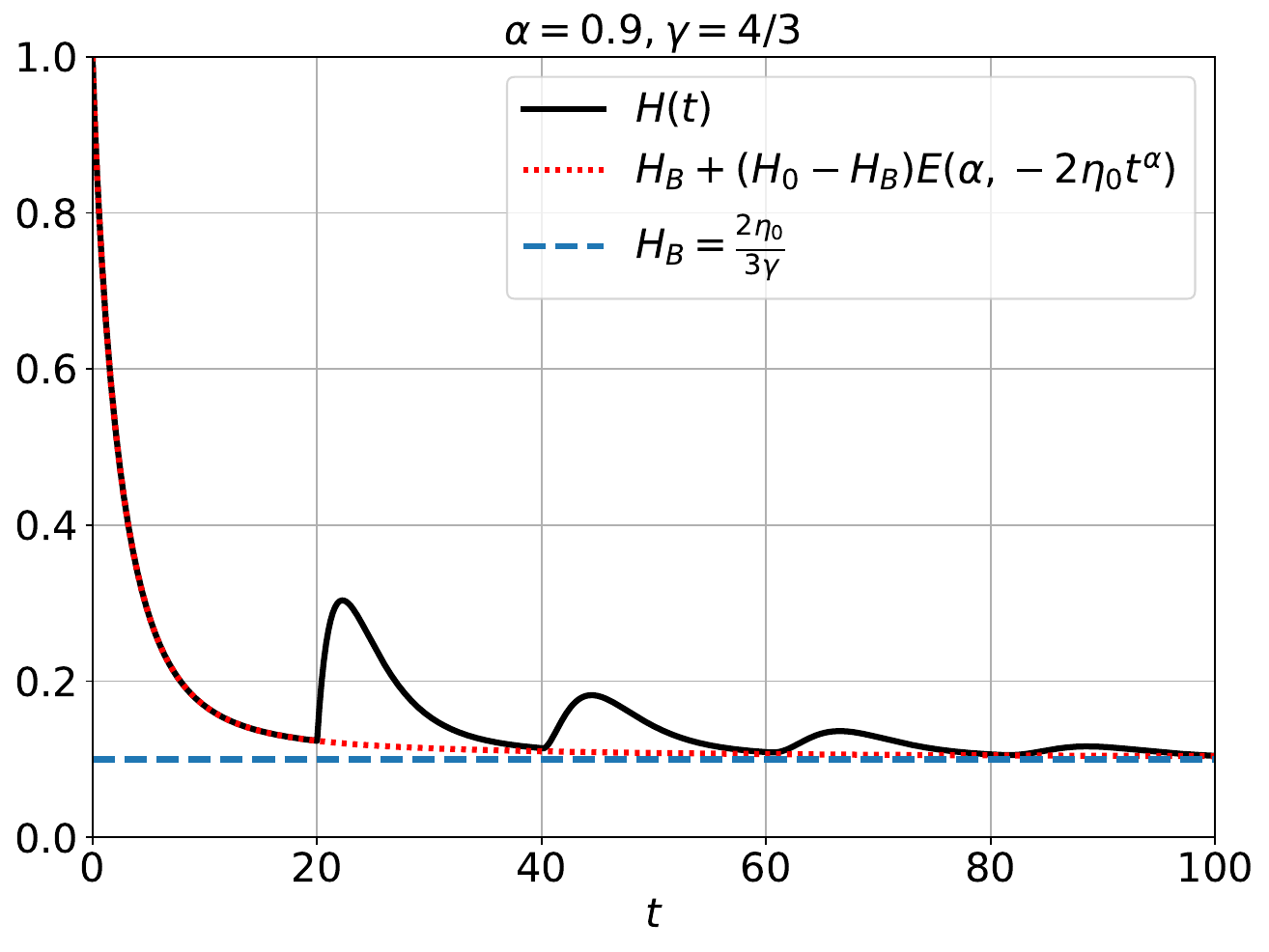}\includegraphics[width=0.5\textwidth]{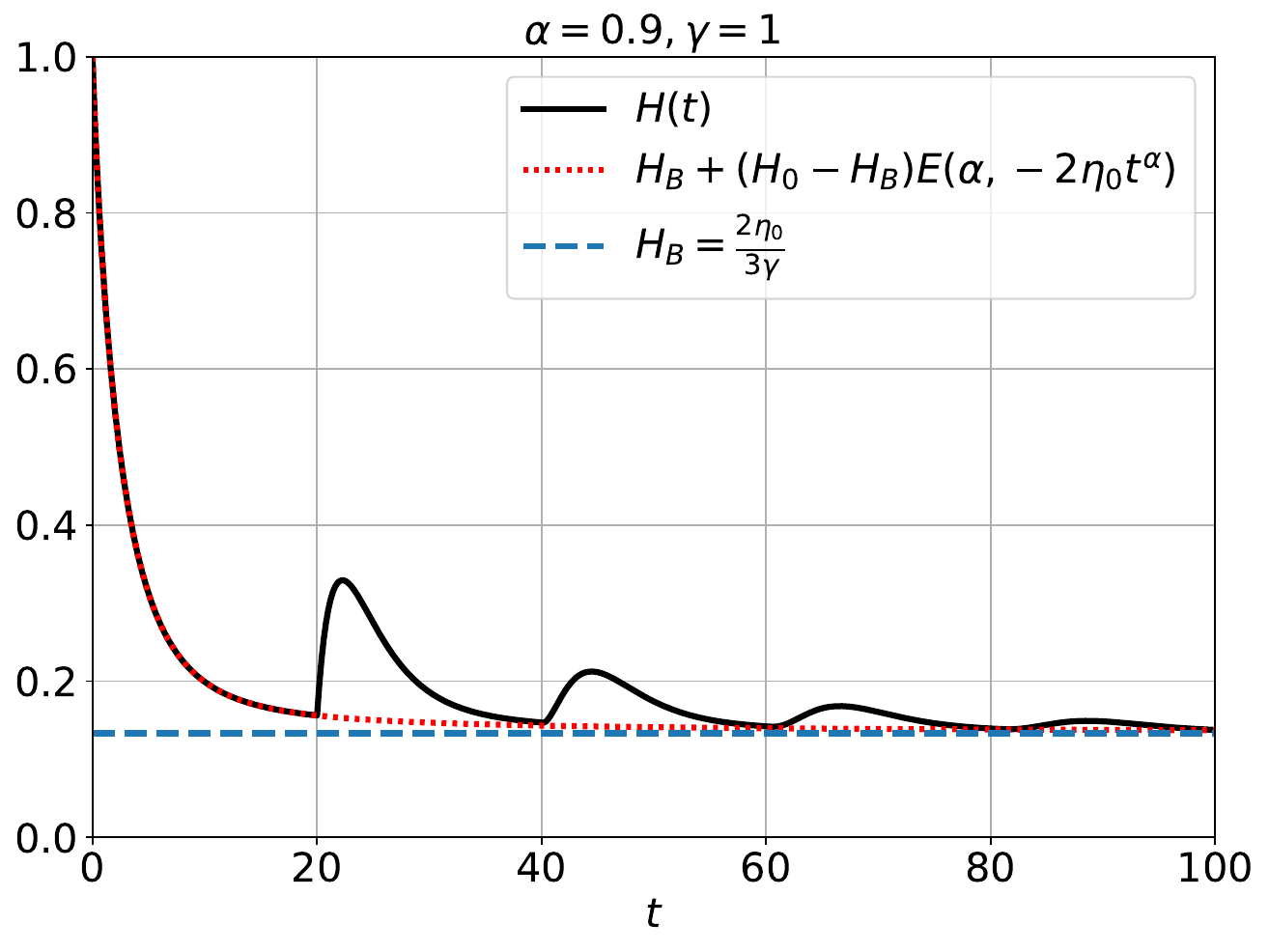}
\caption{\label{Fig1} $H(t)$  for the numerical solution using the formula for the fractional Euler method \eqref{fract-Euler-Method}, for $\alpha=0.9$ and $\gamma=4/3,1$, and Mittag-Leffler function $H_B+E\left(\alpha,-2\eta_0 t^\alpha\right)$. The other parameters are $\eta_0=0.2$, $T=20$ and $H_0=1$. The dashed line represents the de Sitter solution.}
\end{figure}
\begin{figure}[h]
\centering
\includegraphics[width=0.5\textwidth]{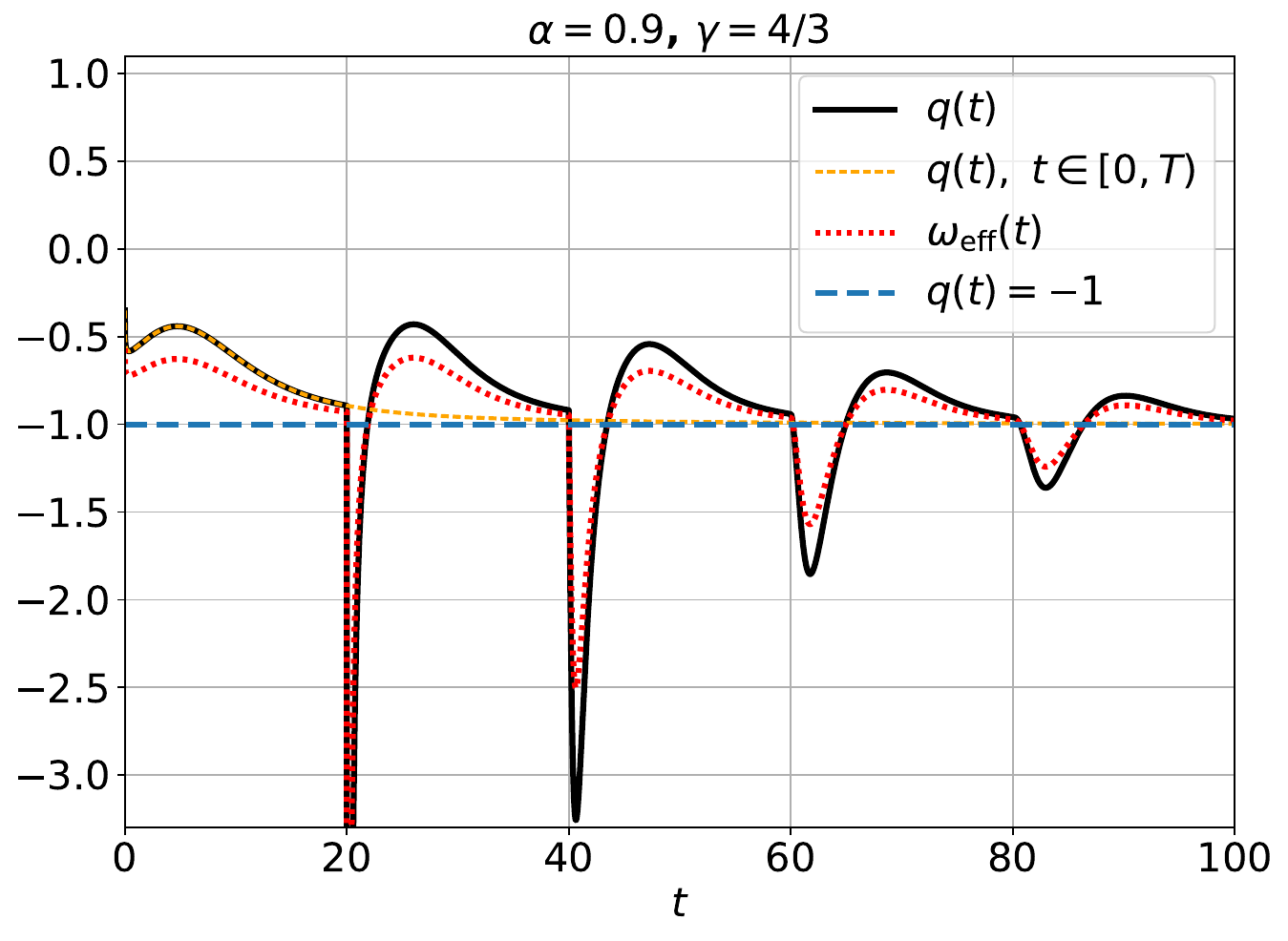}\includegraphics[width=0.5\textwidth]{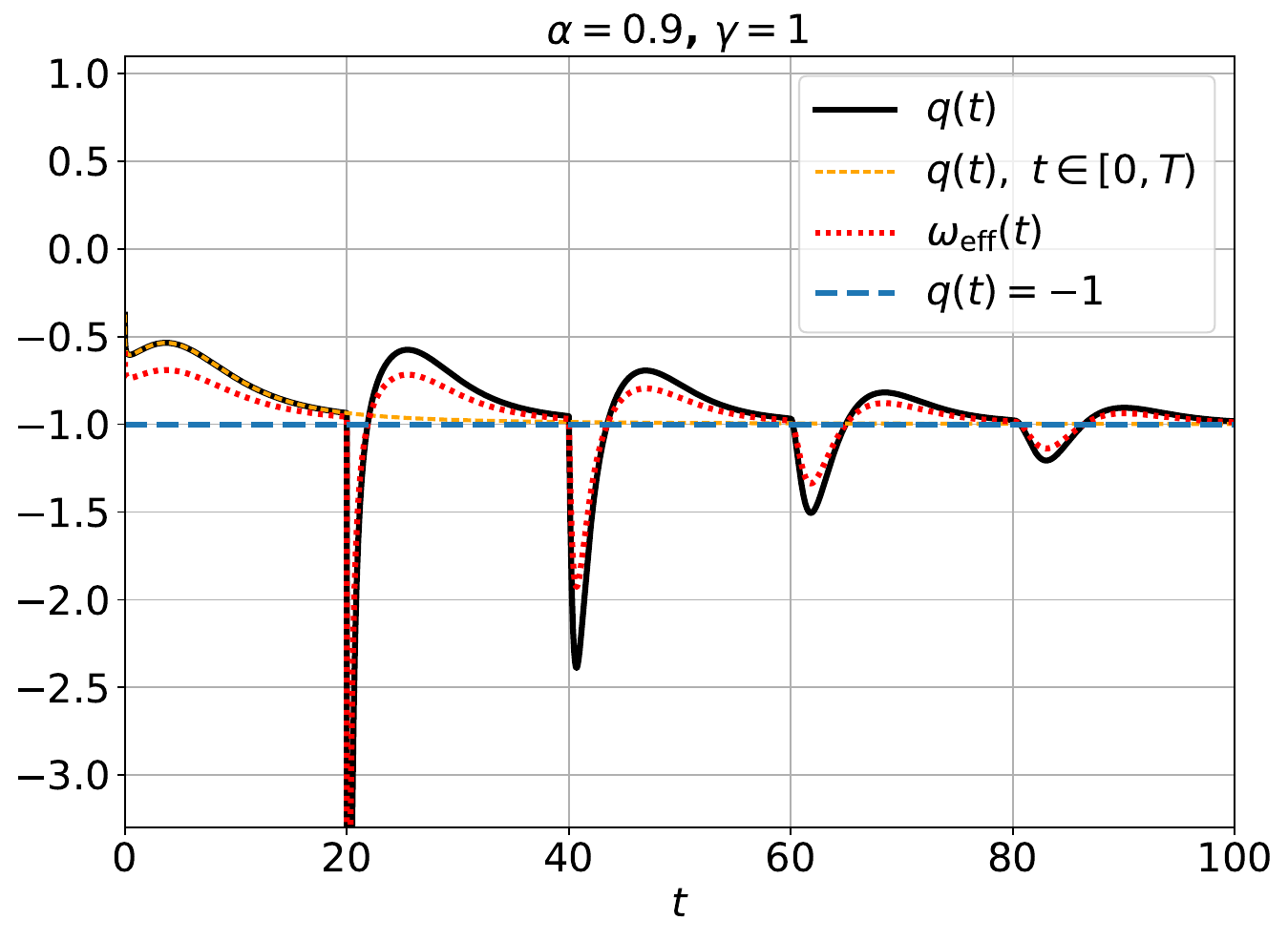}
\caption{\label{Fig0Obs} Numerical solution of the functions $q(t)$ and  $\omega_\text{eff}(t)$, for the cases $\gamma=4/3,1$, considering $\alpha=0.9$. The other parameters are $\eta_0=0.2$, $T=20$ and $H_0=1$. The minimum values of $q(t)$ and $\omega_\text{eff}(t)$ are $(q=-20.6,\omega_\text{eff}=-14.1)$ for the case $\gamma=4/3$ (radiation), and $(q=-12.9,\omega_\text{eff}=-8.95)$ for the case $\gamma=1$ (matter).}
\end{figure}

We can also apply a numerical method to solve the nonlinear equation \eqref{nonfractandnonlinear} in the fractional version:
\begin{equation}\label{fractandnonlinear}
    \leftindex_{}^{\text{C}}D_t^\alpha y(t)=-\frac{3\gamma}{2}y^2(t)-2\eta_0y(t)+\eta_0y(t-T).
\end{equation}
The numerical scheme to solve \eqref{fractandnonlinear} is the following. 
We chose a mesh $y_n:=y(t_n),\quad t_n=nh, \quad h= T/m, \quad h>0$, 
where $m$ is the number of sub-intervals on which the intervals $[k T, (k+1)T), \quad k \in \{0, 1, 2, \ldots\}$ is divided, 
such that integer multiples of $T$, $t_{k\cdot m}= k T$, are on the mesh.

In the interval $[0,T)$ the dynamics is given by the fractional differential equation 
\begin{equation}\label{fractandnonlinearnotDelay}
    \leftindex_{}^{\text{C}}D_t^\alpha y(t)=-\frac{3\gamma}{2}y^2(t)-2\eta_0y(t), \quad t\in[0,T).
\end{equation}
For implementing this numerical procedure, we are required the initial terms $y_0=y(t_0), y_1=y(t_1), \ldots y_m=y(t_m)$, with $t_0=0, t_1= h, \ldots t_k = k h, \ldots t_m=T$:
\begin{subequations}\label{algorithm-2}
\begin{equation}
\begin{split}
 y_{n+1}& =y_n+\frac{h^\alpha}{\Gamma(\alpha+1)}\left[-\frac{3\gamma}{2}y_n^2-2\eta_0 y_n\right], \quad
 H_{n}  =H_B+y_{n}, \\
 q_{n} & = -1 - \frac{\left(y_{n+1} - y_{n}\right)}{h H_n^2}, \quad
   {w_{\text{eff}}}_n = \frac{2q_n - 1}{3}.
 \label{num_notdelay}
\end{split}
 \end{equation}
Results from \eqref{num_notdelay} are used to initialize the delayed procedure given by equation 
\begin{equation}
\begin{split}
 y_{n+1}&=y_n+\frac{h^\alpha}{\Gamma(\alpha+1)}\left[-\frac{3\gamma}{2}y_n^2-2\eta_0 y_n+ \eta_0 y_{n-m}\right], \quad H_n = H_B + y_n, \\ 
    q_n &= -1 + \frac{h^{\alpha-1} \Big(\frac{3\gamma}{2}y_n^2+2\eta_0 y_n-\eta_0 y_{n-m}\Big)}{\Gamma(\alpha+1) H_n^2}, \quad
    {w_{\text{eff}}}_n = \frac{2q_n - 1}{3}.  \label{num_delay}
    \end{split}
 \end{equation}
\end{subequations}
The fundamental algorithm for executing the numerical procedure \eqref{algorithm-2} is detailed in Appendix \ref{AlgorithmC}.

Figure \ref{Fig2} compares the numerical solutions of equations \eqref{nonfractionalandretarded}, \eqref{eqtosolve}, and \eqref{fractandnonlinear} for different $\alpha$ values with $\gamma=4/3$ (radiation). 
The nonlinear solution (dashed red curve) reaches the de Sitter phase faster than the linear one (solid black curve), showing small perturbations over time.
Asymptotically, all solutions approach $H_B+(H_0-H_B)E\left(\alpha,-2\eta_0 t^\alpha\right)$. Increasing $\alpha$ brings the curves closer to the de Sitter state.

In Figure \ref{Fig2.1}, we show the same as Figure \ref{Fig2} but with the value $\gamma=1$ (matter). The behavior is similar to the $\gamma=4/3$ (radiation) case.

In Figure \ref{Fig0Obs0}, we present the numerical solutions for the functions $q(t)$ and $\omega_\text{eff}(t)$ in the non-linear case \eqref{fractandnonlinearnotDelay}, for the values $\gamma=4/3$ and $\gamma=1$, considering $\alpha=0.9$. The other parameters are $\eta_0=0.2$, $T=20$, and $H_0=1$. 

Summarizing, the functions $q(t)$ and $\omega_\text{eff}(t)$ oscillate between positive and negative values and, at late times, converge to the values corresponding to de Sitter spacetime. The effect of the Caputo derivative is that the deceleration parameter and \(\omega_{\text{eff}}\) converge to negative values and the universe remains accelerated for $t\geq T$.
\begin{figure}[h]
\centering
\includegraphics[width=1.0\textwidth]{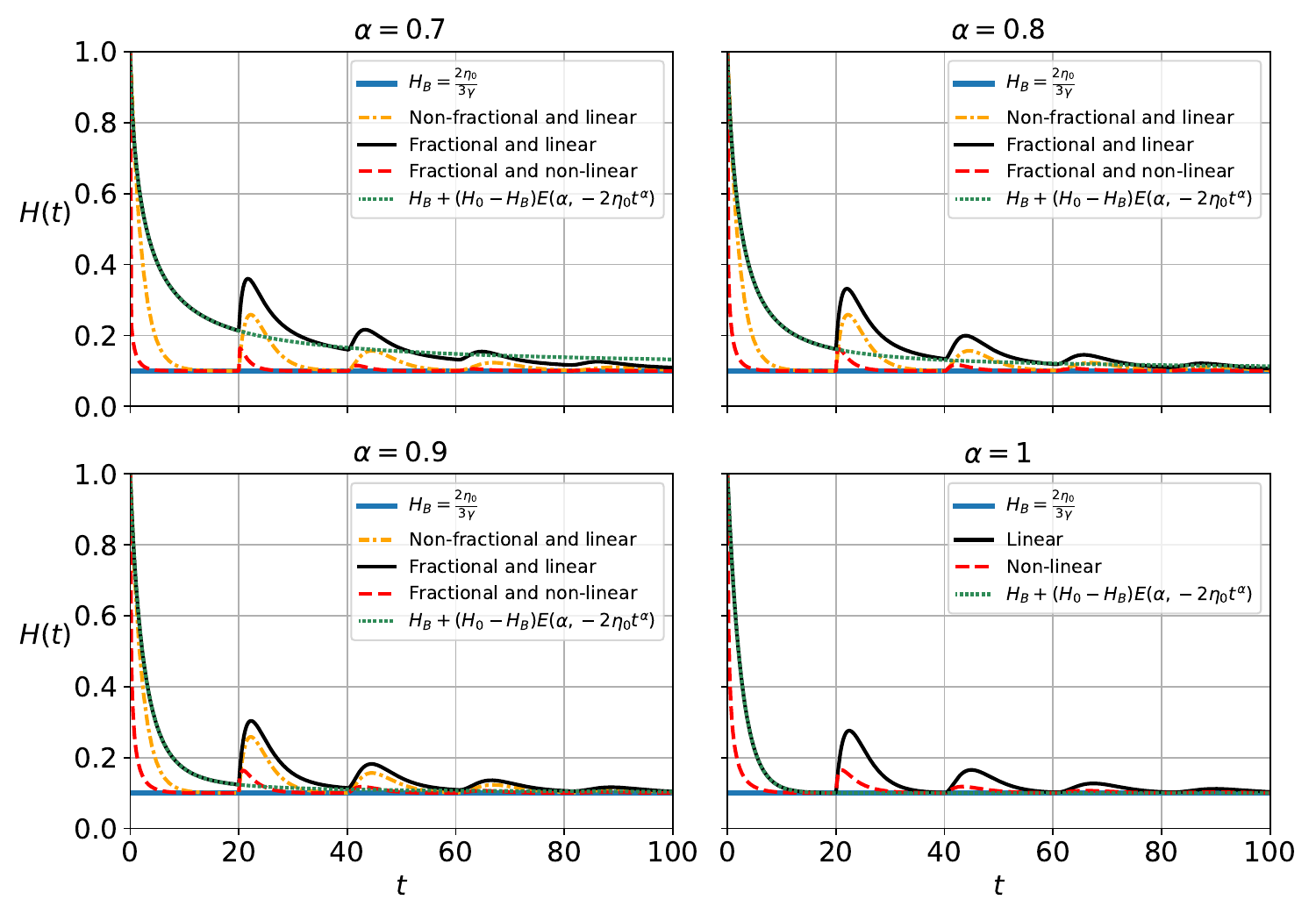}
\caption{Comparison between the non-fractional and linear equation \eqref{nonfractionalandretarded} (orange), fractional and linear equation \eqref{eqtosolve} (black) and fractional and non-linear equation \eqref{fractandnonlinear} (Red). The blue constant line is the de Sitter solution, and here $\eta_0=0.2$, $T=20$, $H_0=1$ and $\gamma=4/3$. The green and dashed lines represent the Mittag-Leffler function.}
\label{Fig2}
\end{figure}
\begin{figure}[h]
\centering
\includegraphics[width=1.0\textwidth]{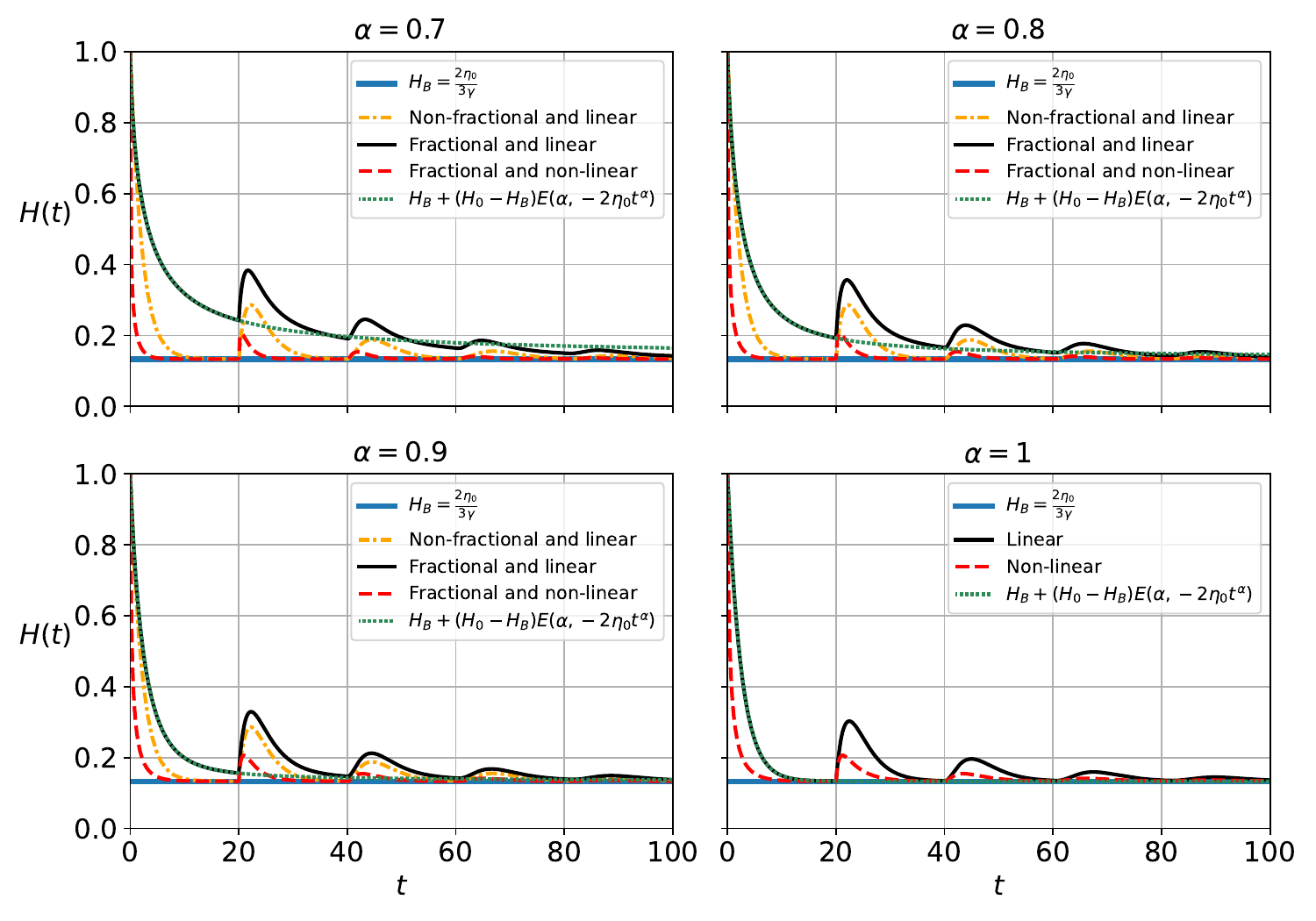}
\caption{Comparison between the non-fractional and linear equation \eqref{nonfractionalandretarded} (orange), fractional and linear equation \eqref{eqtosolve} (black) and fractional and non-linear equation \eqref{fractandnonlinear} (Red). The blue constant line is the de Sitter solution, and here $\eta_0=0.2$, $T=20$, $H_0=1$ and $\gamma=1$. The green and dashed lines represent the Mittag-Leffler function.}
\label{Fig2.1}
\end{figure}
\begin{figure}[h]
\centering
\includegraphics[width=0.5\textwidth]{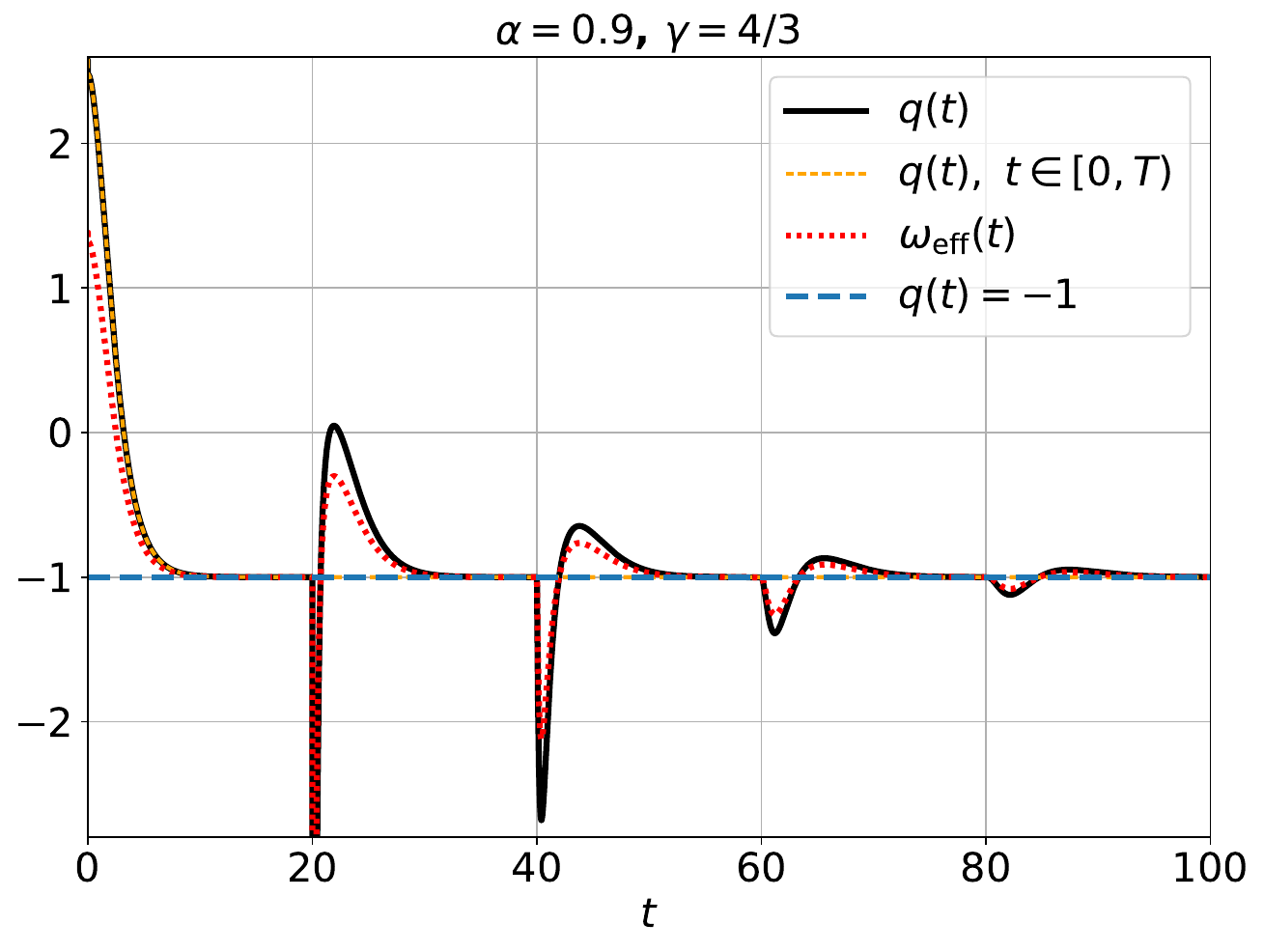}\includegraphics[width=0.5\textwidth]{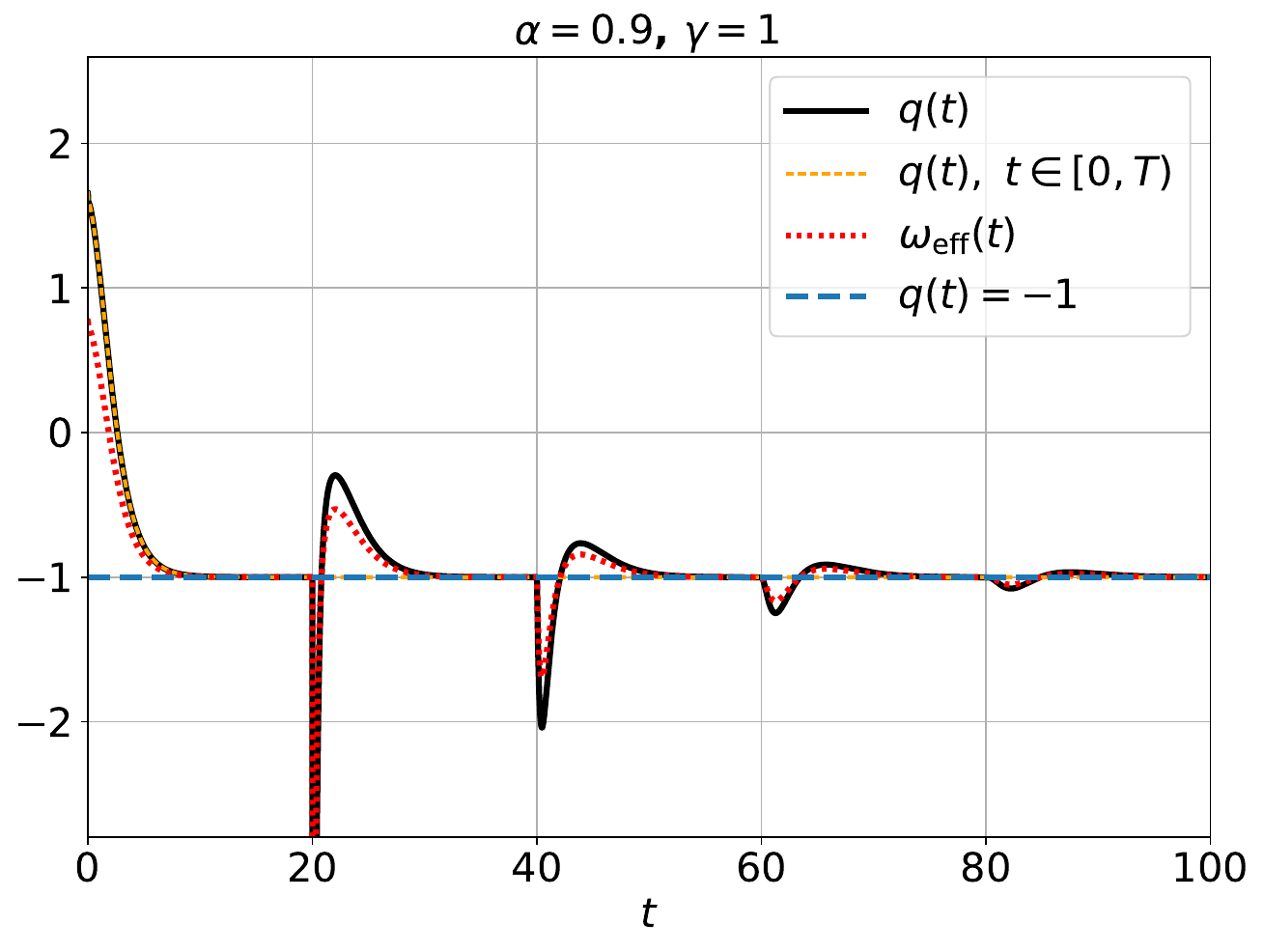}
\caption{\label{Fig0Obs0} Numerical solution of the functions $q(t)$ and  $\omega_\text{eff}(t)$ for non-linear case, for the cases $\gamma=4/3,1$, considering $\alpha=0.9$. The other parameters are $\eta_0=0.2$, $T=20$ and $H_0=1$. The minimum values of $q(t)$ and $\omega_\text{eff}(t)$ are $(q=-33.5,\omega_\text{eff}=-22.7)$ for the case $\gamma=4/3$ (radiation), and $(q=-18.6,\omega_\text{eff}=-12.7)$ for the case $\gamma=1$ (matter).}
\end{figure}

\subsection{Generalization}
For the sake of generality, we can consider the next equation: \begin{equation}\label{general-eq}
    \leftindex_{}^{\text{C}}D_t^\alpha y(t)=\sum_{r=0}^mc_ry(t-rT), \quad y(t)=0\quad \forall t<0,
\end{equation}
with the initial conditions:
\begin{equation}\label{ini_conds}
y(0) = y_0, \quad y'(0) = y_1, \quad \ldots, \quad y^{(n-1)}(0) = y_{n-1}, \quad n-1<\alpha < n.
\end{equation}
Using \eqref{Laplace-transform-of-Caputo} and \eqref{multipledelays},
and applying the Laplace Transform on equation \eqref{general-eq}, we obtain
\begin{equation}
    \mathcal{L}\left\{y(t)\right\}\left(s^\alpha-\sum_{r=0}^mc_re^{-rsT}\right)=\sum_{k=0}^{n-1}s^{\alpha-k-1}y^{(k)}(0).
\end{equation}
Then,
 
\begin{equation}
\begin{split}
    \mathcal{L}\left\{y(t)\right\}&=\frac{\sum_{k=0}^{n-1}s^{\alpha-k-1}y^{(k)}(0)}{s^\alpha-\sum_{r=0}^mc_re^{-rsT}}, \quad s^\alpha\neq \sum_{r=0}^mc_re^{-rsT}\\
    &=\frac{s^{\alpha-1}y^{(0)}(0)}{s^\alpha-\sum_{i=0}^mc_re^{-rsT}}+\frac{s^{\alpha-2}y^{(1)}(0)}{s^\alpha-\sum_{r=0}^mc_re^{-rsT}}+\cdots+\frac{s^{\alpha-n}y^{(n-1)}(0)}{s^\alpha-\sum_{r=0}^mc_re^{-rsT}}\\
    &=\frac{y^{(0)}(0)}{s^{1-\alpha}\left(s^\alpha-\sum_{r=0}^mc_ie^{-rsT}\right)}+\frac{y^{(1)}(0)}{s^{2-\alpha}\left(s^\alpha-\sum_{r=0}^mc_re^{-rsT}\right)}+\cdots+\frac{y^{(n-1)}(0)}{s^{n-\alpha}\left(s^\alpha-\sum_{r=0}^mc_re^{-rsT}\right)}\\
    &=\frac{y^{(0)}(0)}{s\left(1-s^{-\alpha}\sum_{r=0}^mc_re^{-rsT}\right)}+\frac{y^{(1)}(0)}{s^2\left(1-s^{-\alpha}\sum_{r=0}^mc_re^{-rsT}\right)}+\cdots+\frac{y^{(n-1)}(0)}{s^n\left(1-s^{-\alpha}\sum_{r=0}^mc_re^{-rsT}\right)}\\
    &=\frac{y^{(0)}(0)}{s}\sum_{j=0}^\infty \left(s^{-\alpha}\sum_{r=0}^mc_re^{-rsT}\right)^j+\frac{y^{(1)}(0)}{s^2}\sum_{j=0}^\infty \left(s^{-\alpha}\sum_{r=0}^mc_re^{-rsT}\right)^j \\ & +\cdots +\frac{y^{(n-1)}(0)}{s^n}\sum_{j=0}^\infty \left(s^{-\alpha}\sum_{r=0}^mc_re^{-rsT}\right)^j\\
    &=\sum_{k=0}^{n-1}\left[\frac{y^{(k)}(0)}{s^{k+1}}\sum_{j=0}^\infty \left(s^{-\alpha}\sum_{r=0}^mc_re^{-rsT}\right)^j\right]=\sum_{k=0}^{n-1}\sum_{j=0}^\infty \frac{y^{(k)}(0)}{s^{\alpha j+k+1}}\left(\sum_{r=0}^mc_re^{-rsT}\right)^j, \\ &\quad 0<\left|s^{-\alpha}\sum_{r=0}^mc_re^{-rsT}\right|<1.
\end{split}
\end{equation}

\begin{Remark}
    The convergence condition $0<\left|s^{-\alpha}\sum_{r=0}^mc_re^{-rsT}\right|<1$ is satisfied by all $s$ with $s^\alpha> m \max_r(c_r).$
\end{Remark}
But by the Multinomial Theorem, we have
\begin{equation}
    \left(x_0+x_1+\cdots+x_m\right)^n=\sum_{\substack{k_0 + k_1 + \cdots + k_m = n \\ k_0, k_1, \ldots, k_m \geq 0}}\frac{n!}{k_0!k_1!\cdots k_m!}x_0^{k_0}x_1^{k_1}\cdots x_m^{k_m}.
\end{equation}
Finally,
 
\begin{equation}
\begin{split}
    \mathcal{L}\left\{y(t)\right\}&=\sum_{k=0}^{n-1}\sum_{j=0}^\infty \frac{y^{(k)}(0)}{s^{\alpha j+k+1}}\left(c_0+c_1e^{-sT}+c_2e^{-2sT}+\cdots+c_me^{-msT}\right)^j\\
    &=\sum_{k=0}^{n-1}\sum_{j=0}^\infty \frac{y^{(k)}(0)}{s^{\alpha j+k+1}}\sum_{\substack{j_0 + j_1 + \cdots + j_m = j \\ j_0, j_1, \ldots, j_m \geq 0}}\frac{j!}{j_0!j_1!\cdots j_m!}c_0^{j_0}c_1^{j_1}e^{-j_1sT}c_2^{j_2}e^{-2j_2sT}\cdots c_m^{j_m}e^{-mj_msT}\\
    &=\sum_{k=0}^{n-1}\sum_{j=0}^\infty \sum_{\substack{j_0 + j_1 + \cdots + j_m = j \\ j_0, j_1, \ldots, j_m \geq 0}}\frac{j!}{j_0!j_1!\cdots j_m!}c_0^{j_0}c_1^{j_1}c_2^{j_2}\cdots c_m^{j_m}\frac{y^{(k)}(0)e^{-\left(j_1+2j_2+\cdots+mj_m\right)sT}}{s^{\alpha j+k+1}},
    \end{split}
    \end{equation}
    such that 
    \begin{equation}
\begin{split}
    y(t)&=\sum_{k=0}^{n-1}\sum_{j=0}^\infty \sum_{\substack{j_0 + j_1 + \cdots + j_m = j \\ j_0, j_1, \ldots, j_m \geq 0}}\frac{j!}{j_0!j_1!\cdots j_m!}c_0^{j_0}c_1^{j_1}c_2^{j_2}\cdots c_m^{j_m}y^{(k)}(0)\mathcal{L}^{-1}\left[\frac{e^{-\left(j_1+2j_2+\cdots+mj_m\right)sT}}{s^{\alpha j+k+1}}\right]\\
    &=\sum_{k=0}^{n-1}\sum_{j=0}^\infty \sum_{\substack{j_0 + j_1 + \cdots + j_m = j \\ j_0, j_1, \ldots, j_m \geq 0}}\frac{j!}{j_0!j_1!\cdots j_m!}c_0^{j_0}c_1^{j_1}c_2^{j_2}\cdots c_m^{j_m}y^{(k)}(0) \\
    & \times \frac{\left[t-(\sum_{r=1}^m r j_r)T\right]^{\alpha j+k}\theta(t-(\sum_{r=1}^m r j_r)T)}{\Gamma(\alpha j+k+1)}. \label{gen-y}
\end{split}
\end{equation}

\begin{Remark}\label{Rem_9}
    For each $t>0$, the inner series in equation \eqref{gen-y} is a finite sum. To see this, note that $\theta(t-(\sum_{r=1}^m r j_r)T)=0$ for all $(\sum_{r=1}^m r j_r)\geq t/T$. Then, 
     
    \begin{equation}
\begin{split}
    y(t)&=\sum_{k=0}^{n-1}\sum_{j=0}^\infty \sum_{\substack{j_0 + j_1 + \cdots + j_m = j \\ j_0, j_1, \ldots, j_m \geq 0}}^{\sum_{r=1}^m r j_r=\lfloor t/T\rfloor}\frac{j!}{j_0!j_1!\cdots j_m!}c_0^{j_0}c_1^{j_1}c_2^{j_2}\cdots c_m^{j_m}y^{(k)}(0) \\
    & \times \frac{\left[t-(\sum_{r=1}^m r j_r)T\right]^{\alpha j+k}\theta(t-(\sum_{r=1}^m r j_r)T)}{\Gamma(\alpha j+k+1)}. \label{Gen-y}
\end{split}
\end{equation}

    Furthermore, if we divide the time domain into intervals of length $T$, for each $t$, there exists an $n\in\mathbb{N}_0$, such that $t\in [nT,(n+1)T)$, and  $\lfloor t/T\rfloor=n$.
\end{Remark}
\begin{Proposition}\label{Prop_11}
 For $t>0$, the general solution of   \eqref{general-eq} with the initial conditions \eqref{ini_conds} is \eqref{Gen-y}. 
\end{Proposition}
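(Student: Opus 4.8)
The plan is to construct the solution explicitly in the Laplace domain and then reduce the resulting doubly infinite series to a finite one by exploiting the support of the Heaviside factors, in direct parallel with the proofs of Propositions~\ref{Prop_1}, \ref{Prop_2}, \ref{Prop_3}, \ref{Prop_4}, and \ref{Prop_8}. Since the method of steps guarantees that \eqref{general-eq} with the prescribed initial data \eqref{ini_conds} admits a unique solution on each interval $[nT,(n+1)T)$, it suffices to exhibit one expression that satisfies the equation and the initial conditions; uniqueness then identifies it as \emph{the} solution.

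First I would apply the Laplace transform to both sides of \eqref{general-eq}, invoking the transform of the Caputo derivative \eqref{Laplace-transform-of-Caputo} together with the shift rule for the delayed terms \eqref{multipledelays}, and solve algebraically for $\mathcal{L}\{y(t)\}$. This yields the quotient with characteristic denominator $s^\alpha-\sum_{r=0}^m c_r e^{-rsT}$. Next I would factor $s^\alpha$ out of the denominator and expand the factor $\bigl(1-s^{-\alpha}\sum_r c_r e^{-rsT}\bigr)^{-1}$ as a geometric series; the preceding Remark certifies that $\bigl|s^{-\alpha}\sum_r c_r e^{-rsT}\bigr|<1$ whenever $s^\alpha>m\max_r(c_r)$, so the expansion is valid on a right half-plane. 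Expanding each power $\bigl(\sum_{r=0}^m c_r e^{-rsT}\bigr)^j$ by the Multinomial Theorem produces, for every multi-index $(j_0,\dots,j_m)$ with $\sum_r j_r=j$, a single elementary term $e^{-(\sum_r r j_r)sT}/s^{\alpha j+k+1}$ carrying the multinomial coefficient and the monomial $c_0^{j_0}\cdots c_m^{j_m}$.

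The solution is then recovered term by term using the standard pair $\mathcal{L}^{-1}\bigl[e^{-csT}s^{-\mu}\bigr]=(t-cT)^{\mu-1}\theta(t-cT)/\Gamma(\mu)$ with $\mu=\alpha j+k+1$ and shift $c=\sum_r r j_r$, giving the infinite-sum expression \eqref{gen-y}. The main obstacle is justifying the interchange of the inverse Laplace transform with the doubly infinite summation, since the geometric expansion is only guaranteed on a half-plane; I would resolve this by appealing to Remark~\ref{Rem_9}, which observes that for any fixed $t>0$ every Heaviside factor with $\sum_r r j_r>t/T$ vanishes, so only finitely many terms survive and the inversion is in fact applied to a finite sum. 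This collapses \eqref{gen-y} to the finite form \eqref{Gen-y}; by the uniqueness guaranteed on each step-interval, this finite expression is the solution of \eqref{general-eq} subject to \eqref{ini_conds}, completing the argument.
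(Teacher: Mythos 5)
Your route is the same as the paper's: Laplace transform via \eqref{Laplace-transform-of-Caputo} and \eqref{multipledelays}, geometric expansion of the resolvent under the stated convergence condition, the Multinomial Theorem, termwise inversion with $\mathcal{L}^{-1}\bigl[e^{-csT}s^{-\alpha j-k-1}\bigr]$, and finally Remark \ref{Rem_9}; your added uniqueness wrapper via the method of steps is a harmless strengthening that the paper leaves implicit. The one genuine flaw is your justification of the interchange: you claim that for fixed $t>0$ ``only finitely many terms survive and the inversion is in fact applied to a finite sum.'' That is false. The Heaviside factor $\theta\bigl(t-(\sum_{r=1}^m r j_r)T\bigr)$ constrains only the delayed indices $j_1,\dots,j_m$ through $\sum_{r=1}^m r j_r\le\lfloor t/T\rfloor$; the index $j_0$ attached to the undelayed term $c_0\,y(t)$ never enters the shift, so multi-indices with arbitrarily large $j_0$ (hence arbitrarily large $j$) all survive. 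This is exactly why Remark \ref{Rem_9} asserts only that the \emph{inner} multi-index series is finite while \eqref{Gen-y} retains the infinite sum $\sum_{j=0}^\infty$, and why in the special case $m=1$ (Proposition \ref{Prop_8}) the solution on $[0,T)$ is a genuinely infinite Mittag-Leffler series $E(\alpha,-2\eta_0 t^{\alpha})$ rather than a polynomial. In the cosmological application $c_0=-2\eta_0\neq 0$, so this failure is not a corner case.

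Consequently the interchange of $\mathcal{L}^{-1}$ with the remaining infinite sum over $j$ still needs an argument, which your appeal to Remark \ref{Rem_9} does not supply. The repair is routine but should be stated: after collapsing the delay indices, the surviving Laplace-domain series is dominated for $s^{\alpha}>m\max_r(c_r)$ by a convergent geometric series, and in the time domain the resulting terms are bounded on any compact interval by $C^{j}\,t^{\alpha j+k}/\Gamma(\alpha j+k+1)$, so the series converges uniformly there (as for the Mittag-Leffler function) and termwise inversion is legitimate. This is the same tacit justification the paper relies on, made quantitative only later through the partial sums \eqref{Gen-Sn} and the error analysis. With that correction, your proof coincides with the paper's.
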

\begin{proof}
    Proposition \ref{Prop_11} is proven by using the Remark \ref{Rem_9}.
\end{proof}
We can investigate the convergence of the partial sums

\begin{align}
    H_n(t)&= H_B+S_n(t), \label{Gen-H}
\\
    S_n(t)&=\sum_{k=0}^{n-1}\sum_{j=0}^{n} \sum_{\substack{j_0 + j_1 + \cdots + j_m = j \\ j_0, j_1, \ldots, j_m \geq 0}}^{\sum_{r=1}^m r j_r=\lfloor t/T\rfloor}\frac{j!}{j_0!j_1!\cdots j_m!}c_0^{j_0}c_1^{j_1}c_2^{j_2}\cdots c_m^{j_m}y^{(k)}(0)\\
    & \times \frac{\left[t-(\sum_{r=1}^m r j_r)T\right]^{\alpha j+k}\theta(t-(\sum_{r=1}^m r j_r)T)}{\Gamma(\alpha j+k+1)}. \label{Gen-Sn}
\end{align}
as $n \rightarrow \infty$. 
\begin{Remark}
\label{Prop_12}
For $t>0$, taking limit $n\rightarrow \infty$, 
\[H(t)=  \lim_{n\rightarrow \infty}  H_n(t)=H_B+ \lim_{n \rightarrow \infty} S_n(t),\] we obtain the solution 
\begin{equation}
     H(t)=H_B+ y(t), 
\end{equation} with $H_B=(2\eta_0)/(3\gamma)$ and  $y(t)$ defined by \eqref{Gen-y}
of the linearized equation
\begin{equation}
     \leftindex_{}^{\text{C}}D_t^\alpha  H(t)= \sum_{r=0}^mc_r \left(H (t-rT) - H_B\right).
\end{equation}
With the expression for $H(t)$ we can derive $a(t)$, $q(t)$ and $w_{\text{eff}}(t)$ by calculating 
\eqref{Togeta(t)}, \eqref{weffa} and \eqref{weff}.     
\end{Remark}
\section{Conclusions}\label{sect.5}

Fractional time-delayed differential equations (FTDDEs) bridged fractional calculus and time delays, providing advanced tools to model complex systems in fields such as cosmology. These systems included viscosity and fluid dynamics in space. Techniques like Laplace transforms and Mittag-Leffler functions proved essential in solving FTDDEs.
Viscous cosmology, emphasizing dissipative effects, offered novel insights into cosmic evolution while introducing practical modeling frameworks. 

Employing effective pressure terms for cosmic fluids unveiled mechanisms driving inflation and accelerated expansion without relying on scalar fields. Moreover, FTDDEs captured delayed responses in cosmic fluids, expanding their applications and thereby paving the way for innovative research.

We explored these concepts by studying simpler fractional time-delayed differential equations with linear responses. This foundation extended to first-order, fractional Caputo, and higher-order fractional differential equations with delays. Their characteristic equations were solved using Laplace transforms, providing valuable insights into real-world phenomena.

In the context of cosmology, we derived an equation from the Friedmann and continuity equations, incorporating a viscosity term linked to the delayed Hubble parameter. Fractional calculus advanced this into a fractional delayed differential equation, known as Caputo fractional derivatives, which were analytically solved for the Hubble parameter. Due to the complexity of the analytical solution, numerical representations were also developed. These solutions asymptotically converged to the de Sitter equilibrium point, representing a significant result in cosmological research. Additionally, solutions for FTDDEs with delays as multiples of a fundamental delay were analyzed, offering further extensions.
In summary, integrating fractional calculus, viscous cosmology, and time-delayed equations established a robust framework for addressing limitations in standard cosmological models. This interdisciplinary approach opened new research avenues and enriched our understanding of the Universe's fundamental properties.

\vspace{6pt} 


\section*{Author contributions}

{All the authors contributed to conceptualization; methodology; software; formal analysis; investigation;  writing---original draft preparation; writing---review and editing. 

The corresponding author also contributed to supervision, project administration, and funding acquisition. 
All authors have read and agreed to the published version of the manuscript.}

\section*{Funding}

{Bayron Micolta-Riascos, Genly Leon \& Andronikos Paliathanasis were funded by Agencia Nacional de Investigación y Desarrollo (AnID) through Proyecto Fondecyt Regular 2024,  Folio 1240514, Etapa 2025. They also thank Vicerrectoría de Investigación y Desarrollo Tecnológico (VRIDT) at Universidad Católica del Norte for support through núcleo de Investigación Geometría Diferencial y Aplicaciones (Resolución VRIDT n°096/2022 \& n°098/2022).}

\section*{Data availability}

{The data supporting this article can be found in Section~\ref{sec:numerics}.} 

\section*{Acknowledgments}

{Genly Leon would like to express his gratitude towards faculty member Alan Coley and staff members Anna Maria Davis, Nora Amaro, Jeanne Clyburne, and Mark Monk for their warm hospitality during the implementation of the final details of the research in the Department of Mathematics and Statistics at Dalhousie University. Genly Leon dedicates this work to his father, who sadly passed away.}

\section*{Conflicts of interest}

{We declare no conflict of interest. The funders had no role in the design of the study; in the collection, analyses, or interpretation of data; in the writing of the manuscript; or in the decision to publish the~results.}

\appendix

\section{Lambert $W$ Function}\label{app000}

The Lambert $W$ function, product logarithm, or $W$ function, is a set of functions denoted as \(W(x)\). The Lambert $W$ function satisfies the equation \( W(x) e^{W(x)} = x \) for any complex number \( x \). It has multiple branches, but the two most commonly used are the principal branch, \( W_0(x) \), which is real-valued for \( x \geq -1/e \), and the secondary branch, \( W_{-1}(x) \), which is real-valued for \( -1/e \leq x < 0 \). Moreover,  \( W(0) = 0 \) and \( W(-1/e) = -1 \).
The Lambert $W$ function is used in various fields, such as solving transcendental equations involving exponentials and logarithms, analyzing the behavior of specific dynamical systems, calculating the number of spanning trees in a complete graph, and modeling growth processes and delay differential equations.

\section{Mittag-Leffler functions}\label{app00}

From the Maclaurin series expansion of the exponential,
\begin{equation}
    e^z=\sum_{n=0}^{\infty}\frac{z^n}{n!},
    \label{1.1.2.1}
\end{equation}
We replace the factorial with the Gamma function:
\begin{equation}
    e^z=\sum _{n=0}^{\infty}\frac{z^n}{\Gamma (n+1)}.
    \label{1.1.2.2}
\end{equation}
That can then be extended as follows:
\begin{equation}
    E(\alpha,z)=\sum_{n=0}^{\infty}\frac{z^n}{\Gamma (\alpha n+1)},
    \label{1.1.2.3}
\end{equation}
where $\alpha\in \mathbb{C}$ is arbitrary real with $\Re(\alpha)>0$.

In fractional calculus, this function is of similar importance to the exponential function in standard calculus. For some values of $\alpha$ and functions of $z$, already known functions can be obtained:
\begin{equation}
    E(2,-z^2)=\cos{z}, \hspace{20px} E(1/2,z^{1/2})=e^z\left[1+\text{erf}( z^{1/2})\right],
    \label{1.1.2.4}
\end{equation}
where the Error function, $\text{erf}(z)$, is given by
\begin{equation}
    \text{erf}(z)=\frac{2}{\sqrt{\pi}}\int_{0}^{z}e^{-t^2}dt.
    \label{1.1.2.5}
\end{equation}
The Mittag-Leffler function can also be extended as follows:
\begin{equation}
    E(\alpha,\beta,z)=\sum _{n=0}^{\infty}\frac{z^n}{\Gamma (\alpha n+\beta)},
    \label{1.1.2.6}
\end{equation}
which is known as the generalized Mittag-Leffler function and has several special cases, e.g.
\begin{equation}
    E(1,2,z)=(e^z-1)/z, \hspace{20px} E(2,2,z^2)=\sinh{(z)}/z.
    \label{1.1.2.7}
\end{equation}

\section{Laplace transform of the time-delayed function}
\label{app4}

We can find an analytical solution using the Laplace transform of a function $y(t)$ as given by~\cite{Schiff:2013}
\begin{equation}
    \mathcal{L}\left\{y(t)\right\}=\int_0^\infty y(t)e^{-st}\,dt,\quad s>0.
\end{equation}
We must also know the Laplace transform of a delayed function:
\begin{equation}
    \mathcal{L}\left\{y(t-T)\right\} =\int_0^\infty y(t-T)e^{-st}\,dt=\int_{-T}^\infty y(u)e^{-s(u+T)}\,du =e^{-sT}\int_{-T}^\infty y(t)e^{-st}\,dt.
\end{equation}
We assume that $y(t)=0 \quad\forall t<0$, which are saying us that $y(t)$ does not have an history for $t<0$, thus,
\begin{equation}\nonumber
\begin{split}
    \mathcal{L}\left\{y(t-T)\right\}&=e^{-sT}\int_0^\infty y(t)e^{-st}\,dt =e^{-sT}\mathcal{L}\left\{y(t)\right\}.
\end{split}
\end{equation}
Therefore, we have
\begin{equation}\label{LaplaceTdelayedf}
    \mathcal{L}\left\{y(t-T)\right\}=e^{-sT}\mathcal{L}\left\{y(t)\right\}.
\end{equation}
Applying recursively, 
\begin{equation}
    \mathcal{L}\left\{y(t-rT)\right\}=e^{-rsT}\mathcal{L}\left\{y(t)\right\}, r=1,2,\ldots m. \label{multipledelays}
\end{equation}

\section{Laplace transform of the Caputo derivative}
\label{app2}

The Caputo derivative is defined as  
\begin{equation}\label{CaputoD}
    \leftindex_{}^{\text{C}}D_t^\alpha y(t)=\frac{1}{\Gamma(n-\alpha)}\int_0^t\frac{d^n y(\tau)}{d\tau^n}\cdot(t-\tau)^{n-1-\alpha}\,d\tau,\quad n-1<\alpha<n.
\end{equation}  
Then, we can calculate the Laplace transform of the Caputo derivative:  
\begin{equation}
\begin{split}
    \mathcal{L}\left\{\leftindex_{}^{\text{C}}D_t^\alpha y(t)\right\}&=\frac{1}{\Gamma(n-\alpha)}\int_0^\infty \left\{\int_0^t\frac{d^n y(\tau)}{d\tau^n}\cdot(t-\tau)^{n-1-\alpha}\,d\tau\right\}e^{-st}\,dt\\
    &=\frac{1}{\Gamma(n-\alpha)}\int_0^\infty \left(\int_\tau^\infty\frac{d^n y(\tau)}{d\tau^n}\cdot(t-\tau)^{n-1-\alpha}\cdot e^{-st}\,dt\right)\,d\tau\\
    &=\frac{1}{\Gamma(n-\alpha)}\int_0^\infty \frac{d^n y(\tau)}{d\tau^n} \left(\int_\tau^\infty(t-\tau)^{n-1-\alpha}e^{-st}\,dt\right)\,d\tau\\
    &=\frac{1}{\Gamma(n-\alpha)}\int_0^\infty \frac{d^n y(\tau)}{d\tau^n} \left(\int_0^\infty u^{n-1-\alpha}e^{-s(u+\tau)}\,du\right)\,d\tau\\
    &=\frac{1}{\Gamma(n-\alpha)}\int_0^\infty \frac{d^n y(\tau)}{d\tau^n}e^{-s\tau} \left(\int_0^\infty u^{n-1-\alpha}e^{-su}\,du\right)\,d\tau\\
    &=\frac{1}{\Gamma(n-\alpha)}\left(\int_0^\infty\frac{d^n y(\tau)}{d\tau^n}e^{-s\tau}\,d\tau\right)\left(\int_0^\infty u^{n-1-\alpha}e^{-su}\,du\right).
\end{split}
\end{equation}  
Recalling the definition of the Gamma function \eqref{1.1.1.2}, 
we find that  
\begin{equation}
    \mathcal{L}\left[\leftindex_{}^{\text{C}}D_t^\alpha y(t)\right]=\frac{\displaystyle\left(\int_0^\infty\frac{d^n y(\tau)}{d\tau^n}e^{-s\tau}\,d\tau\right)\left(\int_0^\infty u^{n-1-\alpha}e^{-su}\,du\right)}{\displaystyle\int_0^\infty t^{n-1-\alpha}e^{-t}\,dt}.
\end{equation}   
Then,  
\begin{equation}
\begin{split}
    \mathcal{L}\left[\leftindex_{}^{\text{C}}D_t^\alpha y(t)\right]
    &=s^{\alpha-n}\int_0^\infty \frac{d^ny(\tau)}{d\tau^n}e^{-s\tau}\,d\tau =s^{\alpha-n}\mathcal{L}\left\{\frac{d^ny(t)}{dt^n}\right\}.
\end{split}
\end{equation}  
Therefore,  
\begin{equation}\label{LaplaceRelation 1}
    \mathcal{L}\left\{\leftindex_{}^{\text{C}}D_t^\alpha y(t)\right\}=s^{\alpha-n}\mathcal{L}\left\{\frac{d^ny(t)}{dt^n}\right\}.
\end{equation}  
Integrating $n$ times by parts, the Laplace transform of the $n$-th derivative is:
\begin{equation}
    \mathcal{L}\left\{\frac{d^ny(t)}{dt^n}\right\}=s^n\mathcal{L}\left\{y(t)\right\}-\sum_{k=0}^{n-1}s^{n-k-1}y^{(k)}(0).
\end{equation}
Finally, the Laplace transform of Caputo's derivative is given by \cite{mathai2008special}
\begin{equation}\label{Laplace-transform-of-Caputo}
    \mathcal{L}\left\{\leftindex_{}^{\text{C}}D_t^\alpha y(t)\right\}:=s^{\alpha-n}\mathcal{L}\left\{\frac{d^ny(t)}{dt^n}\right\}=s^\alpha \mathcal{L}\left\{y(t)\right\}-\sum_{k=0}^{n-1}s^{\alpha-k-1}y^{(k)}(0),\quad n-1<\alpha<n.
\end{equation}

\section{Numerical Considerations and Forward Difference Formulation}
\label{appFDF}
As explained in \cite{herrmann2014fractional}, following Weierstraß, the first derivative is defined as:
\begin{equation}
v(t) = \frac{d}{dt} x(t) =
\begin{cases}
\lim\limits_{h \to 0} \frac{x(t) - x(t-h)}{h}, & \text{backward difference} \\
\lim\limits_{h \to 0} \frac{x(t+h) - x(t-h)}{2h}, & \text{central difference} \\
\lim\limits_{h \to 0} \frac{x(t+h) - x(t)}{h}, & \text{forward difference, } h < \epsilon\ll 1, h \geq 0
\end{cases}
\end{equation}
If \( x(t) \) is smooth and analytic, the different derivative formulations are equivalent when \( t \) represents a time-like coordinate. The forward difference method is advantageous in time-dependent simulations where only past values are available for computing derivatives. Unlike central differences, which require values from both past and future points, forward differences rely solely on prior data, making them well-suited for real-time calculations and delay differential equations. Additionally, they tend to be numerically stable in specific discretized systems by avoiding dependence on future values, reducing complexity. While forward differences provide a reasonable first-order approximation for smooth functions, they may introduce truncation errors. In contrast, central differences use both \( x(t+h) \) and \( x(t-h) \), while backward differences rely exclusively on past values.

Various physical quantities, such as velocity, current, and flux density, follow local principles and obey differential equations. The derivative can also be expressed in its integral form:
\begin{equation}
v(t) = \frac{d}{dt} x(t) =
\begin{cases}
2 \int_{0}^{\infty} dh \, \delta(h) v(t - h), & \text{backward} \\
2 \int_{0}^{\infty} dh \, \delta(h) \frac{v(t+h) + v(t-h)}{2}, & \text{central} \\
2 \int_{0}^{\infty} dh \, \delta(h) v(t + h), & \text{forward}
\end{cases},
\end{equation}
where \( \delta(t) \) is the Dirac delta function, satisfying \cite{Lighthill_1958}:
\begin{equation}
\int_{-\infty}^{\infty} dh \, \delta(h) f(h) = f(0).
\label{dirac_property}
\end{equation}

The Dirac delta function is defined as a limit of equivalent regular sequences of good functions:
\begin{equation}
\delta(t) = \lim_{a \to 0} w(a, t),
\label{dirac_definition}
\end{equation}
\noindent
where \( a \) is a smooth parameter satisfying \( a \geq 0 \).

Following \cite{Lighthill_1958}, as cited in \cite{herrmann2014fractional}, a good function is infinitely differentiable and satisfies:
\[
O(|t|^{-n}) \quad \text{as } |t| \to \infty, \quad \forall n.
\]

Additionally, a sequence \( w(a, t) \) of good functions is called regular if, for any good function \( f(t) \), the following limit exists:
\begin{equation}
\lim\limits_{a \to 0} \int_{-\infty}^{+\infty} w(a, t) f(t) \, dt.
\label{regular_limit}
\end{equation}

Finally, two regular sequences of good functions are equivalent if the limit is the same for both sequences.
Typical examples of such sequences include:
\begin{equation}
w(a, t) =
\begin{cases}
\exp\left(-\frac{|t|}{a}\right)^p, \quad p \in \mathbb{R}^{+}, & \text{exponential} \\
Ai\left(\frac{|t|}{a}\right), & \text{Airy function} \\
\frac{1}{t} \sin\left(\frac{|t|}{a}\right), & \text{sine function}
\end{cases}.
\end{equation}

\section{Optimized algorithms}\label{appF}

\subsection{Optimized algorithm to implement exact solutions \eqref{AnalyticalHP1},  \eqref{q-ini}-together with \eqref{q_2}
and \eqref{w-ini}-together with \eqref{weff_2}, given in Section \ref{SECT:3.1}}
\label{appF1}

In this appendix, we present an optimized algorithm that implements the exact solutions \eqref{AnalyticalHP1}, \eqref{q-ini} combined with \eqref{q_2}, and \eqref{w-ini} combined with \eqref{weff_2}, as described in Section \ref{SECT:3.1}. This analytical solution is constructed using a finite series that is summed computationally. For the numerical representations shown in Figures \ref{Fig00} and \ref{Fig0ObsB}, we used $5000$ terms from the partial sums to compute integrations over the interval $t \in [0, 100]$. To integrate over a generic interval $t \in [0, \text{IntervalCount} \cdot h]$, where $h$ represents the time step size, we require $\text{IntervalCount}$ terms.

\subsubsection{Algorithm}

\textbf{Inputs}
\begin{itemize}
    \item $T$: Length of delay
    \item $m$: number of sub-intervals per delay interval
    \item $\eta_0$: Parameter $\eta_0$
    \item $\gamma$: Parameter $\gamma$
    \item $H_0$: Initial condition for $H$
    \item $\text{IntervalCount}$: number of intervals
\end{itemize}

\textbf{Derived Inputs}
\begin{itemize}
    \item Time step size:
    \[
    h = \frac{T}{m}
    \]
    \item Total number of steps:
    \[
    \text{TotalSteps} = m \times \text{IntervalCount}
    \]
    \item Constant $H_B$:
    \[
    H_B = \frac{2 \cdot \eta_0}{3 \cdot \gamma}
    \]
    \item Set of points:
    \[
    t_k = \{k \cdot h \, | \, k = 0, 1, \dots, \text{TotalSteps}\}
    \]
\end{itemize}

\textbf{Initialization}
\begin{itemize}
    \item Define scalar function $H(t)$:
    \[
    H(t) = 
    \begin{cases} 
        H_B + (H_0 - H_B) \cdot \exp(-2 \cdot \eta_0 \cdot t), & t < T \\\\
        H_B + (H_0 - H_B) \cdot \text{accumulator}, & t \geq T 
    \end{cases}
    \]
    where:
    \[
    \text{accumulator} = \sum_{k=0}^{\lfloor t / T \rfloor} \frac{\exp(-2 \cdot \eta_0 \cdot (t - k \cdot T)) \cdot (\eta_0 \cdot (t - k \cdot T))^k}{\Gamma(k + 1)}
    \]

    \item Define function $q(t)$:
    \[
    q(t) = 
    \begin{cases} 
        -1 - \frac{6 \cdot \gamma \cdot \eta_0 \cdot \exp(2 \cdot \eta_0 \cdot t) \cdot (2 \cdot \eta_0 - 3 \cdot \gamma \cdot H_0)}{\left(3 \cdot \gamma \cdot H_0 + 2 \cdot \eta_0 \cdot (\exp(2 \cdot \eta_0 \cdot t) - 1)\right)^2}, & t < T \\\\
        -1 + \frac{3 \cdot \gamma}{2} - \frac{\eta_0 \cdot H(t - T)}{H(t)^2}, & t \geq T 
    \end{cases}
    \]
\end{itemize}

\textbf{Calculate $w_{\text{eff}}$}
\begin{itemize}
    \item Compute $w_{\text{eff}}(t)$ using:
    \[
    w_{\text{eff}}(t) = \frac{2 \cdot q(t) - 1}{3}
    \]
\end{itemize}

\textbf{Process Results}
\begin{itemize}
    \item Iterate over $t_k$ and evaluate:
    \begin{itemize}
        \item $H(t)$
        \item $q(t)$
        \item $w_{\text{eff}}(t)$
    \end{itemize}
    \item Store results in a structured array:
    \[
    \text{results} = \{(t, H(t), q(t), w_{\text{eff}}(t)) \, | \, t \in t_k\}
    \]
\end{itemize}

\textbf{Output Results}
\begin{itemize}
    \item Plot $H(t)$ vs $t$:
    \[
    \text{ListLinePlot}[H(t)]
    \]
    \item Plot $q(t)$ vs $t$:
    \[
    \text{ListLinePlot}[q(t)]
    \]
    \item Plot $w_{\text{eff}}(t)$ vs $t$:
    \[
    \text{ListLinePlot}[w_{\text{eff}}(t)]
    \]
    \item Generate combined plot:
    \[
    \text{Show}[H(t), q(t), w_{\text{eff}}(t)]
    \]
\end{itemize}

\subsection{Algorithm for computing \( H(t) \) using equation \eqref{H_summation} with the mollifier \eqref{mollifier}.}

\label{sect_mollifier}

\textbf{Inputs}
\begin{itemize}
    \item $T$: Length of delay
    \item $\eta_0$: Parameter $\eta_0$
    \item $\gamma$: Parameter $\gamma$
    \item $H_0$: Initial condition for $H$
    \item $\text{Tolerance}$: Stopping condition threshold
    \item $t_{\text{Limit}}$: Time limit $t$
    \item $m$: Number of terms in partial summation $H_m(t)$
\end{itemize}
\newpage
\textbf{Derived Inputs}
\begin{itemize}
    \item Upper bound function:
    \[
    t(n) = \frac{\alpha (n+1)}{e (2\eta_0)^{\frac{1}{\alpha}}}.
    \]
 $\forall t<  t(n) $, $|H_0 - H_B| \cdot \frac{1}{\sqrt{2 \pi \alpha (n+1)}} \cdot \left(\frac{t}{t(n)}\right)^{\alpha (n+1)} \rightarrow 0$ and $t(n)\rightarrow \infty$ as $n\rightarrow\infty$.

 For a given $n$, the time to achieve the tolerance is 
\[t_{\text{Tol}}(n)=  t(n) \left(\frac{\sqrt{2 \pi \alpha (n+1)}\text{Tolerance}}{ |H_0 - H_B|}\right)^{\frac{1}{\alpha (n+1)}}\]

    \item Constant $H_B$:
    \[
    H_B = \frac{2 \cdot \eta_0}{3 \cdot \gamma}
    \]
\end{itemize}

\textbf{Initialization}
\begin{itemize}
    \item Set initial value: $n = 0$
    \item Compute $t_{\text{Final}} = \max\{t(n), t_{\text{Tol}}(n)\}$
    \item While the stopping condition is not met:
    \[t_{\text{Final}}< t_{\text{Limit}}\quad \text{and} \quad |H_0 - H_B| \cdot \frac{1}{\sqrt{2 \pi \alpha (n+1)}} \cdot \left(\frac{t}{t(n)}\right)^{\alpha (n+1)} > \text{Tolerance}\]
    \item Increment $n$ and update $t_{\text{Final}}$
    \item Set $n = m$ for the truncated sum $H_m(t)$
\end{itemize}
\textbf{Compute $H_n(t)$ and $H_m(t)$}
\begin{itemize}
    \item Compute recursive summation:
    \[
    H_n(t) = H_B + (H_0 - H_B) \cdot \sum_{k=0}^{\lfloor t / T \rfloor} \sum_{j=k}^{n} \frac{j!}{k!(j-k)!} (-2)^{j-k} \eta_0^j \frac{(t-kT)^{\alpha j} \theta(t-kT)}{\Gamma(\alpha j+1)}
    \]
    \item Evaluate $H_m(t)$
    \[
    H_m(t) = H_B + (H_0 - H_B) \cdot \sum_{k=0}^{\lfloor t / T \rfloor} \sum_{j=k}^{m} \frac{j!}{k!(j-k)!} (-2)^{j-k} \eta_0^j \frac{(t-kT)^{\alpha j} \theta(t-kT)}{\Gamma(\alpha j+1)}
    \]
\end{itemize}

\textbf{Apply Smooth Correction}
\begin{itemize}
    \item Compute smoothing factor:
    \[
    S(t, t_{\text{Final}}) = 1 - \exp\left(-\left(\frac{t - t_{\text{Final}}}{T}\right)^2\right)
    \]
    \item Compute correction term:
    \[
    C(t, n) = S(t, t_{\text{Final}}) \cdot (H_0 - H_B) (-1)^{n+1} \frac{1}{\sqrt{2\pi \alpha (n+1)}} \left[\frac{e (2\eta_0)^{1/\alpha} \cdot t}{\alpha(n+1)} \right]^{\alpha (n+1)}
    \]
    \item Apply correction for $t > t_{\text{Final}}$:
    \[
    H_n^{\text{corrected}}(t) = H_n(t) + C(t, n)
    \]
\end{itemize}

\textbf{Process Results}
\begin{itemize}
    \item Iterate over a range of $t_k$:
    \[
    t_k = \{k \cdot h \mid k = 0, 1, \dots, \lfloor t_{\text{Final}}/h \rfloor \}
    \]
    \item Compute values:
    \begin{itemize}
        \item $H_n^{\text{corrected}}(t_k)$
        \item $H_m(t_k)$
        \item Error:
        \[
        E(t_k) = |H_m(t_k) - H_n^{\text{corrected}}(t_k)|
        \]
    \end{itemize}
    \item Store results:
    \[
    \text{Results} = \{(t, H_n^{\text{corrected}}(t), H_m(t), E(t)) \mid t \in t_k\}
    \]
\end{itemize}

\textbf{Output Results}
\begin{itemize}
    \item Plot $H_n^{\text{corrected}}(t)$ vs. $t$:
    \[
    \text{ListLinePlot}[H_n^{\text{corrected}}(t)]
    \]
    \item Plot truncated sum $H_m(t)$ vs. $t$:
    \[
    \text{ListLinePlot}[H_m(t)]
    \]
    \item Plot error function $E(t)$ vs. $t$:
    \[
    \text{ListLinePlot}[E(t)]
    \]
    \item Generate stacked visualization:
    \[
    \text{Column} \left[ H_n^{\text{corrected}}(t), H_m(t), E(t) \right]
    \]
\end{itemize}

\subsection{Algorithm implementing the numerical procedure \eqref{algorithm-1}}
\label{AlgorithmB}
\textbf{Inputs:}
\begin{itemize}
    \item $\alpha$: Fractional order
    \item $T$: Length of delay
    \item $m$: number of sub-intervals per delay interval
    \item $\eta_0$: Parameter $\eta_0$
    \item $\gamma$: Parameter $\gamma$
    \item $H_0$: Initial condition for $H$
    \item $c_1 = -2 \eta_0$: Coefficient $c_1$
    \item $c_2 = \eta_0$: Coefficient $c_2$
    \item $\text{IntervalCount}$: number of intervals
\end{itemize}
\textbf{Derived Inputs:}
\begin{itemize}
    \item $h = \frac{T}{m}$: Time step size
    \item $\text{TotalSteps} = m \times \text{IntervalCount}$: Total number of time steps
\end{itemize}
\textbf{Initialization:}
\begin{itemize}
    \item Initialize empty lists: $y = \{\}$, $H = \{\}$, $q = \{\}$, $w_{\text{eff}} = \{\}$
    \item Compute repeated constants: 
    \[
    \text{constTerm} = \frac{2 \eta_0}{3 \gamma}, \quad \Gamma_\alpha = \Gamma(\alpha + 1)
    \]
    \item Set initial condition for $y$:
    \[
    y[1] = H_0 - \frac{2 \eta_0}{3 \gamma}
    \]
    Append $y[1]$ to list $y$
    \item Set initial condition for $H$:
    \[
    H[1] = H_0.
    \]
    Append $H[1]$ to list $H$
    \item Set initial condition for $q$:
    \[
    q[1] = -1 + \frac{(2 \eta_0 - 3 \gamma H[1]) \left(-1 + E(\alpha, -2 \eta_0 h^\alpha)\right)}{3 \gamma h H[1]^2}
    \]
    Append $q[1]$ to list $q$
      \item Set initial condition for $q$:
    \[
    w_{\text{eff}}[1] =  \frac{2 q[1] - 1}{3}
    \]
    Append $w_{\text{eff}}[1]$ to list $w_{\text{eff}}$
\end{itemize}
\textbf{First Interval Computation:}
For $k = 1$ to $m$:
\[
y[k] = y[1] \cdot E(\alpha, -2 \eta_0 (kh)^\alpha)
\]
\[
H[k] = \text{constTerm} + y[k]
\]
\[
q[k] = -1 + \frac{6 \gamma \eta_0 (kh)^{\alpha - 1} (3 \gamma H_0 - 2 \eta_0) E(\alpha, \alpha, -2 (kh)^\alpha \eta_0)}{\left[2 \eta_0 + (3 \gamma H_0 - 2 \eta_0) E(\alpha, -2 (kh)^\alpha \eta_0)\right]^2}
\]
\[
w_{\text{eff}}[k] = \frac{2 q[k] - 1}{3}
\]
Append values of $y[k]$, $H[k]$, $q[k]$, and $w_{\text{eff}}[k]$ to their respective lists.

\textbf{Subsequent Intervals Computation:}
For $n = m+1$ to $\text{TotalSteps}$:
\[
y[n+1] = y[n] + \frac{h^\alpha}{\Gamma_\alpha} \left(c_1 y[n] + c_2 
y[n-m]\right)
\]
\[
H[n] = \text{constTerm} + y[n]
\]
\[
q[n] = -1 - \frac{h^{\alpha - 1} \left(c_1 y[n] + c_2 y[n-m]\right)}{\Gamma_\alpha H[n]^2}
\]
\[
w_{\text{eff}}[n] = \frac{2 q[n] - 1}{3}
\]
Append values of $y[n+1]$, $H[n]$, $q[n]$, and $w_{\text{eff}}[n]$ to their respective lists.

\textbf{Output Results:}
Return the lists $y$, $H$, $q$, and $w_{\text{eff}}$ as the solution.

\subsection{Algorithm implementing the fractional nonlinear scheme \eqref{algorithm-2}}
\label{AlgorithmC}

\textbf{Inputs:}
\begin{itemize}
    \item $\alpha$: Fractional order
    \item $T$: Length of delay
    \item $m$: number of sub-intervals per delay interval
    \item $\eta_0$: Parameter $\eta_0$
    \item $\gamma$: Parameter $\gamma$
    \item $H_0$: Initial condition for $H$
    \item $c_1 = -2\eta_0$: Coefficient $c_1$
    \item $c_2 = \eta_0$: Coefficient $c_2$
    \item $\text{IntervalCount}$: number of intervals
\end{itemize}

\textbf{Derived Inputs:}
\begin{itemize}
    \item $h = \frac{T}{m}$: Time step size
    \item $\text{TotalSteps} = m \times \text{IntervalCount}$: Total number of time steps
\end{itemize}
\textbf{Initialization:}
\begin{itemize}
    \item Initialize empty lists: $y = \{\}$, $H = \{\}$, $q = \{\}$, $w_{\text{eff}} = \{\}$
    \item Compute repeated constants: 
    \[
    \text{constTerm} = \frac{2 \eta_0}{3 \gamma}, \quad \Gamma_\alpha = \Gamma(\alpha + 1)
    \]
    \item Set initial condition for $y$:
    \[
    y[1] = H_0 - \frac{2 \eta_0}{3 \gamma}
    \]
    Append $y[1]$ to list $y$
    \item Set initial condition for $H$:
    \[
    H[1] = H_0
    \]
    Append $H[1]$ to list $H$
     \item Set initial condition for $q$:
    \[
    q[1] =  -1 - \frac{h^{\alpha-1} \left( -\frac{3\gamma}{2} y[1]^2 - 2\eta_0 y[1] \right)}{\Gamma_\alpha H[1]^2}
    \]
    Append $q[1]$ to list $q$.
      \item Set initial condition for $q$:
    \[
    w_{\text{eff}}[1] =  \frac{2 q[1] - 1}{3}
    \]
    Append $w_{\text{eff}}[1]$ to list $w_{\text{eff}}$.
\end{itemize}
\textbf{First Interval Computation:}
For $k = 1$ to $m$:
\begin{align*}
    y[k+1] &= y[k] + \frac{h^\alpha}{\Gamma_\alpha} \left( -\frac{3\gamma}{2} y[k]^2 - 2\eta_0 y[k] \right) \\
    H[k] &= \text{constTerm} + y[k] \\
    q[k] &= -1 - \frac{y[k+1] - y[k]}{h H[k]^2} \\
    w_{\text{eff}}[k] &= \frac{2q[k] - 1}{3}
\end{align*}
Append values of $y[k+1]$, $H[k]$, $q[k]$, and $w_{\text{eff}}[k]$ to their respective lists.

\textbf{Delayed Recurrence Procedure for Subsequent Intervals:}
For $n = m+1$ to $\text{TotalSteps}$:
\begin{align*}
    y[n+1] &= y[n] + \frac{h^\alpha}{\Gamma_\alpha} \left( -\frac{3\gamma}{2} y[n]^2 - 2\eta_0 y[n] + \eta_0 y[n-m] \right) \\
    H[n] &= \text{constTerm} + y[n], \\
    q[n] &= -1 - \frac{h^{\alpha-1} \left( -\frac{3\gamma}{2} y[n]^2 - 2\eta_0 y[n] + \eta_0 y[n-m] \right)}{\Gamma_\alpha H[n]^2} \\
    w_{\text{eff}}[n] &= \frac{2q[n] - 1}{3}
\end{align*}
Append values of $y[n+1]$, $H[n]$, $q[n]$, and $w_{\text{eff}}[n]$ to their respective lists.

\textbf{Output Results:}
Return the lists $y$, $H$, $q$, and $w_{\text{eff}}$ as the solution.

\section*{References}

\bibliography{main}

\end{document}